\newcommand{\dmultimap}{\circ\!  \-- \! \circ}
\newcommand{\alt}{~|~}
\newcommand{\gcv}{\circlearrowright}
\newcommand{\inlv}[1]{\ensuremath{\mathit{inj}_1(#1)}}
\newcommand{\inrv}[1]{\ensuremath{\mathit{inj}_2(#1)}}
\newcommand{\ket}[1]{|#1\rangle}
\newcommand{\oneover}[1]{1/#1}
\newcommand{\identlp}{\mathit{unite}_+\mathit{l}}
\newcommand{\identrp}{\mathit{uniti}_+\mathit{l}}
\newcommand{\swapp}{\mathit{swap}_+}
\newcommand{\assoclp}{\mathit{assocl}_+}
\newcommand{\assocrp}{\mathit{assocr}_+}
\newcommand{\identlt}{\mathit{unite}_*\mathit{l}}
\newcommand{\identrt}{\mathit{uniti}_*\mathit{l}}
\newcommand{\swapt}{\mathit{swap}_*}
\newcommand{\assoclt}{\mathit{assocl}_*}
\newcommand{\assocrt}{\mathit{assocr}_*}
\newcommand{\absorbr}{\mathit{absorbr}}
\newcommand{\factorzl}{\mathit{factorzl}}
\newcommand{\dist}{\mathit{dist}}
\newcommand{\factor}{\mathit{factor}}
\newcommand{\distz}{\mathit{absorbr}}
\newcommand{\iso}{\leftrightarrow}
\newcommand{\proves}{\vdash}
\newcommand{\Afun}[1]{\AgdaFunction{#1}}
\newcommand{\Acon}[1]{\AgdaInductiveConstructor{#1}}
\newcommand{\Avar}[1]{\AgdaBound{#1}}
\newcommand{\idc}{\mathit{id}\!\!\leftrightarrow}
\newcommand{\Rule}[4]{
\makebox{{\rm #1}
$\displaystyle
\frac{\begin{array}{l}#2 \\\end{array}}
{\begin{array}{l}#3      \\\end{array}}$
 #4}}
\newcommand{\jdg}[3]{#2 \proves_{#1} #3}
\tikzstyle{control}=[fill=black, draw=black, shape=circle]
\tikzstyle{not}=[fill=white, draw=black, shape=circle]
\newcommand{\Bexamples}{%
\begin{code}%
\>[0]\AgdaKeyword{pattern}\AgdaSpace{}%
\AgdaInductiveConstructor{𝔽}\AgdaSpace{}%
\AgdaSymbol{=}\AgdaSpace{}%
\AgdaInductiveConstructor{inj₁}\AgdaSpace{}%
\AgdaInductiveConstructor{tt}\<%
\\
\>[0]\AgdaKeyword{pattern}\AgdaSpace{}%
\AgdaInductiveConstructor{𝕋}\AgdaSpace{}%
\AgdaSymbol{=}\AgdaSpace{}%
\AgdaInductiveConstructor{inj₂}\AgdaSpace{}%
\AgdaInductiveConstructor{tt}\<%
\\
\\[\AgdaEmptyExtraSkip]%
\>[0]\AgdaFunction{𝔹}\AgdaSpace{}%
\AgdaFunction{𝔹²}\AgdaSpace{}%
\AgdaFunction{𝔹³}\AgdaSpace{}%
\AgdaSymbol{:}\AgdaSpace{}%
\AgdaDatatype{𝕌}\<%
\\
\>[0]\AgdaFunction{𝔹}%
\>[4]\AgdaSymbol{=}\AgdaSpace{}%
\AgdaInductiveConstructor{𝟙}\AgdaSpace{}%
\AgdaOperator{\AgdaInductiveConstructor{+ᵤ}}\AgdaSpace{}%
\AgdaInductiveConstructor{𝟙}\<%
\\
\>[0]\AgdaFunction{𝔹²}%
\>[4]\AgdaSymbol{=}\AgdaSpace{}%
\AgdaFunction{𝔹}\AgdaSpace{}%
\AgdaOperator{\AgdaInductiveConstructor{×ᵤ}}\AgdaSpace{}%
\AgdaFunction{𝔹}\<%
\\
\>[0]\AgdaFunction{𝔹³}%
\>[4]\AgdaSymbol{=}\AgdaSpace{}%
\AgdaFunction{𝔹}\AgdaSpace{}%
\AgdaOperator{\AgdaInductiveConstructor{×ᵤ}}\AgdaSpace{}%
\AgdaFunction{𝔹²}\<%
\\
\\[\AgdaEmptyExtraSkip]%
\>[0]\AgdaFunction{ctrl}\AgdaSpace{}%
\AgdaSymbol{:}\AgdaSpace{}%
\AgdaSymbol{\{}\AgdaBound{A}\AgdaSpace{}%
\AgdaSymbol{:}\AgdaSpace{}%
\AgdaDatatype{𝕌}\AgdaSymbol{\}}\AgdaSpace{}%
\AgdaSymbol{→}\AgdaSpace{}%
\AgdaSymbol{(}\AgdaBound{A}\AgdaSpace{}%
\AgdaOperator{\AgdaDatatype{⟷}}\AgdaSpace{}%
\AgdaBound{A}\AgdaSymbol{)}\AgdaSpace{}%
\AgdaSymbol{→}\AgdaSpace{}%
\AgdaFunction{𝔹}\AgdaSpace{}%
\AgdaOperator{\AgdaInductiveConstructor{×ᵤ}}\AgdaSpace{}%
\AgdaBound{A}\AgdaSpace{}%
\AgdaOperator{\AgdaDatatype{⟷}}\AgdaSpace{}%
\AgdaFunction{𝔹}\AgdaSpace{}%
\AgdaOperator{\AgdaInductiveConstructor{×ᵤ}}\AgdaSpace{}%
\AgdaBound{A}\<%
\\
\>[0]\AgdaFunction{ctrl}\AgdaSpace{}%
\AgdaBound{c}\AgdaSpace{}%
\AgdaSymbol{=}\AgdaSpace{}%
\AgdaInductiveConstructor{dist}\AgdaSpace{}%
\AgdaOperator{\AgdaInductiveConstructor{⊚}}\AgdaSpace{}%
\AgdaSymbol{(}\AgdaInductiveConstructor{id⟷}\AgdaSpace{}%
\AgdaOperator{\AgdaInductiveConstructor{⊕}}\AgdaSpace{}%
\AgdaSymbol{(}\AgdaInductiveConstructor{id⟷}\AgdaSpace{}%
\AgdaOperator{\AgdaInductiveConstructor{⊗}}\AgdaSpace{}%
\AgdaBound{c}\AgdaSymbol{))}\AgdaSpace{}%
\AgdaOperator{\AgdaInductiveConstructor{⊚}}\AgdaSpace{}%
\AgdaInductiveConstructor{factor}\<%
\\
\\[\AgdaEmptyExtraSkip]%
\>[0]\AgdaFunction{NOT}\AgdaSpace{}%
\AgdaSymbol{:}\AgdaSpace{}%
\AgdaFunction{𝔹}\AgdaSpace{}%
\AgdaOperator{\AgdaDatatype{⟷}}\AgdaSpace{}%
\AgdaFunction{𝔹}\<%
\\
\>[0]\AgdaFunction{NOT}\AgdaSpace{}%
\AgdaSymbol{=}\AgdaSpace{}%
\AgdaInductiveConstructor{swap₊}\<%
\\
\\[\AgdaEmptyExtraSkip]%
\>[0]\AgdaFunction{CNOT}\AgdaSpace{}%
\AgdaSymbol{:}\AgdaSpace{}%
\AgdaFunction{𝔹²}\AgdaSpace{}%
\AgdaOperator{\AgdaDatatype{⟷}}\AgdaSpace{}%
\AgdaFunction{𝔹²}\<%
\\
\>[0]\AgdaFunction{CNOT}\AgdaSpace{}%
\AgdaSymbol{=}\AgdaSpace{}%
\AgdaFunction{ctrl}\AgdaSpace{}%
\AgdaFunction{NOT}\<%
\\
\\[\AgdaEmptyExtraSkip]%
\>[0]\AgdaFunction{TOFFOLI}\AgdaSpace{}%
\AgdaSymbol{:}\AgdaSpace{}%
\AgdaFunction{𝔹³}\AgdaSpace{}%
\AgdaOperator{\AgdaDatatype{⟷}}\AgdaSpace{}%
\AgdaFunction{𝔹³}\<%
\\
\>[0]\AgdaFunction{TOFFOLI}\AgdaSpace{}%
\AgdaSymbol{=}\AgdaSpace{}%
\AgdaFunction{ctrl}\AgdaSpace{}%
\AgdaSymbol{(}\AgdaFunction{ctrl}\AgdaSpace{}%
\AgdaFunction{NOT}\AgdaSymbol{)}\<%
\end{code}}
\newcommand{\PIFDUdef}{%
\begin{code}[hide]%
\>[0][@{}l@{\AgdaIndent{1}}]%
\>[2]\AgdaKeyword{data}\AgdaSpace{}%
\AgdaDatatype{𝕌}\AgdaSpace{}%
\AgdaSymbol{:}\AgdaSpace{}%
\AgdaPrimitiveType{Set}\AgdaSpace{}%
\AgdaKeyword{where}\<%
\\
\>[2][@{}l@{\AgdaIndent{0}}]%
\>[4]\AgdaInductiveConstructor{𝟘}\AgdaSpace{}%
\AgdaSymbol{:}\AgdaSpace{}%
\AgdaDatatype{𝕌}\<%
\\
\>[4]\AgdaInductiveConstructor{𝟙}\AgdaSpace{}%
\AgdaSymbol{:}\AgdaSpace{}%
\AgdaDatatype{𝕌}\<%
\\
\>[4]\AgdaOperator{\AgdaInductiveConstructor{\AgdaUnderscore{}+ᵤ\AgdaUnderscore{}}}\AgdaSpace{}%
\AgdaSymbol{:}\AgdaSpace{}%
\AgdaDatatype{𝕌}\AgdaSpace{}%
\AgdaSymbol{→}\AgdaSpace{}%
\AgdaDatatype{𝕌}\AgdaSpace{}%
\AgdaSymbol{→}\AgdaSpace{}%
\AgdaDatatype{𝕌}\<%
\\
\>[4]\AgdaOperator{\AgdaInductiveConstructor{\AgdaUnderscore{}×ᵤ\AgdaUnderscore{}}}\AgdaSpace{}%
\AgdaSymbol{:}\AgdaSpace{}%
\AgdaDatatype{𝕌}\AgdaSpace{}%
\AgdaSymbol{→}\AgdaSpace{}%
\AgdaDatatype{𝕌}\AgdaSpace{}%
\AgdaSymbol{→}\AgdaSpace{}%
\AgdaDatatype{𝕌}\<%
\end{code}
\begin{code}%
\>[4]\AgdaOperator{\AgdaInductiveConstructor{𝟙/\AgdaUnderscore{}}}\AgdaSpace{}%
\AgdaSymbol{:}\AgdaSpace{}%
\AgdaSymbol{\{}\AgdaBound{t}\AgdaSpace{}%
\AgdaSymbol{:}\AgdaSpace{}%
\AgdaDatatype{𝕌}\AgdaSymbol{\}}\AgdaSpace{}%
\AgdaSymbol{→}\AgdaSpace{}%
\AgdaOperator{\AgdaFunction{⟦}}\AgdaSpace{}%
\AgdaBound{t}\AgdaSpace{}%
\AgdaOperator{\AgdaFunction{⟧}}\AgdaSpace{}%
\AgdaSymbol{→}\AgdaSpace{}%
\AgdaDatatype{𝕌}\<%
\end{code}}
\newcommand{\PIFDCombdef}{%
\begin{code}[hide]%
\>[2]\AgdaKeyword{data}\AgdaSpace{}%
\AgdaOperator{\AgdaDatatype{\AgdaUnderscore{}⟷\AgdaUnderscore{}}}\AgdaSpace{}%
\AgdaSymbol{:}\AgdaSpace{}%
\AgdaDatatype{𝕌}\AgdaSpace{}%
\AgdaSymbol{→}\AgdaSpace{}%
\AgdaDatatype{𝕌}\AgdaSpace{}%
\AgdaSymbol{→}\AgdaSpace{}%
\AgdaPrimitiveType{Set}\AgdaSpace{}%
\AgdaKeyword{where}\<%
\\
\>[2][@{}l@{\AgdaIndent{0}}]%
\>[4]\AgdaInductiveConstructor{unite₊l}\AgdaSpace{}%
\AgdaSymbol{:}\AgdaSpace{}%
\AgdaSymbol{\{}\AgdaBound{t}\AgdaSpace{}%
\AgdaSymbol{:}\AgdaSpace{}%
\AgdaDatatype{𝕌}\AgdaSymbol{\}}\AgdaSpace{}%
\AgdaSymbol{→}\AgdaSpace{}%
\AgdaInductiveConstructor{𝟘}\AgdaSpace{}%
\AgdaOperator{\AgdaInductiveConstructor{+ᵤ}}\AgdaSpace{}%
\AgdaBound{t}\AgdaSpace{}%
\AgdaOperator{\AgdaDatatype{⟷}}\AgdaSpace{}%
\AgdaBound{t}\<%
\\
\>[4]\AgdaInductiveConstructor{uniti₊l}\AgdaSpace{}%
\AgdaSymbol{:}\AgdaSpace{}%
\AgdaSymbol{\{}\AgdaBound{t}\AgdaSpace{}%
\AgdaSymbol{:}\AgdaSpace{}%
\AgdaDatatype{𝕌}\AgdaSymbol{\}}\AgdaSpace{}%
\AgdaSymbol{→}\AgdaSpace{}%
\AgdaBound{t}\AgdaSpace{}%
\AgdaOperator{\AgdaDatatype{⟷}}\AgdaSpace{}%
\AgdaInductiveConstructor{𝟘}\AgdaSpace{}%
\AgdaOperator{\AgdaInductiveConstructor{+ᵤ}}\AgdaSpace{}%
\AgdaBound{t}\<%
\\
\>[4]\AgdaInductiveConstructor{unite₊r}\AgdaSpace{}%
\AgdaSymbol{:}\AgdaSpace{}%
\AgdaSymbol{\{}\AgdaBound{t}\AgdaSpace{}%
\AgdaSymbol{:}\AgdaSpace{}%
\AgdaDatatype{𝕌}\AgdaSymbol{\}}\AgdaSpace{}%
\AgdaSymbol{→}\AgdaSpace{}%
\AgdaBound{t}\AgdaSpace{}%
\AgdaOperator{\AgdaInductiveConstructor{+ᵤ}}\AgdaSpace{}%
\AgdaInductiveConstructor{𝟘}\AgdaSpace{}%
\AgdaOperator{\AgdaDatatype{⟷}}\AgdaSpace{}%
\AgdaBound{t}\<%
\\
\>[4]\AgdaInductiveConstructor{uniti₊r}\AgdaSpace{}%
\AgdaSymbol{:}\AgdaSpace{}%
\AgdaSymbol{\{}\AgdaBound{t}\AgdaSpace{}%
\AgdaSymbol{:}\AgdaSpace{}%
\AgdaDatatype{𝕌}\AgdaSymbol{\}}\AgdaSpace{}%
\AgdaSymbol{→}\AgdaSpace{}%
\AgdaBound{t}\AgdaSpace{}%
\AgdaOperator{\AgdaDatatype{⟷}}\AgdaSpace{}%
\AgdaBound{t}\AgdaSpace{}%
\AgdaOperator{\AgdaInductiveConstructor{+ᵤ}}\AgdaSpace{}%
\AgdaInductiveConstructor{𝟘}\<%
\\
\>[4]\AgdaInductiveConstructor{swap₊}%
\>[12]\AgdaSymbol{:}\AgdaSpace{}%
\AgdaSymbol{\{}\AgdaBound{t₁}\AgdaSpace{}%
\AgdaBound{t₂}\AgdaSpace{}%
\AgdaSymbol{:}\AgdaSpace{}%
\AgdaDatatype{𝕌}\AgdaSymbol{\}}\AgdaSpace{}%
\AgdaSymbol{→}\AgdaSpace{}%
\AgdaBound{t₁}\AgdaSpace{}%
\AgdaOperator{\AgdaInductiveConstructor{+ᵤ}}\AgdaSpace{}%
\AgdaBound{t₂}\AgdaSpace{}%
\AgdaOperator{\AgdaDatatype{⟷}}\AgdaSpace{}%
\AgdaBound{t₂}\AgdaSpace{}%
\AgdaOperator{\AgdaInductiveConstructor{+ᵤ}}\AgdaSpace{}%
\AgdaBound{t₁}\<%
\\
\>[4]\AgdaInductiveConstructor{assocl₊}\AgdaSpace{}%
\AgdaSymbol{:}\AgdaSpace{}%
\AgdaSymbol{\{}\AgdaBound{t₁}\AgdaSpace{}%
\AgdaBound{t₂}\AgdaSpace{}%
\AgdaBound{t₃}\AgdaSpace{}%
\AgdaSymbol{:}\AgdaSpace{}%
\AgdaDatatype{𝕌}\AgdaSymbol{\}}\AgdaSpace{}%
\AgdaSymbol{→}\AgdaSpace{}%
\AgdaBound{t₁}\AgdaSpace{}%
\AgdaOperator{\AgdaInductiveConstructor{+ᵤ}}\AgdaSpace{}%
\AgdaSymbol{(}\AgdaBound{t₂}\AgdaSpace{}%
\AgdaOperator{\AgdaInductiveConstructor{+ᵤ}}\AgdaSpace{}%
\AgdaBound{t₃}\AgdaSymbol{)}\AgdaSpace{}%
\AgdaOperator{\AgdaDatatype{⟷}}\AgdaSpace{}%
\AgdaSymbol{(}\AgdaBound{t₁}\AgdaSpace{}%
\AgdaOperator{\AgdaInductiveConstructor{+ᵤ}}\AgdaSpace{}%
\AgdaBound{t₂}\AgdaSymbol{)}\AgdaSpace{}%
\AgdaOperator{\AgdaInductiveConstructor{+ᵤ}}\AgdaSpace{}%
\AgdaBound{t₃}\<%
\\
\>[4]\AgdaInductiveConstructor{assocr₊}\AgdaSpace{}%
\AgdaSymbol{:}\AgdaSpace{}%
\AgdaSymbol{\{}\AgdaBound{t₁}\AgdaSpace{}%
\AgdaBound{t₂}\AgdaSpace{}%
\AgdaBound{t₃}\AgdaSpace{}%
\AgdaSymbol{:}\AgdaSpace{}%
\AgdaDatatype{𝕌}\AgdaSymbol{\}}\AgdaSpace{}%
\AgdaSymbol{→}\AgdaSpace{}%
\AgdaSymbol{(}\AgdaBound{t₁}\AgdaSpace{}%
\AgdaOperator{\AgdaInductiveConstructor{+ᵤ}}\AgdaSpace{}%
\AgdaBound{t₂}\AgdaSymbol{)}\AgdaSpace{}%
\AgdaOperator{\AgdaInductiveConstructor{+ᵤ}}\AgdaSpace{}%
\AgdaBound{t₃}\AgdaSpace{}%
\AgdaOperator{\AgdaDatatype{⟷}}\AgdaSpace{}%
\AgdaBound{t₁}\AgdaSpace{}%
\AgdaOperator{\AgdaInductiveConstructor{+ᵤ}}\AgdaSpace{}%
\AgdaSymbol{(}\AgdaBound{t₂}\AgdaSpace{}%
\AgdaOperator{\AgdaInductiveConstructor{+ᵤ}}\AgdaSpace{}%
\AgdaBound{t₃}\AgdaSymbol{)}\<%
\\
\>[4]\AgdaInductiveConstructor{unite⋆l}\AgdaSpace{}%
\AgdaSymbol{:}\AgdaSpace{}%
\AgdaSymbol{\{}\AgdaBound{t}\AgdaSpace{}%
\AgdaSymbol{:}\AgdaSpace{}%
\AgdaDatatype{𝕌}\AgdaSymbol{\}}\AgdaSpace{}%
\AgdaSymbol{→}\AgdaSpace{}%
\AgdaInductiveConstructor{𝟙}\AgdaSpace{}%
\AgdaOperator{\AgdaInductiveConstructor{×ᵤ}}\AgdaSpace{}%
\AgdaBound{t}\AgdaSpace{}%
\AgdaOperator{\AgdaDatatype{⟷}}\AgdaSpace{}%
\AgdaBound{t}\<%
\\
\>[4]\AgdaInductiveConstructor{uniti⋆l}\AgdaSpace{}%
\AgdaSymbol{:}\AgdaSpace{}%
\AgdaSymbol{\{}\AgdaBound{t}\AgdaSpace{}%
\AgdaSymbol{:}\AgdaSpace{}%
\AgdaDatatype{𝕌}\AgdaSymbol{\}}\AgdaSpace{}%
\AgdaSymbol{→}\AgdaSpace{}%
\AgdaBound{t}\AgdaSpace{}%
\AgdaOperator{\AgdaDatatype{⟷}}\AgdaSpace{}%
\AgdaInductiveConstructor{𝟙}\AgdaSpace{}%
\AgdaOperator{\AgdaInductiveConstructor{×ᵤ}}\AgdaSpace{}%
\AgdaBound{t}\<%
\\
\>[4]\AgdaInductiveConstructor{unite⋆r}\AgdaSpace{}%
\AgdaSymbol{:}\AgdaSpace{}%
\AgdaSymbol{\{}\AgdaBound{t}\AgdaSpace{}%
\AgdaSymbol{:}\AgdaSpace{}%
\AgdaDatatype{𝕌}\AgdaSymbol{\}}\AgdaSpace{}%
\AgdaSymbol{→}\AgdaSpace{}%
\AgdaBound{t}\AgdaSpace{}%
\AgdaOperator{\AgdaInductiveConstructor{×ᵤ}}\AgdaSpace{}%
\AgdaInductiveConstructor{𝟙}\AgdaSpace{}%
\AgdaOperator{\AgdaDatatype{⟷}}\AgdaSpace{}%
\AgdaBound{t}\<%
\\
\>[4]\AgdaInductiveConstructor{uniti⋆r}\AgdaSpace{}%
\AgdaSymbol{:}\AgdaSpace{}%
\AgdaSymbol{\{}\AgdaBound{t}\AgdaSpace{}%
\AgdaSymbol{:}\AgdaSpace{}%
\AgdaDatatype{𝕌}\AgdaSymbol{\}}\AgdaSpace{}%
\AgdaSymbol{→}\AgdaSpace{}%
\AgdaBound{t}\AgdaSpace{}%
\AgdaOperator{\AgdaDatatype{⟷}}\AgdaSpace{}%
\AgdaBound{t}\AgdaSpace{}%
\AgdaOperator{\AgdaInductiveConstructor{×ᵤ}}\AgdaSpace{}%
\AgdaInductiveConstructor{𝟙}\<%
\\
\>[4]\AgdaInductiveConstructor{swap⋆}%
\>[12]\AgdaSymbol{:}\AgdaSpace{}%
\AgdaSymbol{\{}\AgdaBound{t₁}\AgdaSpace{}%
\AgdaBound{t₂}\AgdaSpace{}%
\AgdaSymbol{:}\AgdaSpace{}%
\AgdaDatatype{𝕌}\AgdaSymbol{\}}\AgdaSpace{}%
\AgdaSymbol{→}\AgdaSpace{}%
\AgdaBound{t₁}\AgdaSpace{}%
\AgdaOperator{\AgdaInductiveConstructor{×ᵤ}}\AgdaSpace{}%
\AgdaBound{t₂}\AgdaSpace{}%
\AgdaOperator{\AgdaDatatype{⟷}}\AgdaSpace{}%
\AgdaBound{t₂}\AgdaSpace{}%
\AgdaOperator{\AgdaInductiveConstructor{×ᵤ}}\AgdaSpace{}%
\AgdaBound{t₁}\<%
\\
\>[4]\AgdaInductiveConstructor{assocl⋆}\AgdaSpace{}%
\AgdaSymbol{:}\AgdaSpace{}%
\AgdaSymbol{\{}\AgdaBound{t₁}\AgdaSpace{}%
\AgdaBound{t₂}\AgdaSpace{}%
\AgdaBound{t₃}\AgdaSpace{}%
\AgdaSymbol{:}\AgdaSpace{}%
\AgdaDatatype{𝕌}\AgdaSymbol{\}}\AgdaSpace{}%
\AgdaSymbol{→}\AgdaSpace{}%
\AgdaBound{t₁}\AgdaSpace{}%
\AgdaOperator{\AgdaInductiveConstructor{×ᵤ}}\AgdaSpace{}%
\AgdaSymbol{(}\AgdaBound{t₂}\AgdaSpace{}%
\AgdaOperator{\AgdaInductiveConstructor{×ᵤ}}\AgdaSpace{}%
\AgdaBound{t₃}\AgdaSymbol{)}\AgdaSpace{}%
\AgdaOperator{\AgdaDatatype{⟷}}\AgdaSpace{}%
\AgdaSymbol{(}\AgdaBound{t₁}\AgdaSpace{}%
\AgdaOperator{\AgdaInductiveConstructor{×ᵤ}}\AgdaSpace{}%
\AgdaBound{t₂}\AgdaSymbol{)}\AgdaSpace{}%
\AgdaOperator{\AgdaInductiveConstructor{×ᵤ}}\AgdaSpace{}%
\AgdaBound{t₃}\<%
\\
\>[4]\AgdaInductiveConstructor{assocr⋆}\AgdaSpace{}%
\AgdaSymbol{:}\AgdaSpace{}%
\AgdaSymbol{\{}\AgdaBound{t₁}\AgdaSpace{}%
\AgdaBound{t₂}\AgdaSpace{}%
\AgdaBound{t₃}\AgdaSpace{}%
\AgdaSymbol{:}\AgdaSpace{}%
\AgdaDatatype{𝕌}\AgdaSymbol{\}}\AgdaSpace{}%
\AgdaSymbol{→}\AgdaSpace{}%
\AgdaSymbol{(}\AgdaBound{t₁}\AgdaSpace{}%
\AgdaOperator{\AgdaInductiveConstructor{×ᵤ}}\AgdaSpace{}%
\AgdaBound{t₂}\AgdaSymbol{)}\AgdaSpace{}%
\AgdaOperator{\AgdaInductiveConstructor{×ᵤ}}\AgdaSpace{}%
\AgdaBound{t₃}\AgdaSpace{}%
\AgdaOperator{\AgdaDatatype{⟷}}\AgdaSpace{}%
\AgdaBound{t₁}\AgdaSpace{}%
\AgdaOperator{\AgdaInductiveConstructor{×ᵤ}}\AgdaSpace{}%
\AgdaSymbol{(}\AgdaBound{t₂}\AgdaSpace{}%
\AgdaOperator{\AgdaInductiveConstructor{×ᵤ}}\AgdaSpace{}%
\AgdaBound{t₃}\AgdaSymbol{)}\<%
\\
\>[4]\AgdaInductiveConstructor{absorbr}\AgdaSpace{}%
\AgdaSymbol{:}\AgdaSpace{}%
\AgdaSymbol{\{}\AgdaBound{t}\AgdaSpace{}%
\AgdaSymbol{:}\AgdaSpace{}%
\AgdaDatatype{𝕌}\AgdaSymbol{\}}\AgdaSpace{}%
\AgdaSymbol{→}\AgdaSpace{}%
\AgdaInductiveConstructor{𝟘}\AgdaSpace{}%
\AgdaOperator{\AgdaInductiveConstructor{×ᵤ}}\AgdaSpace{}%
\AgdaBound{t}\AgdaSpace{}%
\AgdaOperator{\AgdaDatatype{⟷}}\AgdaSpace{}%
\AgdaInductiveConstructor{𝟘}\<%
\\
\>[4]\AgdaInductiveConstructor{absorbl}\AgdaSpace{}%
\AgdaSymbol{:}\AgdaSpace{}%
\AgdaSymbol{\{}\AgdaBound{t}\AgdaSpace{}%
\AgdaSymbol{:}\AgdaSpace{}%
\AgdaDatatype{𝕌}\AgdaSymbol{\}}\AgdaSpace{}%
\AgdaSymbol{→}\AgdaSpace{}%
\AgdaBound{t}\AgdaSpace{}%
\AgdaOperator{\AgdaInductiveConstructor{×ᵤ}}\AgdaSpace{}%
\AgdaInductiveConstructor{𝟘}\AgdaSpace{}%
\AgdaOperator{\AgdaDatatype{⟷}}\AgdaSpace{}%
\AgdaInductiveConstructor{𝟘}\<%
\\
\>[4]\AgdaInductiveConstructor{factorzr}\AgdaSpace{}%
\AgdaSymbol{:}\AgdaSpace{}%
\AgdaSymbol{\{}\AgdaBound{t}\AgdaSpace{}%
\AgdaSymbol{:}\AgdaSpace{}%
\AgdaDatatype{𝕌}\AgdaSymbol{\}}\AgdaSpace{}%
\AgdaSymbol{→}\AgdaSpace{}%
\AgdaInductiveConstructor{𝟘}\AgdaSpace{}%
\AgdaOperator{\AgdaDatatype{⟷}}\AgdaSpace{}%
\AgdaBound{t}\AgdaSpace{}%
\AgdaOperator{\AgdaInductiveConstructor{×ᵤ}}\AgdaSpace{}%
\AgdaInductiveConstructor{𝟘}\<%
\\
\>[4]\AgdaInductiveConstructor{factorzl}\AgdaSpace{}%
\AgdaSymbol{:}\AgdaSpace{}%
\AgdaSymbol{\{}\AgdaBound{t}\AgdaSpace{}%
\AgdaSymbol{:}\AgdaSpace{}%
\AgdaDatatype{𝕌}\AgdaSymbol{\}}\AgdaSpace{}%
\AgdaSymbol{→}\AgdaSpace{}%
\AgdaInductiveConstructor{𝟘}\AgdaSpace{}%
\AgdaOperator{\AgdaDatatype{⟷}}\AgdaSpace{}%
\AgdaInductiveConstructor{𝟘}\AgdaSpace{}%
\AgdaOperator{\AgdaInductiveConstructor{×ᵤ}}\AgdaSpace{}%
\AgdaBound{t}\<%
\\
\>[4]\AgdaInductiveConstructor{dist}%
\>[12]\AgdaSymbol{:}\AgdaSpace{}%
\AgdaSymbol{\{}\AgdaBound{t₁}\AgdaSpace{}%
\AgdaBound{t₂}\AgdaSpace{}%
\AgdaBound{t₃}\AgdaSpace{}%
\AgdaSymbol{:}\AgdaSpace{}%
\AgdaDatatype{𝕌}\AgdaSymbol{\}}\AgdaSpace{}%
\AgdaSymbol{→}\AgdaSpace{}%
\AgdaSymbol{(}\AgdaBound{t₁}\AgdaSpace{}%
\AgdaOperator{\AgdaInductiveConstructor{+ᵤ}}\AgdaSpace{}%
\AgdaBound{t₂}\AgdaSymbol{)}\AgdaSpace{}%
\AgdaOperator{\AgdaInductiveConstructor{×ᵤ}}\AgdaSpace{}%
\AgdaBound{t₃}\AgdaSpace{}%
\AgdaOperator{\AgdaDatatype{⟷}}\AgdaSpace{}%
\AgdaSymbol{(}\AgdaBound{t₁}\AgdaSpace{}%
\AgdaOperator{\AgdaInductiveConstructor{×ᵤ}}\AgdaSpace{}%
\AgdaBound{t₃}\AgdaSymbol{)}\AgdaSpace{}%
\AgdaOperator{\AgdaInductiveConstructor{+ᵤ}}\AgdaSpace{}%
\AgdaSymbol{(}\AgdaBound{t₂}\AgdaSpace{}%
\AgdaOperator{\AgdaInductiveConstructor{×ᵤ}}\AgdaSpace{}%
\AgdaBound{t₃}\AgdaSymbol{)}\<%
\\
\>[4]\AgdaInductiveConstructor{factor}%
\>[12]\AgdaSymbol{:}\AgdaSpace{}%
\AgdaSymbol{\{}\AgdaBound{t₁}\AgdaSpace{}%
\AgdaBound{t₂}\AgdaSpace{}%
\AgdaBound{t₃}\AgdaSpace{}%
\AgdaSymbol{:}\AgdaSpace{}%
\AgdaDatatype{𝕌}\AgdaSymbol{\}}\AgdaSpace{}%
\AgdaSymbol{→}\AgdaSpace{}%
\AgdaSymbol{(}\AgdaBound{t₁}\AgdaSpace{}%
\AgdaOperator{\AgdaInductiveConstructor{×ᵤ}}\AgdaSpace{}%
\AgdaBound{t₃}\AgdaSymbol{)}\AgdaSpace{}%
\AgdaOperator{\AgdaInductiveConstructor{+ᵤ}}\AgdaSpace{}%
\AgdaSymbol{(}\AgdaBound{t₂}\AgdaSpace{}%
\AgdaOperator{\AgdaInductiveConstructor{×ᵤ}}\AgdaSpace{}%
\AgdaBound{t₃}\AgdaSymbol{)}\AgdaSpace{}%
\AgdaOperator{\AgdaDatatype{⟷}}\AgdaSpace{}%
\AgdaSymbol{(}\AgdaBound{t₁}\AgdaSpace{}%
\AgdaOperator{\AgdaInductiveConstructor{+ᵤ}}\AgdaSpace{}%
\AgdaBound{t₂}\AgdaSymbol{)}\AgdaSpace{}%
\AgdaOperator{\AgdaInductiveConstructor{×ᵤ}}\AgdaSpace{}%
\AgdaBound{t₃}\<%
\\
\>[4]\AgdaInductiveConstructor{distl}%
\>[12]\AgdaSymbol{:}\AgdaSpace{}%
\AgdaSymbol{\{}\AgdaBound{t₁}\AgdaSpace{}%
\AgdaBound{t₂}\AgdaSpace{}%
\AgdaBound{t₃}\AgdaSpace{}%
\AgdaSymbol{:}\AgdaSpace{}%
\AgdaDatatype{𝕌}\AgdaSymbol{\}}\AgdaSpace{}%
\AgdaSymbol{→}\AgdaSpace{}%
\AgdaBound{t₁}\AgdaSpace{}%
\AgdaOperator{\AgdaInductiveConstructor{×ᵤ}}\AgdaSpace{}%
\AgdaSymbol{(}\AgdaBound{t₂}\AgdaSpace{}%
\AgdaOperator{\AgdaInductiveConstructor{+ᵤ}}\AgdaSpace{}%
\AgdaBound{t₃}\AgdaSymbol{)}\AgdaSpace{}%
\AgdaOperator{\AgdaDatatype{⟷}}\AgdaSpace{}%
\AgdaSymbol{(}\AgdaBound{t₁}\AgdaSpace{}%
\AgdaOperator{\AgdaInductiveConstructor{×ᵤ}}\AgdaSpace{}%
\AgdaBound{t₂}\AgdaSymbol{)}\AgdaSpace{}%
\AgdaOperator{\AgdaInductiveConstructor{+ᵤ}}\AgdaSpace{}%
\AgdaSymbol{(}\AgdaBound{t₁}\AgdaSpace{}%
\AgdaOperator{\AgdaInductiveConstructor{×ᵤ}}\AgdaSpace{}%
\AgdaBound{t₃}\AgdaSymbol{)}\<%
\\
\>[4]\AgdaInductiveConstructor{factorl}\AgdaSpace{}%
\AgdaSymbol{:}\AgdaSpace{}%
\AgdaSymbol{\{}\AgdaBound{t₁}\AgdaSpace{}%
\AgdaBound{t₂}\AgdaSpace{}%
\AgdaBound{t₃}\AgdaSpace{}%
\AgdaSymbol{:}\AgdaSpace{}%
\AgdaDatatype{𝕌}\AgdaSpace{}%
\AgdaSymbol{\}}\AgdaSpace{}%
\AgdaSymbol{→}\AgdaSpace{}%
\AgdaSymbol{(}\AgdaBound{t₁}\AgdaSpace{}%
\AgdaOperator{\AgdaInductiveConstructor{×ᵤ}}\AgdaSpace{}%
\AgdaBound{t₂}\AgdaSymbol{)}\AgdaSpace{}%
\AgdaOperator{\AgdaInductiveConstructor{+ᵤ}}\AgdaSpace{}%
\AgdaSymbol{(}\AgdaBound{t₁}\AgdaSpace{}%
\AgdaOperator{\AgdaInductiveConstructor{×ᵤ}}\AgdaSpace{}%
\AgdaBound{t₃}\AgdaSymbol{)}\AgdaSpace{}%
\AgdaOperator{\AgdaDatatype{⟷}}\AgdaSpace{}%
\AgdaBound{t₁}\AgdaSpace{}%
\AgdaOperator{\AgdaInductiveConstructor{×ᵤ}}\AgdaSpace{}%
\AgdaSymbol{(}\AgdaBound{t₂}\AgdaSpace{}%
\AgdaOperator{\AgdaInductiveConstructor{+ᵤ}}\AgdaSpace{}%
\AgdaBound{t₃}\AgdaSymbol{)}\<%
\\
\>[4]\AgdaInductiveConstructor{id⟷}%
\>[12]\AgdaSymbol{:}\AgdaSpace{}%
\AgdaSymbol{\{}\AgdaBound{t}\AgdaSpace{}%
\AgdaSymbol{:}\AgdaSpace{}%
\AgdaDatatype{𝕌}\AgdaSymbol{\}}\AgdaSpace{}%
\AgdaSymbol{→}\AgdaSpace{}%
\AgdaBound{t}\AgdaSpace{}%
\AgdaOperator{\AgdaDatatype{⟷}}\AgdaSpace{}%
\AgdaBound{t}\<%
\\
\>[4]\AgdaOperator{\AgdaInductiveConstructor{\AgdaUnderscore{}⊚\AgdaUnderscore{}}}%
\>[12]\AgdaSymbol{:}\AgdaSpace{}%
\AgdaSymbol{\{}\AgdaBound{t₁}\AgdaSpace{}%
\AgdaBound{t₂}\AgdaSpace{}%
\AgdaBound{t₃}\AgdaSpace{}%
\AgdaSymbol{:}\AgdaSpace{}%
\AgdaDatatype{𝕌}\AgdaSymbol{\}}\AgdaSpace{}%
\AgdaSymbol{→}\AgdaSpace{}%
\AgdaSymbol{(}\AgdaBound{t₁}\AgdaSpace{}%
\AgdaOperator{\AgdaDatatype{⟷}}\AgdaSpace{}%
\AgdaBound{t₂}\AgdaSymbol{)}\AgdaSpace{}%
\AgdaSymbol{→}\AgdaSpace{}%
\AgdaSymbol{(}\AgdaBound{t₂}\AgdaSpace{}%
\AgdaOperator{\AgdaDatatype{⟷}}\AgdaSpace{}%
\AgdaBound{t₃}\AgdaSymbol{)}\AgdaSpace{}%
\AgdaSymbol{→}\AgdaSpace{}%
\AgdaSymbol{(}\AgdaBound{t₁}\AgdaSpace{}%
\AgdaOperator{\AgdaDatatype{⟷}}\AgdaSpace{}%
\AgdaBound{t₃}\AgdaSymbol{)}\<%
\\
\>[4]\AgdaOperator{\AgdaInductiveConstructor{\AgdaUnderscore{}⊕\AgdaUnderscore{}}}%
\>[12]\AgdaSymbol{:}\AgdaSpace{}%
\AgdaSymbol{\{}\AgdaBound{t₁}\AgdaSpace{}%
\AgdaBound{t₂}\AgdaSpace{}%
\AgdaBound{t₃}\AgdaSpace{}%
\AgdaBound{t₄}\AgdaSpace{}%
\AgdaSymbol{:}\AgdaSpace{}%
\AgdaDatatype{𝕌}\AgdaSymbol{\}}\AgdaSpace{}%
\AgdaSymbol{→}\AgdaSpace{}%
\AgdaSymbol{(}\AgdaBound{t₁}\AgdaSpace{}%
\AgdaOperator{\AgdaDatatype{⟷}}\AgdaSpace{}%
\AgdaBound{t₃}\AgdaSymbol{)}\AgdaSpace{}%
\AgdaSymbol{→}\AgdaSpace{}%
\AgdaSymbol{(}\AgdaBound{t₂}\AgdaSpace{}%
\AgdaOperator{\AgdaDatatype{⟷}}\AgdaSpace{}%
\AgdaBound{t₄}\AgdaSymbol{)}\AgdaSpace{}%
\AgdaSymbol{→}\AgdaSpace{}%
\AgdaSymbol{(}\AgdaBound{t₁}\AgdaSpace{}%
\AgdaOperator{\AgdaInductiveConstructor{+ᵤ}}\AgdaSpace{}%
\AgdaBound{t₂}\AgdaSpace{}%
\AgdaOperator{\AgdaDatatype{⟷}}\AgdaSpace{}%
\AgdaBound{t₃}\AgdaSpace{}%
\AgdaOperator{\AgdaInductiveConstructor{+ᵤ}}\AgdaSpace{}%
\AgdaBound{t₄}\AgdaSymbol{)}\<%
\\
\>[4]\AgdaOperator{\AgdaInductiveConstructor{\AgdaUnderscore{}⊗\AgdaUnderscore{}}}%
\>[12]\AgdaSymbol{:}\AgdaSpace{}%
\AgdaSymbol{\{}\AgdaBound{t₁}\AgdaSpace{}%
\AgdaBound{t₂}\AgdaSpace{}%
\AgdaBound{t₃}\AgdaSpace{}%
\AgdaBound{t₄}\AgdaSpace{}%
\AgdaSymbol{:}\AgdaSpace{}%
\AgdaDatatype{𝕌}\AgdaSymbol{\}}\AgdaSpace{}%
\AgdaSymbol{→}\AgdaSpace{}%
\AgdaSymbol{(}\AgdaBound{t₁}\AgdaSpace{}%
\AgdaOperator{\AgdaDatatype{⟷}}\AgdaSpace{}%
\AgdaBound{t₃}\AgdaSymbol{)}\AgdaSpace{}%
\AgdaSymbol{→}\AgdaSpace{}%
\AgdaSymbol{(}\AgdaBound{t₂}\AgdaSpace{}%
\AgdaOperator{\AgdaDatatype{⟷}}\AgdaSpace{}%
\AgdaBound{t₄}\AgdaSymbol{)}\AgdaSpace{}%
\AgdaSymbol{→}\AgdaSpace{}%
\AgdaSymbol{(}\AgdaBound{t₁}\AgdaSpace{}%
\AgdaOperator{\AgdaInductiveConstructor{×ᵤ}}\AgdaSpace{}%
\AgdaBound{t₂}\AgdaSpace{}%
\AgdaOperator{\AgdaDatatype{⟷}}\AgdaSpace{}%
\AgdaBound{t₃}\AgdaSpace{}%
\AgdaOperator{\AgdaInductiveConstructor{×ᵤ}}\AgdaSpace{}%
\AgdaBound{t₄}\AgdaSymbol{)}\<%
\end{code}
\begin{code}%
\>[4]\AgdaInductiveConstructor{η}\AgdaSpace{}%
\AgdaSymbol{:}\AgdaSpace{}%
\AgdaSymbol{\{}\AgdaBound{t}\AgdaSpace{}%
\AgdaSymbol{:}\AgdaSpace{}%
\AgdaDatatype{𝕌}\AgdaSymbol{\}}\AgdaSpace{}%
\AgdaSymbol{(}\AgdaBound{v}\AgdaSpace{}%
\AgdaSymbol{:}\AgdaSpace{}%
\AgdaOperator{\AgdaFunction{⟦}}\AgdaSpace{}%
\AgdaBound{t}\AgdaSpace{}%
\AgdaOperator{\AgdaFunction{⟧}}\AgdaSymbol{)}\AgdaSpace{}%
\AgdaSymbol{→}\AgdaSpace{}%
\AgdaInductiveConstructor{𝟙}\AgdaSpace{}%
\AgdaOperator{\AgdaDatatype{⟷}}\AgdaSpace{}%
\AgdaBound{t}\AgdaSpace{}%
\AgdaOperator{\AgdaInductiveConstructor{×ᵤ}}\AgdaSpace{}%
\AgdaSymbol{(}\AgdaOperator{\AgdaInductiveConstructor{𝟙/}}\AgdaSpace{}%
\AgdaBound{v}\AgdaSymbol{)}\<%
\\
\>[4]\AgdaInductiveConstructor{ε}\AgdaSpace{}%
\AgdaSymbol{:}\AgdaSpace{}%
\AgdaSymbol{\{}\AgdaBound{t}\AgdaSpace{}%
\AgdaSymbol{:}\AgdaSpace{}%
\AgdaDatatype{𝕌}\AgdaSymbol{\}}\AgdaSpace{}%
\AgdaSymbol{(}\AgdaBound{v}\AgdaSpace{}%
\AgdaSymbol{:}\AgdaSpace{}%
\AgdaOperator{\AgdaFunction{⟦}}\AgdaSpace{}%
\AgdaBound{t}\AgdaSpace{}%
\AgdaOperator{\AgdaFunction{⟧}}\AgdaSymbol{)}\AgdaSpace{}%
\AgdaSymbol{→}\AgdaSpace{}%
\AgdaBound{t}\AgdaSpace{}%
\AgdaOperator{\AgdaInductiveConstructor{×ᵤ}}\AgdaSpace{}%
\AgdaSymbol{(}\AgdaOperator{\AgdaInductiveConstructor{𝟙/}}\AgdaSpace{}%
\AgdaBound{v}\AgdaSymbol{)}\AgdaSpace{}%
\AgdaOperator{\AgdaDatatype{⟷}}\AgdaSpace{}%
\AgdaInductiveConstructor{𝟙}\<%
\end{code}}
\newcommand{\PIFDinterp}{%
\begin{code}%
\>[0]\AgdaFunction{interp}\AgdaSpace{}%
\AgdaSymbol{:}\AgdaSpace{}%
\AgdaSymbol{\{}\AgdaBound{t₁}\AgdaSpace{}%
\AgdaBound{t₂}\AgdaSpace{}%
\AgdaSymbol{:}\AgdaSpace{}%
\AgdaDatatype{𝕌}\AgdaSymbol{\}}\AgdaSpace{}%
\AgdaSymbol{→}\AgdaSpace{}%
\AgdaSymbol{(}\AgdaBound{t₁}\AgdaSpace{}%
\AgdaOperator{\AgdaDatatype{⟷}}\AgdaSpace{}%
\AgdaBound{t₂}\AgdaSymbol{)}\AgdaSpace{}%
\AgdaSymbol{→}\AgdaSpace{}%
\AgdaOperator{\AgdaFunction{⟦}}\AgdaSpace{}%
\AgdaBound{t₁}\AgdaSpace{}%
\AgdaOperator{\AgdaFunction{⟧}}\AgdaSpace{}%
\AgdaSymbol{→}\AgdaSpace{}%
\AgdaDatatype{Maybe}\AgdaSpace{}%
\AgdaOperator{\AgdaFunction{⟦}}\AgdaSpace{}%
\AgdaBound{t₂}\AgdaSpace{}%
\AgdaOperator{\AgdaFunction{⟧}}\<%
\\
\>[0]\AgdaFunction{interp}\AgdaSpace{}%
\AgdaInductiveConstructor{swap⋆}\AgdaSpace{}%
\AgdaSymbol{(}\AgdaBound{v₁}\AgdaSpace{}%
\AgdaOperator{\AgdaInductiveConstructor{,}}\AgdaSpace{}%
\AgdaBound{v₂}\AgdaSymbol{)}\AgdaSpace{}%
\AgdaSymbol{=}\AgdaSpace{}%
\AgdaInductiveConstructor{just}\AgdaSpace{}%
\AgdaSymbol{(}\AgdaBound{v₂}\AgdaSpace{}%
\AgdaOperator{\AgdaInductiveConstructor{,}}\AgdaSpace{}%
\AgdaBound{v₁}\AgdaSymbol{)}\<%
\\
\>[0]\AgdaFunction{interp}\AgdaSpace{}%
\AgdaSymbol{(}\AgdaBound{c₁}\AgdaSpace{}%
\AgdaOperator{\AgdaInductiveConstructor{⊚}}\AgdaSpace{}%
\AgdaBound{c₂}\AgdaSymbol{)}\AgdaSpace{}%
\AgdaBound{v}\AgdaSpace{}%
\AgdaSymbol{=}\AgdaSpace{}%
\AgdaFunction{interp}\AgdaSpace{}%
\AgdaBound{c₁}\AgdaSpace{}%
\AgdaBound{v}\AgdaSpace{}%
\AgdaOperator{\AgdaFunction{≫=}}\AgdaSpace{}%
\AgdaFunction{interp}\AgdaSpace{}%
\AgdaBound{c₂}\<%
\\
\>[0]\AgdaComment{-- (elided)}\<%
\\
\>[0]\AgdaFunction{interp}\AgdaSpace{}%
\AgdaSymbol{(}\AgdaInductiveConstructor{η}\AgdaSpace{}%
\AgdaBound{v}\AgdaSymbol{)}\AgdaSpace{}%
\AgdaInductiveConstructor{tt}\AgdaSpace{}%
\AgdaSymbol{=}\AgdaSpace{}%
\AgdaInductiveConstructor{just}\AgdaSpace{}%
\AgdaSymbol{(}\AgdaBound{v}\AgdaSpace{}%
\AgdaOperator{\AgdaInductiveConstructor{,}}\AgdaSpace{}%
\AgdaInductiveConstructor{↻}\AgdaSymbol{)}\<%
\\
\>[0]\AgdaFunction{interp}\AgdaSpace{}%
\AgdaSymbol{(}\AgdaInductiveConstructor{ε}\AgdaSpace{}%
\AgdaBound{v}\AgdaSymbol{)}\AgdaSpace{}%
\AgdaSymbol{(}\AgdaBound{v'}\AgdaSpace{}%
\AgdaOperator{\AgdaInductiveConstructor{,}}\AgdaSpace{}%
\AgdaBound{○}\AgdaSymbol{)}\AgdaSpace{}%
\AgdaKeyword{with}\AgdaSpace{}%
\AgdaBound{v}\AgdaSpace{}%
\AgdaOperator{\AgdaFunction{≟ᵤ}}\AgdaSpace{}%
\AgdaBound{v'}\<%
\\
\>[0]\AgdaSymbol{...}\AgdaSpace{}%
\AgdaSymbol{|}\AgdaSpace{}%
\AgdaInductiveConstructor{yes}\AgdaSpace{}%
\AgdaSymbol{\AgdaUnderscore{}}\AgdaSpace{}%
\AgdaSymbol{=}\AgdaSpace{}%
\AgdaInductiveConstructor{just}\AgdaSpace{}%
\AgdaInductiveConstructor{tt}\<%
\\
\>[0]\AgdaSymbol{...}\AgdaSpace{}%
\AgdaSymbol{|}\AgdaSpace{}%
\AgdaInductiveConstructor{no}\AgdaSpace{}%
\AgdaSymbol{\AgdaUnderscore{}}\AgdaSpace{}%
\AgdaSymbol{=}\AgdaSpace{}%
\AgdaInductiveConstructor{nothing}\<%
\end{code}
\begin{code}[hide]%
\>[0]\AgdaFunction{interp}\AgdaSpace{}%
\AgdaInductiveConstructor{unite₊l}\AgdaSpace{}%
\AgdaSymbol{(}\AgdaInductiveConstructor{inj₁}\AgdaSpace{}%
\AgdaSymbol{())}\<%
\\
\>[0]\AgdaFunction{interp}\AgdaSpace{}%
\AgdaInductiveConstructor{unite₊l}\AgdaSpace{}%
\AgdaSymbol{(}\AgdaInductiveConstructor{inj₂}\AgdaSpace{}%
\AgdaBound{v}\AgdaSymbol{)}\AgdaSpace{}%
\AgdaSymbol{=}\AgdaSpace{}%
\AgdaInductiveConstructor{just}\AgdaSpace{}%
\AgdaBound{v}\<%
\\
\>[0]\AgdaFunction{interp}\AgdaSpace{}%
\AgdaInductiveConstructor{uniti₊l}\AgdaSpace{}%
\AgdaBound{v}\AgdaSpace{}%
\AgdaSymbol{=}\AgdaSpace{}%
\AgdaInductiveConstructor{just}\AgdaSpace{}%
\AgdaSymbol{(}\AgdaInductiveConstructor{inj₂}\AgdaSpace{}%
\AgdaBound{v}\AgdaSymbol{)}\<%
\\
\>[0]\AgdaFunction{interp}\AgdaSpace{}%
\AgdaInductiveConstructor{unite₊r}\AgdaSpace{}%
\AgdaSymbol{(}\AgdaInductiveConstructor{inj₁}\AgdaSpace{}%
\AgdaBound{v}\AgdaSymbol{)}\AgdaSpace{}%
\AgdaSymbol{=}\AgdaSpace{}%
\AgdaInductiveConstructor{just}\AgdaSpace{}%
\AgdaBound{v}\<%
\\
\>[0]\AgdaFunction{interp}\AgdaSpace{}%
\AgdaInductiveConstructor{unite₊r}\AgdaSpace{}%
\AgdaSymbol{(}\AgdaInductiveConstructor{inj₂}\AgdaSpace{}%
\AgdaSymbol{())}\<%
\\
\>[0]\AgdaFunction{interp}\AgdaSpace{}%
\AgdaInductiveConstructor{uniti₊r}\AgdaSpace{}%
\AgdaBound{v}\AgdaSpace{}%
\AgdaSymbol{=}\AgdaSpace{}%
\AgdaInductiveConstructor{just}\AgdaSpace{}%
\AgdaSymbol{(}\AgdaInductiveConstructor{inj₁}\AgdaSpace{}%
\AgdaBound{v}\AgdaSymbol{)}\<%
\\
\>[0]\AgdaFunction{interp}\AgdaSpace{}%
\AgdaInductiveConstructor{swap₊}\AgdaSpace{}%
\AgdaSymbol{(}\AgdaInductiveConstructor{inj₁}\AgdaSpace{}%
\AgdaBound{v}\AgdaSymbol{)}\AgdaSpace{}%
\AgdaSymbol{=}\AgdaSpace{}%
\AgdaInductiveConstructor{just}\AgdaSpace{}%
\AgdaSymbol{(}\AgdaInductiveConstructor{inj₂}\AgdaSpace{}%
\AgdaBound{v}\AgdaSymbol{)}\<%
\\
\>[0]\AgdaFunction{interp}\AgdaSpace{}%
\AgdaInductiveConstructor{swap₊}\AgdaSpace{}%
\AgdaSymbol{(}\AgdaInductiveConstructor{inj₂}\AgdaSpace{}%
\AgdaBound{v}\AgdaSymbol{)}\AgdaSpace{}%
\AgdaSymbol{=}\AgdaSpace{}%
\AgdaInductiveConstructor{just}\AgdaSpace{}%
\AgdaSymbol{(}\AgdaInductiveConstructor{inj₁}\AgdaSpace{}%
\AgdaBound{v}\AgdaSymbol{)}\<%
\\
\>[0]\AgdaFunction{interp}\AgdaSpace{}%
\AgdaInductiveConstructor{assocl₊}\AgdaSpace{}%
\AgdaSymbol{(}\AgdaInductiveConstructor{inj₁}\AgdaSpace{}%
\AgdaBound{v}\AgdaSymbol{)}\AgdaSpace{}%
\AgdaSymbol{=}\AgdaSpace{}%
\AgdaInductiveConstructor{just}\AgdaSpace{}%
\AgdaSymbol{(}\AgdaInductiveConstructor{inj₁}\AgdaSpace{}%
\AgdaSymbol{(}\AgdaInductiveConstructor{inj₁}\AgdaSpace{}%
\AgdaBound{v}\AgdaSymbol{))}\<%
\\
\>[0]\AgdaFunction{interp}\AgdaSpace{}%
\AgdaInductiveConstructor{assocl₊}\AgdaSpace{}%
\AgdaSymbol{(}\AgdaInductiveConstructor{inj₂}\AgdaSpace{}%
\AgdaSymbol{(}\AgdaInductiveConstructor{inj₁}\AgdaSpace{}%
\AgdaBound{v}\AgdaSymbol{))}\AgdaSpace{}%
\AgdaSymbol{=}\AgdaSpace{}%
\AgdaInductiveConstructor{just}\AgdaSpace{}%
\AgdaSymbol{(}\AgdaInductiveConstructor{inj₁}\AgdaSpace{}%
\AgdaSymbol{(}\AgdaInductiveConstructor{inj₂}\AgdaSpace{}%
\AgdaBound{v}\AgdaSymbol{))}\<%
\\
\>[0]\AgdaFunction{interp}\AgdaSpace{}%
\AgdaInductiveConstructor{assocl₊}\AgdaSpace{}%
\AgdaSymbol{(}\AgdaInductiveConstructor{inj₂}\AgdaSpace{}%
\AgdaSymbol{(}\AgdaInductiveConstructor{inj₂}\AgdaSpace{}%
\AgdaBound{v}\AgdaSymbol{))}\AgdaSpace{}%
\AgdaSymbol{=}\AgdaSpace{}%
\AgdaInductiveConstructor{just}\AgdaSpace{}%
\AgdaSymbol{(}\AgdaInductiveConstructor{inj₂}\AgdaSpace{}%
\AgdaBound{v}\AgdaSymbol{)}\<%
\\
\>[0]\AgdaFunction{interp}\AgdaSpace{}%
\AgdaInductiveConstructor{assocr₊}\AgdaSpace{}%
\AgdaSymbol{(}\AgdaInductiveConstructor{inj₁}\AgdaSpace{}%
\AgdaSymbol{(}\AgdaInductiveConstructor{inj₁}\AgdaSpace{}%
\AgdaBound{v}\AgdaSymbol{))}\AgdaSpace{}%
\AgdaSymbol{=}\AgdaSpace{}%
\AgdaInductiveConstructor{just}\AgdaSpace{}%
\AgdaSymbol{(}\AgdaInductiveConstructor{inj₁}\AgdaSpace{}%
\AgdaBound{v}\AgdaSymbol{)}\<%
\\
\>[0]\AgdaFunction{interp}\AgdaSpace{}%
\AgdaInductiveConstructor{assocr₊}\AgdaSpace{}%
\AgdaSymbol{(}\AgdaInductiveConstructor{inj₁}\AgdaSpace{}%
\AgdaSymbol{(}\AgdaInductiveConstructor{inj₂}\AgdaSpace{}%
\AgdaBound{v}\AgdaSymbol{))}\AgdaSpace{}%
\AgdaSymbol{=}\AgdaSpace{}%
\AgdaInductiveConstructor{just}\AgdaSpace{}%
\AgdaSymbol{(}\AgdaInductiveConstructor{inj₂}\AgdaSpace{}%
\AgdaSymbol{(}\AgdaInductiveConstructor{inj₁}\AgdaSpace{}%
\AgdaBound{v}\AgdaSymbol{))}\<%
\\
\>[0]\AgdaFunction{interp}\AgdaSpace{}%
\AgdaInductiveConstructor{assocr₊}\AgdaSpace{}%
\AgdaSymbol{(}\AgdaInductiveConstructor{inj₂}\AgdaSpace{}%
\AgdaBound{v}\AgdaSymbol{)}\AgdaSpace{}%
\AgdaSymbol{=}\AgdaSpace{}%
\AgdaInductiveConstructor{just}\AgdaSpace{}%
\AgdaSymbol{(}\AgdaInductiveConstructor{inj₂}\AgdaSpace{}%
\AgdaSymbol{(}\AgdaInductiveConstructor{inj₂}\AgdaSpace{}%
\AgdaBound{v}\AgdaSymbol{))}\<%
\\
\>[0]\AgdaFunction{interp}\AgdaSpace{}%
\AgdaInductiveConstructor{unite⋆l}\AgdaSpace{}%
\AgdaBound{v}\AgdaSpace{}%
\AgdaSymbol{=}\AgdaSpace{}%
\AgdaInductiveConstructor{just}\AgdaSpace{}%
\AgdaSymbol{(}\AgdaField{proj₂}\AgdaSpace{}%
\AgdaBound{v}\AgdaSymbol{)}\<%
\\
\>[0]\AgdaFunction{interp}\AgdaSpace{}%
\AgdaInductiveConstructor{uniti⋆l}\AgdaSpace{}%
\AgdaBound{v}\AgdaSpace{}%
\AgdaSymbol{=}\AgdaSpace{}%
\AgdaInductiveConstructor{just}\AgdaSpace{}%
\AgdaSymbol{(}\AgdaInductiveConstructor{tt}\AgdaSpace{}%
\AgdaOperator{\AgdaInductiveConstructor{,}}\AgdaSpace{}%
\AgdaBound{v}\AgdaSymbol{)}\<%
\\
\>[0]\AgdaFunction{interp}\AgdaSpace{}%
\AgdaInductiveConstructor{unite⋆r}\AgdaSpace{}%
\AgdaBound{v}\AgdaSpace{}%
\AgdaSymbol{=}\AgdaSpace{}%
\AgdaInductiveConstructor{just}\AgdaSpace{}%
\AgdaSymbol{(}\AgdaField{proj₁}\AgdaSpace{}%
\AgdaBound{v}\AgdaSymbol{)}\<%
\\
\>[0]\AgdaFunction{interp}\AgdaSpace{}%
\AgdaInductiveConstructor{uniti⋆r}\AgdaSpace{}%
\AgdaBound{v}\AgdaSpace{}%
\AgdaSymbol{=}\AgdaSpace{}%
\AgdaInductiveConstructor{just}\AgdaSpace{}%
\AgdaSymbol{(}\AgdaBound{v}\AgdaSpace{}%
\AgdaOperator{\AgdaInductiveConstructor{,}}\AgdaSpace{}%
\AgdaInductiveConstructor{tt}\AgdaSymbol{)}\<%
\\
\>[0]\AgdaFunction{interp}\AgdaSpace{}%
\AgdaInductiveConstructor{assocl⋆}\AgdaSpace{}%
\AgdaSymbol{(}\AgdaBound{v₁}\AgdaSpace{}%
\AgdaOperator{\AgdaInductiveConstructor{,}}\AgdaSpace{}%
\AgdaBound{v₂}\AgdaSpace{}%
\AgdaOperator{\AgdaInductiveConstructor{,}}\AgdaSpace{}%
\AgdaBound{v₃}\AgdaSymbol{)}\AgdaSpace{}%
\AgdaSymbol{=}\AgdaSpace{}%
\AgdaInductiveConstructor{just}\AgdaSpace{}%
\AgdaSymbol{((}\AgdaBound{v₁}\AgdaSpace{}%
\AgdaOperator{\AgdaInductiveConstructor{,}}\AgdaSpace{}%
\AgdaBound{v₂}\AgdaSymbol{)}\AgdaSpace{}%
\AgdaOperator{\AgdaInductiveConstructor{,}}\AgdaSpace{}%
\AgdaBound{v₃}\AgdaSymbol{)}\<%
\\
\>[0]\AgdaFunction{interp}\AgdaSpace{}%
\AgdaInductiveConstructor{assocr⋆}\AgdaSpace{}%
\AgdaSymbol{((}\AgdaBound{v₁}\AgdaSpace{}%
\AgdaOperator{\AgdaInductiveConstructor{,}}\AgdaSpace{}%
\AgdaBound{v₂}\AgdaSymbol{)}\AgdaSpace{}%
\AgdaOperator{\AgdaInductiveConstructor{,}}\AgdaSpace{}%
\AgdaBound{v₃}\AgdaSymbol{)}\AgdaSpace{}%
\AgdaSymbol{=}\AgdaSpace{}%
\AgdaInductiveConstructor{just}\AgdaSpace{}%
\AgdaSymbol{(}\AgdaBound{v₁}\AgdaSpace{}%
\AgdaOperator{\AgdaInductiveConstructor{,}}\AgdaSpace{}%
\AgdaBound{v₂}\AgdaSpace{}%
\AgdaOperator{\AgdaInductiveConstructor{,}}\AgdaSpace{}%
\AgdaBound{v₃}\AgdaSymbol{)}\<%
\\
\>[0]\AgdaFunction{interp}\AgdaSpace{}%
\AgdaInductiveConstructor{absorbr}\AgdaSpace{}%
\AgdaSymbol{(()}\AgdaSpace{}%
\AgdaOperator{\AgdaInductiveConstructor{,}}\AgdaSpace{}%
\AgdaBound{v}\AgdaSymbol{)}\<%
\\
\>[0]\AgdaFunction{interp}\AgdaSpace{}%
\AgdaInductiveConstructor{absorbl}\AgdaSpace{}%
\AgdaSymbol{(}\AgdaBound{v}\AgdaSpace{}%
\AgdaOperator{\AgdaInductiveConstructor{,}}\AgdaSpace{}%
\AgdaSymbol{())}\<%
\\
\>[0]\AgdaFunction{interp}\AgdaSpace{}%
\AgdaInductiveConstructor{factorzr}\AgdaSpace{}%
\AgdaSymbol{()}\<%
\\
\>[0]\AgdaFunction{interp}\AgdaSpace{}%
\AgdaInductiveConstructor{factorzl}\AgdaSpace{}%
\AgdaSymbol{()}\<%
\\
\>[0]\AgdaFunction{interp}\AgdaSpace{}%
\AgdaInductiveConstructor{dist}\AgdaSpace{}%
\AgdaSymbol{(}\AgdaInductiveConstructor{inj₁}\AgdaSpace{}%
\AgdaBound{v₁}\AgdaSpace{}%
\AgdaOperator{\AgdaInductiveConstructor{,}}\AgdaSpace{}%
\AgdaBound{v₃}\AgdaSymbol{)}\AgdaSpace{}%
\AgdaSymbol{=}\AgdaSpace{}%
\AgdaInductiveConstructor{just}\AgdaSpace{}%
\AgdaSymbol{(}\AgdaInductiveConstructor{inj₁}\AgdaSpace{}%
\AgdaSymbol{(}\AgdaBound{v₁}\AgdaSpace{}%
\AgdaOperator{\AgdaInductiveConstructor{,}}\AgdaSpace{}%
\AgdaBound{v₃}\AgdaSymbol{))}\<%
\\
\>[0]\AgdaFunction{interp}\AgdaSpace{}%
\AgdaInductiveConstructor{dist}\AgdaSpace{}%
\AgdaSymbol{(}\AgdaInductiveConstructor{inj₂}\AgdaSpace{}%
\AgdaBound{v₂}\AgdaSpace{}%
\AgdaOperator{\AgdaInductiveConstructor{,}}\AgdaSpace{}%
\AgdaBound{v₃}\AgdaSymbol{)}\AgdaSpace{}%
\AgdaSymbol{=}\AgdaSpace{}%
\AgdaInductiveConstructor{just}\AgdaSpace{}%
\AgdaSymbol{(}\AgdaInductiveConstructor{inj₂}\AgdaSpace{}%
\AgdaSymbol{(}\AgdaBound{v₂}\AgdaSpace{}%
\AgdaOperator{\AgdaInductiveConstructor{,}}\AgdaSpace{}%
\AgdaBound{v₃}\AgdaSymbol{))}\<%
\\
\>[0]\AgdaFunction{interp}\AgdaSpace{}%
\AgdaInductiveConstructor{factor}\AgdaSpace{}%
\AgdaSymbol{(}\AgdaInductiveConstructor{inj₁}\AgdaSpace{}%
\AgdaSymbol{(}\AgdaBound{v₁}\AgdaSpace{}%
\AgdaOperator{\AgdaInductiveConstructor{,}}\AgdaSpace{}%
\AgdaBound{v₃}\AgdaSymbol{))}\AgdaSpace{}%
\AgdaSymbol{=}\AgdaSpace{}%
\AgdaInductiveConstructor{just}\AgdaSpace{}%
\AgdaSymbol{(}\AgdaInductiveConstructor{inj₁}\AgdaSpace{}%
\AgdaBound{v₁}\AgdaSpace{}%
\AgdaOperator{\AgdaInductiveConstructor{,}}\AgdaSpace{}%
\AgdaBound{v₃}\AgdaSymbol{)}\<%
\\
\>[0]\AgdaFunction{interp}\AgdaSpace{}%
\AgdaInductiveConstructor{factor}\AgdaSpace{}%
\AgdaSymbol{(}\AgdaInductiveConstructor{inj₂}\AgdaSpace{}%
\AgdaSymbol{(}\AgdaBound{v₂}\AgdaSpace{}%
\AgdaOperator{\AgdaInductiveConstructor{,}}\AgdaSpace{}%
\AgdaBound{v₃}\AgdaSymbol{))}\AgdaSpace{}%
\AgdaSymbol{=}\AgdaSpace{}%
\AgdaInductiveConstructor{just}\AgdaSpace{}%
\AgdaSymbol{(}\AgdaInductiveConstructor{inj₂}\AgdaSpace{}%
\AgdaBound{v₂}\AgdaSpace{}%
\AgdaOperator{\AgdaInductiveConstructor{,}}\AgdaSpace{}%
\AgdaBound{v₃}\AgdaSymbol{)}\<%
\\
\>[0]\AgdaFunction{interp}\AgdaSpace{}%
\AgdaInductiveConstructor{distl}\AgdaSpace{}%
\AgdaSymbol{(}\AgdaBound{v₁}\AgdaSpace{}%
\AgdaOperator{\AgdaInductiveConstructor{,}}\AgdaSpace{}%
\AgdaInductiveConstructor{inj₁}\AgdaSpace{}%
\AgdaBound{v₂}\AgdaSymbol{)}\AgdaSpace{}%
\AgdaSymbol{=}\AgdaSpace{}%
\AgdaInductiveConstructor{just}\AgdaSpace{}%
\AgdaSymbol{(}\AgdaInductiveConstructor{inj₁}\AgdaSpace{}%
\AgdaSymbol{(}\AgdaBound{v₁}\AgdaSpace{}%
\AgdaOperator{\AgdaInductiveConstructor{,}}\AgdaSpace{}%
\AgdaBound{v₂}\AgdaSymbol{))}\<%
\\
\>[0]\AgdaFunction{interp}\AgdaSpace{}%
\AgdaInductiveConstructor{distl}\AgdaSpace{}%
\AgdaSymbol{(}\AgdaBound{v₁}\AgdaSpace{}%
\AgdaOperator{\AgdaInductiveConstructor{,}}\AgdaSpace{}%
\AgdaInductiveConstructor{inj₂}\AgdaSpace{}%
\AgdaBound{v₃}\AgdaSymbol{)}\AgdaSpace{}%
\AgdaSymbol{=}\AgdaSpace{}%
\AgdaInductiveConstructor{just}\AgdaSpace{}%
\AgdaSymbol{(}\AgdaInductiveConstructor{inj₂}\AgdaSpace{}%
\AgdaSymbol{(}\AgdaBound{v₁}\AgdaSpace{}%
\AgdaOperator{\AgdaInductiveConstructor{,}}\AgdaSpace{}%
\AgdaBound{v₃}\AgdaSymbol{))}\<%
\\
\>[0]\AgdaFunction{interp}\AgdaSpace{}%
\AgdaInductiveConstructor{factorl}\AgdaSpace{}%
\AgdaSymbol{(}\AgdaInductiveConstructor{inj₁}\AgdaSpace{}%
\AgdaSymbol{(}\AgdaBound{v₁}\AgdaSpace{}%
\AgdaOperator{\AgdaInductiveConstructor{,}}\AgdaSpace{}%
\AgdaBound{v₂}\AgdaSymbol{))}\AgdaSpace{}%
\AgdaSymbol{=}\AgdaSpace{}%
\AgdaInductiveConstructor{just}\AgdaSpace{}%
\AgdaSymbol{(}\AgdaBound{v₁}\AgdaSpace{}%
\AgdaOperator{\AgdaInductiveConstructor{,}}\AgdaSpace{}%
\AgdaInductiveConstructor{inj₁}\AgdaSpace{}%
\AgdaBound{v₂}\AgdaSymbol{)}\<%
\\
\>[0]\AgdaFunction{interp}\AgdaSpace{}%
\AgdaInductiveConstructor{factorl}\AgdaSpace{}%
\AgdaSymbol{(}\AgdaInductiveConstructor{inj₂}\AgdaSpace{}%
\AgdaSymbol{(}\AgdaBound{v₁}\AgdaSpace{}%
\AgdaOperator{\AgdaInductiveConstructor{,}}\AgdaSpace{}%
\AgdaBound{v₃}\AgdaSymbol{))}\AgdaSpace{}%
\AgdaSymbol{=}\AgdaSpace{}%
\AgdaInductiveConstructor{just}\AgdaSpace{}%
\AgdaSymbol{(}\AgdaBound{v₁}\AgdaSpace{}%
\AgdaOperator{\AgdaInductiveConstructor{,}}\AgdaSpace{}%
\AgdaInductiveConstructor{inj₂}\AgdaSpace{}%
\AgdaBound{v₃}\AgdaSymbol{)}\<%
\\
\>[0]\AgdaFunction{interp}\AgdaSpace{}%
\AgdaInductiveConstructor{id⟷}\AgdaSpace{}%
\AgdaBound{v}\AgdaSpace{}%
\AgdaSymbol{=}\AgdaSpace{}%
\AgdaInductiveConstructor{just}\AgdaSpace{}%
\AgdaBound{v}\<%
\\
\>[0]\AgdaFunction{interp}\AgdaSpace{}%
\AgdaSymbol{(}\AgdaBound{c₁}\AgdaSpace{}%
\AgdaOperator{\AgdaInductiveConstructor{⊕}}\AgdaSpace{}%
\AgdaBound{c₂}\AgdaSymbol{)}\AgdaSpace{}%
\AgdaSymbol{(}\AgdaInductiveConstructor{inj₁}\AgdaSpace{}%
\AgdaBound{v}\AgdaSymbol{)}\AgdaSpace{}%
\AgdaSymbol{=}\AgdaSpace{}%
\AgdaFunction{interp}\AgdaSpace{}%
\AgdaBound{c₁}\AgdaSpace{}%
\AgdaBound{v}\AgdaSpace{}%
\AgdaOperator{\AgdaFunction{≫=}}\AgdaSpace{}%
\AgdaInductiveConstructor{just}\AgdaSpace{}%
\AgdaOperator{\AgdaFunction{∘}}\AgdaSpace{}%
\AgdaInductiveConstructor{inj₁}\<%
\\
\>[0]\AgdaFunction{interp}\AgdaSpace{}%
\AgdaSymbol{(}\AgdaBound{c₁}\AgdaSpace{}%
\AgdaOperator{\AgdaInductiveConstructor{⊕}}\AgdaSpace{}%
\AgdaBound{c₂}\AgdaSymbol{)}\AgdaSpace{}%
\AgdaSymbol{(}\AgdaInductiveConstructor{inj₂}\AgdaSpace{}%
\AgdaBound{v}\AgdaSymbol{)}\AgdaSpace{}%
\AgdaSymbol{=}\AgdaSpace{}%
\AgdaFunction{interp}\AgdaSpace{}%
\AgdaBound{c₂}\AgdaSpace{}%
\AgdaBound{v}\AgdaSpace{}%
\AgdaOperator{\AgdaFunction{≫=}}\AgdaSpace{}%
\AgdaInductiveConstructor{just}\AgdaSpace{}%
\AgdaOperator{\AgdaFunction{∘}}\AgdaSpace{}%
\AgdaInductiveConstructor{inj₂}\<%
\\
\>[0]\AgdaFunction{interp}\AgdaSpace{}%
\AgdaSymbol{(}\AgdaBound{c₁}\AgdaSpace{}%
\AgdaOperator{\AgdaInductiveConstructor{⊗}}\AgdaSpace{}%
\AgdaBound{c₂}\AgdaSymbol{)}\AgdaSpace{}%
\AgdaSymbol{(}\AgdaBound{v₁}\AgdaSpace{}%
\AgdaOperator{\AgdaInductiveConstructor{,}}\AgdaSpace{}%
\AgdaBound{v₂}\AgdaSymbol{)}\AgdaSpace{}%
\AgdaSymbol{=}%
\>[1154I]\AgdaFunction{interp}\AgdaSpace{}%
\AgdaBound{c₁}\AgdaSpace{}%
\AgdaBound{v₁}\AgdaSpace{}%
\AgdaOperator{\AgdaFunction{≫=}}\<%
\\
\>[1154I][@{}l@{\AgdaIndent{0}}]%
\>[31]\AgdaSymbol{(λ}\AgdaSpace{}%
\AgdaBound{v₁'}\AgdaSpace{}%
\AgdaSymbol{→}\AgdaSpace{}%
\AgdaFunction{interp}\AgdaSpace{}%
\AgdaBound{c₂}\AgdaSpace{}%
\AgdaBound{v₂}\AgdaSpace{}%
\AgdaOperator{\AgdaFunction{≫=}}\<%
\\
\>[31][@{}l@{\AgdaIndent{0}}]%
\>[33]\AgdaSymbol{λ}\AgdaSpace{}%
\AgdaBound{v₂'}\AgdaSpace{}%
\AgdaSymbol{→}\AgdaSpace{}%
\AgdaInductiveConstructor{just}\AgdaSpace{}%
\AgdaSymbol{(}\AgdaBound{v₁'}\AgdaSpace{}%
\AgdaOperator{\AgdaInductiveConstructor{,}}\AgdaSpace{}%
\AgdaBound{v₂'}\AgdaSymbol{))}\<%
\end{code}}
\newcommand{\PIFDExample}{%
\begin{code}%
\>[0]\AgdaFunction{Ex}\AgdaSpace{}%
\AgdaSymbol{:}\AgdaSpace{}%
\AgdaFunction{𝔹}\AgdaSpace{}%
\AgdaOperator{\AgdaDatatype{⟷}}\AgdaSpace{}%
\AgdaFunction{𝔹}\<%
\\
\>[0]\AgdaFunction{Ex}\AgdaSpace{}%
\AgdaSymbol{=}%
\>[1276I]\AgdaInductiveConstructor{uniti⋆r}\AgdaSpace{}%
\AgdaOperator{\AgdaInductiveConstructor{⊚}}\AgdaSpace{}%
\AgdaSymbol{(}\AgdaInductiveConstructor{id⟷}\AgdaSpace{}%
\AgdaOperator{\AgdaInductiveConstructor{⊗}}\AgdaSpace{}%
\AgdaInductiveConstructor{η}\AgdaSpace{}%
\AgdaInductiveConstructor{𝔽}\AgdaSymbol{)}\AgdaSpace{}%
\AgdaOperator{\AgdaInductiveConstructor{⊚}}\<%
\\
\>[.][@{}l@{}]\<[1276I]%
\>[5]\AgdaInductiveConstructor{assocl⋆}\AgdaSpace{}%
\AgdaOperator{\AgdaInductiveConstructor{⊚}}\AgdaSpace{}%
\AgdaSymbol{(}\AgdaFunction{CNOT}\AgdaSpace{}%
\AgdaOperator{\AgdaInductiveConstructor{⊗}}\AgdaSpace{}%
\AgdaInductiveConstructor{id⟷}\AgdaSymbol{)}\AgdaSpace{}%
\AgdaOperator{\AgdaInductiveConstructor{⊚}}\AgdaSpace{}%
\AgdaInductiveConstructor{assocr⋆}\AgdaSpace{}%
\AgdaOperator{\AgdaInductiveConstructor{⊚}}\<%
\\
\>[5]\AgdaSymbol{(}\AgdaInductiveConstructor{id⟷}\AgdaSpace{}%
\AgdaOperator{\AgdaInductiveConstructor{⊗}}\AgdaSpace{}%
\AgdaInductiveConstructor{ε}\AgdaSpace{}%
\AgdaInductiveConstructor{𝔽}\AgdaSymbol{)}\AgdaSpace{}%
\AgdaOperator{\AgdaInductiveConstructor{⊚}}\AgdaSpace{}%
\AgdaInductiveConstructor{unite⋆r}\<%
\\
\\[\AgdaEmptyExtraSkip]%
\>[0]\AgdaFunction{ExTest₁}\AgdaSpace{}%
\AgdaSymbol{:}\AgdaSpace{}%
\AgdaFunction{interp}\AgdaSpace{}%
\AgdaFunction{Ex}\AgdaSpace{}%
\AgdaInductiveConstructor{𝔽}\AgdaSpace{}%
\AgdaOperator{\AgdaDatatype{≡}}\AgdaSpace{}%
\AgdaInductiveConstructor{just}\AgdaSpace{}%
\AgdaInductiveConstructor{𝔽}\<%
\\
\>[0]\AgdaFunction{ExTest₁}\AgdaSpace{}%
\AgdaSymbol{=}\AgdaSpace{}%
\AgdaInductiveConstructor{refl}\<%
\\
\\[\AgdaEmptyExtraSkip]%
\>[0]\AgdaFunction{ExTest₂}\AgdaSpace{}%
\AgdaSymbol{:}\AgdaSpace{}%
\AgdaFunction{interp}\AgdaSpace{}%
\AgdaFunction{Ex}\AgdaSpace{}%
\AgdaInductiveConstructor{𝕋}\AgdaSpace{}%
\AgdaOperator{\AgdaDatatype{≡}}\AgdaSpace{}%
\AgdaInductiveConstructor{nothing}\<%
\\
\>[0]\AgdaFunction{ExTest₂}\AgdaSpace{}%
\AgdaSymbol{=}\AgdaSpace{}%
\AgdaInductiveConstructor{refl}\<%
\end{code}}
\newcommand{\PIFDEtaEpsilonExampleone}{%
\begin{code}%
\>[0]\AgdaFunction{id'}\AgdaSpace{}%
\AgdaSymbol{:}\AgdaSpace{}%
\AgdaFunction{𝔹}\AgdaSpace{}%
\AgdaOperator{\AgdaDatatype{⟷}}\AgdaSpace{}%
\AgdaFunction{𝔹}\<%
\\
\>[0]\AgdaFunction{id'}\AgdaSpace{}%
\AgdaSymbol{=}%
\>[1317I]\AgdaFunction{𝔹}\<%
\\
\>[.][@{}l@{}]\<[1317I]%
\>[6]\AgdaOperator{\AgdaFunction{⟷⟨}}\AgdaSpace{}%
\AgdaInductiveConstructor{uniti⋆r}\AgdaSpace{}%
\AgdaOperator{\AgdaFunction{⟩}}%
\>[42]\AgdaFunction{𝔹}\AgdaSpace{}%
\AgdaOperator{\AgdaInductiveConstructor{×ᵤ}}\AgdaSpace{}%
\AgdaInductiveConstructor{𝟙}\<%
\\
\>[6]\AgdaOperator{\AgdaFunction{⟷⟨}}\AgdaSpace{}%
\AgdaInductiveConstructor{id⟷}\AgdaSpace{}%
\AgdaOperator{\AgdaInductiveConstructor{⊗}}\AgdaSpace{}%
\AgdaInductiveConstructor{η}\AgdaSpace{}%
\AgdaInductiveConstructor{𝔽}\AgdaSpace{}%
\AgdaOperator{\AgdaFunction{⟩}}%
\>[42]\AgdaFunction{𝔹}\AgdaSpace{}%
\AgdaOperator{\AgdaInductiveConstructor{×ᵤ}}\AgdaSpace{}%
\AgdaSymbol{(}\AgdaFunction{𝔹}\AgdaSpace{}%
\AgdaOperator{\AgdaInductiveConstructor{×ᵤ}}\AgdaSpace{}%
\AgdaOperator{\AgdaInductiveConstructor{𝟙/}}\AgdaSpace{}%
\AgdaInductiveConstructor{𝔽}\AgdaSymbol{)}\<%
\\
\>[6]\AgdaOperator{\AgdaFunction{⟷⟨}}\AgdaSpace{}%
\AgdaInductiveConstructor{assocl⋆}\AgdaSpace{}%
\AgdaOperator{\AgdaFunction{⟩}}%
\>[42]\AgdaSymbol{(}\AgdaFunction{𝔹}\AgdaSpace{}%
\AgdaOperator{\AgdaInductiveConstructor{×ᵤ}}\AgdaSpace{}%
\AgdaFunction{𝔹}\AgdaSymbol{)}\AgdaSpace{}%
\AgdaOperator{\AgdaInductiveConstructor{×ᵤ}}\AgdaSpace{}%
\AgdaOperator{\AgdaInductiveConstructor{𝟙/}}\AgdaSpace{}%
\AgdaInductiveConstructor{𝔽}\<%
\\
\>[6]\AgdaOperator{\AgdaFunction{⟷⟨}}\AgdaSpace{}%
\AgdaSymbol{(}\AgdaFunction{CNOT}\AgdaSpace{}%
\AgdaOperator{\AgdaInductiveConstructor{⊚}}\AgdaSpace{}%
\AgdaFunction{CNOT'}\AgdaSpace{}%
\AgdaOperator{\AgdaInductiveConstructor{⊚}}\AgdaSpace{}%
\AgdaInductiveConstructor{swap⋆}\AgdaSymbol{)}\AgdaSpace{}%
\AgdaOperator{\AgdaInductiveConstructor{⊗}}\AgdaSpace{}%
\AgdaInductiveConstructor{id⟷}\AgdaSpace{}%
\AgdaOperator{\AgdaFunction{⟩}}%
\>[42]\AgdaSymbol{(}\AgdaFunction{𝔹}\AgdaSpace{}%
\AgdaOperator{\AgdaInductiveConstructor{×ᵤ}}\AgdaSpace{}%
\AgdaFunction{𝔹}\AgdaSymbol{)}\AgdaSpace{}%
\AgdaOperator{\AgdaInductiveConstructor{×ᵤ}}\AgdaSpace{}%
\AgdaOperator{\AgdaInductiveConstructor{𝟙/}}\AgdaSpace{}%
\AgdaInductiveConstructor{𝔽}\<%
\\
\>[6]\AgdaOperator{\AgdaFunction{⟷⟨}}\AgdaSpace{}%
\AgdaInductiveConstructor{assocr⋆}\AgdaSpace{}%
\AgdaOperator{\AgdaFunction{⟩}}%
\>[42]\AgdaFunction{𝔹}\AgdaSpace{}%
\AgdaOperator{\AgdaInductiveConstructor{×ᵤ}}\AgdaSpace{}%
\AgdaSymbol{(}\AgdaFunction{𝔹}\AgdaSpace{}%
\AgdaOperator{\AgdaInductiveConstructor{×ᵤ}}\AgdaSpace{}%
\AgdaOperator{\AgdaInductiveConstructor{𝟙/}}\AgdaSpace{}%
\AgdaInductiveConstructor{𝔽}\AgdaSymbol{)}\<%
\\
\>[6]\AgdaOperator{\AgdaFunction{⟷⟨}}\AgdaSpace{}%
\AgdaInductiveConstructor{id⟷}\AgdaSpace{}%
\AgdaOperator{\AgdaInductiveConstructor{⊗}}\AgdaSpace{}%
\AgdaInductiveConstructor{ε}\AgdaSpace{}%
\AgdaInductiveConstructor{𝔽}\AgdaSpace{}%
\AgdaOperator{\AgdaFunction{⟩}}%
\>[42]\AgdaFunction{𝔹}\AgdaSpace{}%
\AgdaOperator{\AgdaInductiveConstructor{×ᵤ}}\AgdaSpace{}%
\AgdaInductiveConstructor{𝟙}\<%
\\
\>[6]\AgdaOperator{\AgdaFunction{⟷⟨}}\AgdaSpace{}%
\AgdaInductiveConstructor{unite⋆r}\AgdaSpace{}%
\AgdaOperator{\AgdaFunction{⟩}}%
\>[42]\AgdaFunction{𝔹}\AgdaSpace{}%
\AgdaOperator{\AgdaFunction{□}}\<%
\end{code}}
\newcommand{\PIFDrevprod}{%
\begin{code}%
\>[0]\AgdaFunction{rev×}%
\>[1388I]\AgdaSymbol{:}\AgdaSpace{}%
\AgdaSymbol{\{}\AgdaBound{A}\AgdaSpace{}%
\AgdaBound{B}\AgdaSpace{}%
\AgdaSymbol{:}\AgdaSpace{}%
\AgdaDatatype{𝕌}\AgdaSymbol{\}}\AgdaSpace{}%
\AgdaSymbol{→}\AgdaSpace{}%
\AgdaSymbol{(}\AgdaBound{a}\AgdaSpace{}%
\AgdaSymbol{:}\AgdaSpace{}%
\AgdaOperator{\AgdaFunction{⟦}}\AgdaSpace{}%
\AgdaBound{A}\AgdaSpace{}%
\AgdaOperator{\AgdaFunction{⟧}}\AgdaSymbol{)}\AgdaSpace{}%
\AgdaSymbol{(}\AgdaBound{b}\AgdaSpace{}%
\AgdaSymbol{:}\AgdaSpace{}%
\AgdaOperator{\AgdaFunction{⟦}}\AgdaSpace{}%
\AgdaBound{B}\AgdaSpace{}%
\AgdaOperator{\AgdaFunction{⟧}}\AgdaSymbol{)}\<%
\\
\>[.][@{}l@{}]\<[1388I]%
\>[5]\AgdaSymbol{→}\AgdaSpace{}%
\AgdaOperator{\AgdaInductiveConstructor{𝟙/}}\AgdaSpace{}%
\AgdaSymbol{(}\AgdaBound{a}\AgdaSpace{}%
\AgdaOperator{\AgdaInductiveConstructor{,}}\AgdaSpace{}%
\AgdaBound{b}\AgdaSymbol{)}\AgdaSpace{}%
\AgdaOperator{\AgdaDatatype{⟷}}\AgdaSpace{}%
\AgdaOperator{\AgdaInductiveConstructor{𝟙/}}\AgdaSpace{}%
\AgdaBound{a}\AgdaSpace{}%
\AgdaOperator{\AgdaInductiveConstructor{×ᵤ}}\AgdaSpace{}%
\AgdaOperator{\AgdaInductiveConstructor{𝟙/}}\AgdaSpace{}%
\AgdaBound{b}\<%
\\
\>[0]\AgdaFunction{rev×}%
\>[1414I]\AgdaSymbol{\{}\AgdaBound{A}\AgdaSymbol{\}}\AgdaSpace{}%
\AgdaSymbol{\{}\AgdaBound{B}\AgdaSymbol{\}}\AgdaSpace{}%
\AgdaBound{a}\AgdaSpace{}%
\AgdaBound{b}\AgdaSpace{}%
\AgdaSymbol{=}\<%
\\
\>[.][@{}l@{}]\<[1414I]%
\>[5]\AgdaOperator{\AgdaInductiveConstructor{𝟙/}}\AgdaSpace{}%
\AgdaSymbol{(}\AgdaBound{a}\AgdaSpace{}%
\AgdaOperator{\AgdaInductiveConstructor{,}}\AgdaSpace{}%
\AgdaBound{b}\AgdaSymbol{)}\<%
\\
\>[0][@{}l@{\AgdaIndent{0}}]%
\>[2]\AgdaOperator{\AgdaFunction{⟷⟨}}%
\>[1422I]\AgdaInductiveConstructor{uniti⋆l}\AgdaSpace{}%
\AgdaOperator{\AgdaInductiveConstructor{⊚}}\AgdaSpace{}%
\AgdaInductiveConstructor{uniti⋆l}\AgdaSpace{}%
\AgdaOperator{\AgdaInductiveConstructor{⊚}}\AgdaSpace{}%
\AgdaInductiveConstructor{assocl⋆}\AgdaSpace{}%
\AgdaOperator{\AgdaFunction{⟩}}\<%
\\
\>[.][@{}l@{}]\<[1422I]%
\>[5]\AgdaSymbol{(}\AgdaInductiveConstructor{𝟙}\AgdaSpace{}%
\AgdaOperator{\AgdaInductiveConstructor{×ᵤ}}\AgdaSpace{}%
\AgdaInductiveConstructor{𝟙}\AgdaSymbol{)}\AgdaSpace{}%
\AgdaOperator{\AgdaInductiveConstructor{×ᵤ}}\AgdaSpace{}%
\AgdaOperator{\AgdaInductiveConstructor{𝟙/}}\AgdaSpace{}%
\AgdaSymbol{(}\AgdaBound{a}\AgdaSpace{}%
\AgdaOperator{\AgdaInductiveConstructor{,}}\AgdaSpace{}%
\AgdaBound{b}\AgdaSymbol{)}\<%
\\
\>[2]\AgdaOperator{\AgdaFunction{⟷⟨}}%
\>[1435I]\AgdaSymbol{(}\AgdaInductiveConstructor{η}\AgdaSpace{}%
\AgdaBound{a}\AgdaSpace{}%
\AgdaOperator{\AgdaInductiveConstructor{⊗}}\AgdaSpace{}%
\AgdaInductiveConstructor{η}\AgdaSpace{}%
\AgdaBound{b}\AgdaSymbol{)}\AgdaSpace{}%
\AgdaOperator{\AgdaInductiveConstructor{⊗}}\AgdaSpace{}%
\AgdaInductiveConstructor{id⟷}\AgdaSpace{}%
\AgdaOperator{\AgdaFunction{⟩}}\<%
\\
\>[.][@{}l@{}]\<[1435I]%
\>[5]\AgdaSymbol{((}\AgdaBound{A}\AgdaSpace{}%
\AgdaOperator{\AgdaInductiveConstructor{×ᵤ}}\AgdaSpace{}%
\AgdaOperator{\AgdaInductiveConstructor{𝟙/}}\AgdaSpace{}%
\AgdaBound{a}\AgdaSymbol{)}\AgdaSpace{}%
\AgdaOperator{\AgdaInductiveConstructor{×ᵤ}}\AgdaSpace{}%
\AgdaSymbol{(}\AgdaBound{B}\AgdaSpace{}%
\AgdaOperator{\AgdaInductiveConstructor{×ᵤ}}\AgdaSpace{}%
\AgdaOperator{\AgdaInductiveConstructor{𝟙/}}\AgdaSpace{}%
\AgdaBound{b}\AgdaSymbol{))}\AgdaSpace{}%
\AgdaOperator{\AgdaInductiveConstructor{×ᵤ}}\AgdaSpace{}%
\AgdaOperator{\AgdaInductiveConstructor{𝟙/}}\AgdaSpace{}%
\AgdaSymbol{(}\AgdaBound{a}\AgdaSpace{}%
\AgdaOperator{\AgdaInductiveConstructor{,}}\AgdaSpace{}%
\AgdaBound{b}\AgdaSymbol{)}\<%
\\
\>[2]\AgdaOperator{\AgdaFunction{⟷⟨}}%
\>[1456I]\AgdaSymbol{(}\AgdaFunction{shuffle}\AgdaSpace{}%
\AgdaOperator{\AgdaInductiveConstructor{⊗}}\AgdaSpace{}%
\AgdaInductiveConstructor{id⟷}\AgdaSymbol{)}\AgdaSpace{}%
\AgdaOperator{\AgdaInductiveConstructor{⊚}}\AgdaSpace{}%
\AgdaInductiveConstructor{assocr⋆}\AgdaSpace{}%
\AgdaOperator{\AgdaFunction{⟩}}\<%
\\
\>[.][@{}l@{}]\<[1456I]%
\>[5]\AgdaSymbol{(}\AgdaOperator{\AgdaInductiveConstructor{𝟙/}}\AgdaSpace{}%
\AgdaBound{a}\AgdaSpace{}%
\AgdaOperator{\AgdaInductiveConstructor{×ᵤ}}\AgdaSpace{}%
\AgdaOperator{\AgdaInductiveConstructor{𝟙/}}\AgdaSpace{}%
\AgdaBound{b}\AgdaSymbol{)}\AgdaSpace{}%
\AgdaOperator{\AgdaInductiveConstructor{×ᵤ}}\AgdaSpace{}%
\AgdaSymbol{((}\AgdaBound{A}\AgdaSpace{}%
\AgdaOperator{\AgdaInductiveConstructor{×ᵤ}}\AgdaSpace{}%
\AgdaBound{B}\AgdaSymbol{)}\AgdaSpace{}%
\AgdaOperator{\AgdaInductiveConstructor{×ᵤ}}\AgdaSpace{}%
\AgdaOperator{\AgdaInductiveConstructor{𝟙/}}\AgdaSpace{}%
\AgdaSymbol{(}\AgdaBound{a}\AgdaSpace{}%
\AgdaOperator{\AgdaInductiveConstructor{,}}\AgdaSpace{}%
\AgdaBound{b}\AgdaSymbol{))}\<%
\\
\>[2]\AgdaOperator{\AgdaFunction{⟷⟨}}%
\>[1475I]\AgdaInductiveConstructor{id⟷}\AgdaSpace{}%
\AgdaOperator{\AgdaInductiveConstructor{⊗}}\AgdaSpace{}%
\AgdaInductiveConstructor{ε}\AgdaSpace{}%
\AgdaSymbol{(}\AgdaBound{a}\AgdaSpace{}%
\AgdaOperator{\AgdaInductiveConstructor{,}}\AgdaSpace{}%
\AgdaBound{b}\AgdaSymbol{)}\AgdaSpace{}%
\AgdaOperator{\AgdaFunction{⟩}}\<%
\\
\>[.][@{}l@{}]\<[1475I]%
\>[5]\AgdaSymbol{(}\AgdaOperator{\AgdaInductiveConstructor{𝟙/}}\AgdaSpace{}%
\AgdaBound{a}\AgdaSpace{}%
\AgdaOperator{\AgdaInductiveConstructor{×ᵤ}}\AgdaSpace{}%
\AgdaOperator{\AgdaInductiveConstructor{𝟙/}}\AgdaSpace{}%
\AgdaBound{b}\AgdaSymbol{)}\AgdaSpace{}%
\AgdaOperator{\AgdaInductiveConstructor{×ᵤ}}\AgdaSpace{}%
\AgdaInductiveConstructor{𝟙}\<%
\\
\>[2]\AgdaOperator{\AgdaFunction{⟷⟨}}%
\>[1488I]\AgdaInductiveConstructor{unite⋆r}\AgdaSpace{}%
\AgdaOperator{\AgdaFunction{⟩}}\<%
\\
\>[.][@{}l@{}]\<[1488I]%
\>[5]\AgdaOperator{\AgdaInductiveConstructor{𝟙/}}\AgdaSpace{}%
\AgdaBound{a}\AgdaSpace{}%
\AgdaOperator{\AgdaInductiveConstructor{×ᵤ}}\AgdaSpace{}%
\AgdaOperator{\AgdaInductiveConstructor{𝟙/}}\AgdaSpace{}%
\AgdaBound{b}\AgdaSpace{}%
\AgdaOperator{\AgdaFunction{□}}\<%
\\
\>[2]\AgdaKeyword{where}\<%
\\
\>[2][@{}l@{\AgdaIndent{0}}]%
\>[4]\AgdaFunction{shuffle}\AgdaSpace{}%
\AgdaSymbol{:}\AgdaSpace{}%
\AgdaSymbol{\{}\AgdaBound{A}\AgdaSpace{}%
\AgdaBound{B}\AgdaSpace{}%
\AgdaBound{C}\AgdaSpace{}%
\AgdaBound{D}\AgdaSpace{}%
\AgdaSymbol{:}\AgdaSpace{}%
\AgdaDatatype{𝕌}\AgdaSymbol{\}}\AgdaSpace{}%
\AgdaSymbol{→}\AgdaSpace{}%
\AgdaSymbol{(}\AgdaBound{A}\AgdaSpace{}%
\AgdaOperator{\AgdaInductiveConstructor{×ᵤ}}\AgdaSpace{}%
\AgdaBound{B}\AgdaSymbol{)}\AgdaSpace{}%
\AgdaOperator{\AgdaInductiveConstructor{×ᵤ}}\AgdaSpace{}%
\AgdaSymbol{(}\AgdaBound{C}\AgdaSpace{}%
\AgdaOperator{\AgdaInductiveConstructor{×ᵤ}}\AgdaSpace{}%
\AgdaBound{D}\AgdaSymbol{)}\AgdaSpace{}%
\AgdaOperator{\AgdaDatatype{⟷}}\AgdaSpace{}%
\AgdaSymbol{(}\AgdaBound{B}\AgdaSpace{}%
\AgdaOperator{\AgdaInductiveConstructor{×ᵤ}}\AgdaSpace{}%
\AgdaBound{D}\AgdaSymbol{)}\AgdaSpace{}%
\AgdaOperator{\AgdaInductiveConstructor{×ᵤ}}\AgdaSpace{}%
\AgdaSymbol{(}\AgdaBound{A}\AgdaSpace{}%
\AgdaOperator{\AgdaInductiveConstructor{×ᵤ}}\AgdaSpace{}%
\AgdaBound{C}\AgdaSymbol{)}\<%
\\
\>[4]\AgdaFunction{shuffle}\AgdaSpace{}%
\AgdaSymbol{=}%
\>[1519I]\AgdaSymbol{(}\AgdaInductiveConstructor{swap⋆}\AgdaSpace{}%
\AgdaOperator{\AgdaInductiveConstructor{⊗}}\AgdaSpace{}%
\AgdaInductiveConstructor{swap⋆}\AgdaSymbol{)}\AgdaSpace{}%
\AgdaOperator{\AgdaInductiveConstructor{⊚}}\AgdaSpace{}%
\AgdaInductiveConstructor{assocr⋆}\AgdaSpace{}%
\AgdaOperator{\AgdaInductiveConstructor{⊚}}\<%
\\
\>[.][@{}l@{}]\<[1519I]%
\>[14]\AgdaSymbol{(}\AgdaInductiveConstructor{id⟷}\AgdaSpace{}%
\AgdaOperator{\AgdaInductiveConstructor{⊗}}\AgdaSpace{}%
\AgdaSymbol{(}\AgdaInductiveConstructor{assocl⋆}\AgdaSpace{}%
\AgdaOperator{\AgdaInductiveConstructor{⊚}}\AgdaSpace{}%
\AgdaSymbol{(}\AgdaInductiveConstructor{swap⋆}\AgdaSpace{}%
\AgdaOperator{\AgdaInductiveConstructor{⊗}}\AgdaSpace{}%
\AgdaInductiveConstructor{id⟷}\AgdaSymbol{)}\AgdaSpace{}%
\AgdaOperator{\AgdaInductiveConstructor{⊚}}\AgdaSpace{}%
\AgdaInductiveConstructor{assocr⋆}\AgdaSymbol{))}\AgdaSpace{}%
\AgdaOperator{\AgdaInductiveConstructor{⊚}}\<%
\\
\>[14]\AgdaInductiveConstructor{assocl⋆}\<%
\end{code}}
\newcommand{\Texample}{%
\begin{code}%
\>[0]\AgdaKeyword{data}\AgdaSpace{}%
\AgdaDatatype{T}\AgdaSpace{}%
\AgdaSymbol{:}\AgdaSpace{}%
\AgdaPrimitiveType{Set}\AgdaSpace{}%
\AgdaKeyword{where}\<%
\\
\>[0][@{}l@{\AgdaIndent{0}}]%
\>[2]\AgdaInductiveConstructor{N}%
\>[5]\AgdaSymbol{:}\AgdaSpace{}%
\AgdaDatatype{T}\<%
\\
\>[2]\AgdaInductiveConstructor{B}%
\>[5]\AgdaSymbol{:}\AgdaSpace{}%
\AgdaDatatype{T}\<%
\\
\\[\AgdaEmptyExtraSkip]%
\>[0]\AgdaOperator{\AgdaFunction{⟦\AgdaUnderscore{}⟧}}\AgdaSpace{}%
\AgdaSymbol{:}\AgdaSpace{}%
\AgdaDatatype{T}\AgdaSpace{}%
\AgdaSymbol{→}\AgdaSpace{}%
\AgdaPrimitiveType{Set}\<%
\\
\>[0]\AgdaOperator{\AgdaFunction{⟦}}\AgdaSpace{}%
\AgdaInductiveConstructor{N}\AgdaSpace{}%
\AgdaOperator{\AgdaFunction{⟧}}%
\>[7]\AgdaSymbol{=}\AgdaSpace{}%
\AgdaDatatype{ℕ}\<%
\\
\>[0]\AgdaOperator{\AgdaFunction{⟦}}\AgdaSpace{}%
\AgdaInductiveConstructor{B}\AgdaSpace{}%
\AgdaOperator{\AgdaFunction{⟧}}%
\>[7]\AgdaSymbol{=}\AgdaSpace{}%
\AgdaDatatype{Bool}\<%
\\
\\[\AgdaEmptyExtraSkip]%
\>[0]\AgdaKeyword{data}\AgdaSpace{}%
\AgdaDatatype{Fun}\AgdaSpace{}%
\AgdaSymbol{:}\AgdaSpace{}%
\AgdaDatatype{T}\AgdaSpace{}%
\AgdaSymbol{→}\AgdaSpace{}%
\AgdaDatatype{T}\AgdaSpace{}%
\AgdaSymbol{→}\AgdaSpace{}%
\AgdaPrimitiveType{Set}\AgdaSpace{}%
\AgdaKeyword{where}\<%
\\
\>[0][@{}l@{\AgdaIndent{0}}]%
\>[2]\AgdaInductiveConstructor{square}%
\>[11]\AgdaSymbol{:}\AgdaSpace{}%
\AgdaDatatype{Fun}\AgdaSpace{}%
\AgdaInductiveConstructor{N}\AgdaSpace{}%
\AgdaInductiveConstructor{N}\<%
\\
\>[2]\AgdaInductiveConstructor{isZero}%
\>[11]\AgdaSymbol{:}\AgdaSpace{}%
\AgdaDatatype{Fun}\AgdaSpace{}%
\AgdaInductiveConstructor{N}\AgdaSpace{}%
\AgdaInductiveConstructor{B}\<%
\\
\>[2]\AgdaInductiveConstructor{compose}%
\>[11]\AgdaSymbol{:}\AgdaSpace{}%
\AgdaSymbol{\{}\AgdaBound{a}\AgdaSpace{}%
\AgdaBound{b}\AgdaSpace{}%
\AgdaBound{c}\AgdaSpace{}%
\AgdaSymbol{:}\AgdaSpace{}%
\AgdaDatatype{T}\AgdaSymbol{\}}\AgdaSpace{}%
\AgdaSymbol{→}\AgdaSpace{}%
\AgdaDatatype{Fun}\AgdaSpace{}%
\AgdaBound{b}\AgdaSpace{}%
\AgdaBound{c}\AgdaSpace{}%
\AgdaSymbol{→}\AgdaSpace{}%
\AgdaDatatype{Fun}\AgdaSpace{}%
\AgdaBound{a}\AgdaSpace{}%
\AgdaBound{b}\AgdaSpace{}%
\AgdaSymbol{→}\AgdaSpace{}%
\AgdaDatatype{Fun}\AgdaSpace{}%
\AgdaBound{a}\AgdaSpace{}%
\AgdaBound{c}\<%
\\
\\[\AgdaEmptyExtraSkip]%
\>[0]\AgdaFunction{eval}\AgdaSpace{}%
\AgdaSymbol{:}\AgdaSpace{}%
\AgdaSymbol{\{}\AgdaBound{a}\AgdaSpace{}%
\AgdaBound{b}\AgdaSpace{}%
\AgdaSymbol{:}\AgdaSpace{}%
\AgdaDatatype{T}\AgdaSymbol{\}}\AgdaSpace{}%
\AgdaSymbol{→}\AgdaSpace{}%
\AgdaDatatype{Fun}\AgdaSpace{}%
\AgdaBound{a}\AgdaSpace{}%
\AgdaBound{b}\AgdaSpace{}%
\AgdaSymbol{→}\AgdaSpace{}%
\AgdaOperator{\AgdaFunction{⟦}}\AgdaSpace{}%
\AgdaBound{a}\AgdaSpace{}%
\AgdaOperator{\AgdaFunction{⟧}}\AgdaSpace{}%
\AgdaSymbol{→}\AgdaSpace{}%
\AgdaOperator{\AgdaFunction{⟦}}\AgdaSpace{}%
\AgdaBound{b}\AgdaSpace{}%
\AgdaOperator{\AgdaFunction{⟧}}\<%
\\
\>[0]\AgdaFunction{eval}\AgdaSpace{}%
\AgdaInductiveConstructor{square}\AgdaSpace{}%
\AgdaBound{n}\AgdaSpace{}%
\AgdaSymbol{=}\AgdaSpace{}%
\AgdaBound{n}\AgdaSpace{}%
\AgdaOperator{\AgdaPrimitive{*}}\AgdaSpace{}%
\AgdaBound{n}\<%
\\
\>[0]\AgdaFunction{eval}\AgdaSpace{}%
\AgdaInductiveConstructor{isZero}\AgdaSpace{}%
\AgdaNumber{0}\AgdaSpace{}%
\AgdaSymbol{=}\AgdaSpace{}%
\AgdaInductiveConstructor{true}\<%
\\
\>[0]\AgdaFunction{eval}\AgdaSpace{}%
\AgdaInductiveConstructor{isZero}\AgdaSpace{}%
\AgdaSymbol{(}\AgdaInductiveConstructor{suc}\AgdaSpace{}%
\AgdaSymbol{\AgdaUnderscore{})}\AgdaSpace{}%
\AgdaSymbol{=}\AgdaSpace{}%
\AgdaInductiveConstructor{false}\<%
\\
\>[0]\AgdaFunction{eval}\AgdaSpace{}%
\AgdaSymbol{(}\AgdaInductiveConstructor{compose}\AgdaSpace{}%
\AgdaBound{g}\AgdaSpace{}%
\AgdaBound{f}\AgdaSymbol{)}\AgdaSpace{}%
\AgdaBound{v}\AgdaSpace{}%
\AgdaSymbol{=}\AgdaSpace{}%
\AgdaFunction{eval}\AgdaSpace{}%
\AgdaBound{g}\AgdaSpace{}%
\AgdaSymbol{(}\AgdaFunction{eval}\AgdaSpace{}%
\AgdaBound{f}\AgdaSpace{}%
\AgdaBound{v}\AgdaSymbol{)}\<%
\end{code}}
\newcommand{\Texamplecont}{%
\begin{code}%
\>[0]\AgdaKeyword{data}\AgdaSpace{}%
\AgdaDatatype{T∙}\AgdaSpace{}%
\AgdaSymbol{:}\AgdaSpace{}%
\AgdaPrimitiveType{Set}\AgdaSpace{}%
\AgdaKeyword{where}\<%
\\
\>[0][@{}l@{\AgdaIndent{0}}]%
\>[2]\AgdaOperator{\AgdaInductiveConstructor{\AgdaUnderscore{}\#\AgdaUnderscore{}}}\AgdaSpace{}%
\AgdaSymbol{:}\AgdaSpace{}%
\AgdaSymbol{(}\AgdaBound{a}\AgdaSpace{}%
\AgdaSymbol{:}\AgdaSpace{}%
\AgdaDatatype{T}\AgdaSymbol{)}\AgdaSpace{}%
\AgdaSymbol{→}\AgdaSpace{}%
\AgdaSymbol{(}\AgdaBound{v}\AgdaSpace{}%
\AgdaSymbol{:}\AgdaSpace{}%
\AgdaOperator{\AgdaFunction{⟦}}\AgdaSpace{}%
\AgdaBound{a}\AgdaSpace{}%
\AgdaOperator{\AgdaFunction{⟧}}\AgdaSymbol{)}\AgdaSpace{}%
\AgdaSymbol{→}\AgdaSpace{}%
\AgdaDatatype{T∙}\<%
\\
\\[\AgdaEmptyExtraSkip]%
\>[0]\AgdaOperator{\AgdaFunction{⟦\AgdaUnderscore{}⟧∙}}\AgdaSpace{}%
\AgdaSymbol{:}\AgdaSpace{}%
\AgdaDatatype{T∙}\AgdaSpace{}%
\AgdaSymbol{→}\AgdaSpace{}%
\AgdaFunction{Σ[}\AgdaSpace{}%
\AgdaBound{A}\AgdaSpace{}%
\AgdaFunction{∈}\AgdaSpace{}%
\AgdaPrimitiveType{Set}\AgdaSpace{}%
\AgdaFunction{]}\AgdaSpace{}%
\AgdaBound{A}\<%
\\
\>[0]\AgdaOperator{\AgdaFunction{⟦}}\AgdaSpace{}%
\AgdaBound{T}\AgdaSpace{}%
\AgdaOperator{\AgdaInductiveConstructor{\#}}\AgdaSpace{}%
\AgdaBound{v}\AgdaSpace{}%
\AgdaOperator{\AgdaFunction{⟧∙}}\AgdaSpace{}%
\AgdaSymbol{=}\AgdaSpace{}%
\AgdaOperator{\AgdaFunction{⟦}}\AgdaSpace{}%
\AgdaBound{T}\AgdaSpace{}%
\AgdaOperator{\AgdaFunction{⟧}}\AgdaSpace{}%
\AgdaOperator{\AgdaInductiveConstructor{,}}\AgdaSpace{}%
\AgdaBound{v}\<%
\\
\\[\AgdaEmptyExtraSkip]%
\>[0]\AgdaKeyword{data}\AgdaSpace{}%
\AgdaDatatype{Fun∙}\AgdaSpace{}%
\AgdaSymbol{:}\AgdaSpace{}%
\AgdaDatatype{T∙}\AgdaSpace{}%
\AgdaSymbol{→}\AgdaSpace{}%
\AgdaDatatype{T∙}\AgdaSpace{}%
\AgdaSymbol{→}\AgdaSpace{}%
\AgdaPrimitiveType{Set}\AgdaSpace{}%
\AgdaKeyword{where}\<%
\\
\>[0][@{}l@{\AgdaIndent{0}}]%
\>[2]\AgdaInductiveConstructor{lift}\AgdaSpace{}%
\AgdaSymbol{:}\AgdaSpace{}%
\AgdaSymbol{\{}\AgdaBound{a}\AgdaSpace{}%
\AgdaBound{b}\AgdaSpace{}%
\AgdaSymbol{:}\AgdaSpace{}%
\AgdaDatatype{T}\AgdaSymbol{\}}\AgdaSpace{}%
\AgdaSymbol{\{}\AgdaBound{v}\AgdaSpace{}%
\AgdaSymbol{:}\AgdaSpace{}%
\AgdaOperator{\AgdaFunction{⟦}}\AgdaSpace{}%
\AgdaBound{a}\AgdaSpace{}%
\AgdaOperator{\AgdaFunction{⟧}}\AgdaSymbol{\}}\AgdaSpace{}%
\AgdaSymbol{→}\AgdaSpace{}%
\AgdaSymbol{(}\AgdaBound{f}\AgdaSpace{}%
\AgdaSymbol{:}\AgdaSpace{}%
\AgdaDatatype{Fun}\AgdaSpace{}%
\AgdaBound{a}\AgdaSpace{}%
\AgdaBound{b}\AgdaSymbol{)}\AgdaSpace{}%
\AgdaSymbol{→}\AgdaSpace{}%
\AgdaDatatype{Fun∙}\AgdaSpace{}%
\AgdaSymbol{(}\AgdaBound{a}\AgdaSpace{}%
\AgdaOperator{\AgdaInductiveConstructor{\#}}\AgdaSpace{}%
\AgdaBound{v}\AgdaSymbol{)}\AgdaSpace{}%
\AgdaSymbol{(}\AgdaBound{b}\AgdaSpace{}%
\AgdaOperator{\AgdaInductiveConstructor{\#}}\AgdaSpace{}%
\AgdaSymbol{(}\AgdaFunction{eval}\AgdaSpace{}%
\AgdaBound{f}\AgdaSpace{}%
\AgdaBound{v}\AgdaSymbol{))}\<%
\end{code}}
\newcommand{\Texampletest}{%
\begin{code}%
\>[0]\AgdaFunction{test1}\AgdaSpace{}%
\AgdaSymbol{:}\AgdaSpace{}%
\AgdaDatatype{Fun∙}\AgdaSpace{}%
\AgdaSymbol{(}\AgdaInductiveConstructor{N}\AgdaSpace{}%
\AgdaOperator{\AgdaInductiveConstructor{\#}}\AgdaSpace{}%
\AgdaNumber{3}\AgdaSymbol{)}\AgdaSpace{}%
\AgdaSymbol{(}\AgdaInductiveConstructor{B}\AgdaSpace{}%
\AgdaOperator{\AgdaInductiveConstructor{\#}}\AgdaSpace{}%
\AgdaInductiveConstructor{false}\AgdaSymbol{)}\<%
\\
\>[0]\AgdaFunction{test1}\AgdaSpace{}%
\AgdaSymbol{=}\AgdaSpace{}%
\AgdaInductiveConstructor{lift}\AgdaSpace{}%
\AgdaSymbol{(}\AgdaInductiveConstructor{compose}\AgdaSpace{}%
\AgdaInductiveConstructor{isZero}\AgdaSpace{}%
\AgdaInductiveConstructor{square}\AgdaSymbol{)}\<%
\\
\\[\AgdaEmptyExtraSkip]%
\>[0]\AgdaFunction{test2}\AgdaSpace{}%
\AgdaSymbol{:}\AgdaSpace{}%
\AgdaDatatype{Fun∙}\AgdaSpace{}%
\AgdaSymbol{(}\AgdaInductiveConstructor{N}\AgdaSpace{}%
\AgdaOperator{\AgdaInductiveConstructor{\#}}\AgdaSpace{}%
\AgdaNumber{0}\AgdaSymbol{)}\AgdaSpace{}%
\AgdaSymbol{(}\AgdaInductiveConstructor{B}\AgdaSpace{}%
\AgdaOperator{\AgdaInductiveConstructor{\#}}\AgdaSpace{}%
\AgdaInductiveConstructor{true}\AgdaSymbol{)}\<%
\\
\>[0]\AgdaFunction{test2}\AgdaSpace{}%
\AgdaSymbol{=}\AgdaSpace{}%
\AgdaInductiveConstructor{lift}\AgdaSpace{}%
\AgdaSymbol{(}\AgdaInductiveConstructor{compose}\AgdaSpace{}%
\AgdaInductiveConstructor{isZero}\AgdaSpace{}%
\AgdaInductiveConstructor{square}\AgdaSymbol{)}\<%
\\
\\[\AgdaEmptyExtraSkip]%
\>[0]\AgdaFunction{test3}\AgdaSpace{}%
\AgdaSymbol{:}\AgdaSpace{}%
\AgdaSymbol{∀}\AgdaSpace{}%
\AgdaSymbol{\{}\AgdaBound{n}\AgdaSymbol{\}}\AgdaSpace{}%
\AgdaSymbol{→}\AgdaSpace{}%
\AgdaDatatype{Fun∙}\AgdaSpace{}%
\AgdaSymbol{(}\AgdaInductiveConstructor{N}\AgdaSpace{}%
\AgdaOperator{\AgdaInductiveConstructor{\#}}\AgdaSpace{}%
\AgdaSymbol{(}\AgdaInductiveConstructor{suc}\AgdaSpace{}%
\AgdaBound{n}\AgdaSymbol{))}\AgdaSpace{}%
\AgdaSymbol{(}\AgdaInductiveConstructor{B}\AgdaSpace{}%
\AgdaOperator{\AgdaInductiveConstructor{\#}}\AgdaSpace{}%
\AgdaInductiveConstructor{false}\AgdaSymbol{)}\<%
\\
\>[0]\AgdaFunction{test3}\AgdaSpace{}%
\AgdaSymbol{=}\AgdaSpace{}%
\AgdaInductiveConstructor{lift}\AgdaSpace{}%
\AgdaSymbol{(}\AgdaInductiveConstructor{compose}\AgdaSpace{}%
\AgdaInductiveConstructor{isZero}\AgdaSpace{}%
\AgdaInductiveConstructor{square}\AgdaSymbol{)}\<%
\end{code}}
\newcommand{\PIPFUdef}{%
\begin{code}%
\>[0]\AgdaFunction{Singleton}\AgdaSpace{}%
\AgdaSymbol{:}\AgdaSpace{}%
\AgdaSymbol{(}\AgdaBound{A}\AgdaSpace{}%
\AgdaSymbol{:}\AgdaSpace{}%
\AgdaPrimitiveType{Set}\AgdaSymbol{)}\AgdaSpace{}%
\AgdaSymbol{→}\AgdaSpace{}%
\AgdaSymbol{(}\AgdaBound{v}\AgdaSpace{}%
\AgdaSymbol{:}\AgdaSpace{}%
\AgdaBound{A}\AgdaSymbol{)}\AgdaSpace{}%
\AgdaSymbol{→}\AgdaSpace{}%
\AgdaPrimitiveType{Set}\<%
\\
\>[0]\AgdaFunction{Singleton}\AgdaSpace{}%
\AgdaBound{A}\AgdaSpace{}%
\AgdaBound{v}\AgdaSpace{}%
\AgdaSymbol{=}\AgdaSpace{}%
\AgdaFunction{∃}\AgdaSpace{}%
\AgdaSymbol{(λ}\AgdaSpace{}%
\AgdaBound{●}\AgdaSpace{}%
\AgdaSymbol{→}\AgdaSpace{}%
\AgdaBound{v}\AgdaSpace{}%
\AgdaOperator{\AgdaDatatype{≡}}\AgdaSpace{}%
\AgdaBound{●}\AgdaSymbol{)}\<%
\\
\\[\AgdaEmptyExtraSkip]%
\>[0]\AgdaFunction{Recip}\AgdaSpace{}%
\AgdaSymbol{:}\AgdaSpace{}%
\AgdaSymbol{(}\AgdaBound{A}\AgdaSpace{}%
\AgdaSymbol{:}\AgdaSpace{}%
\AgdaPrimitiveType{Set}\AgdaSymbol{)}\AgdaSpace{}%
\AgdaSymbol{→}\AgdaSpace{}%
\AgdaSymbol{(}\AgdaBound{v}\AgdaSpace{}%
\AgdaSymbol{:}\AgdaSpace{}%
\AgdaBound{A}\AgdaSymbol{)}\AgdaSpace{}%
\AgdaSymbol{→}\AgdaSpace{}%
\AgdaPrimitiveType{Set}\<%
\\
\>[0]\AgdaFunction{Recip}\AgdaSpace{}%
\AgdaBound{A}\AgdaSpace{}%
\AgdaBound{v}\AgdaSpace{}%
\AgdaSymbol{=}\AgdaSpace{}%
\AgdaFunction{Singleton}\AgdaSpace{}%
\AgdaBound{A}\AgdaSpace{}%
\AgdaBound{v}\AgdaSpace{}%
\AgdaSymbol{→}\AgdaSpace{}%
\AgdaRecord{⊤}\<%
\\
\\[\AgdaEmptyExtraSkip]%
\>[0]\AgdaKeyword{data}\AgdaSpace{}%
\AgdaDatatype{∙𝕌}\AgdaSpace{}%
\AgdaSymbol{:}\AgdaSpace{}%
\AgdaPrimitiveType{Set}\AgdaSpace{}%
\AgdaKeyword{where}\<%
\\
\>[0][@{}l@{\AgdaIndent{0}}]%
\>[2]\AgdaOperator{\AgdaInductiveConstructor{\AgdaUnderscore{}\#\AgdaUnderscore{}}}%
\>[10]\AgdaSymbol{:}\AgdaSpace{}%
\AgdaSymbol{(}\AgdaBound{t}\AgdaSpace{}%
\AgdaSymbol{:}\AgdaSpace{}%
\AgdaDatatype{𝕌}\AgdaSymbol{)}\AgdaSpace{}%
\AgdaSymbol{→}\AgdaSpace{}%
\AgdaSymbol{(}\AgdaBound{v}\AgdaSpace{}%
\AgdaSymbol{:}\AgdaSpace{}%
\AgdaOperator{\AgdaFunction{⟦}}\AgdaSpace{}%
\AgdaBound{t}\AgdaSpace{}%
\AgdaOperator{\AgdaFunction{⟧}}\AgdaSymbol{)}\AgdaSpace{}%
\AgdaSymbol{→}\AgdaSpace{}%
\AgdaDatatype{∙𝕌}\<%
\\
\>[2]\AgdaOperator{\AgdaInductiveConstructor{\AgdaUnderscore{}∙×ᵤ\AgdaUnderscore{}}}%
\>[10]\AgdaSymbol{:}\AgdaSpace{}%
\AgdaDatatype{∙𝕌}\AgdaSpace{}%
\AgdaSymbol{→}\AgdaSpace{}%
\AgdaDatatype{∙𝕌}\AgdaSpace{}%
\AgdaSymbol{→}\AgdaSpace{}%
\AgdaDatatype{∙𝕌}\<%
\\
\>[2]\AgdaOperator{\AgdaInductiveConstructor{❰\AgdaUnderscore{}❱}}%
\>[10]\AgdaSymbol{:}\AgdaSpace{}%
\AgdaDatatype{∙𝕌}\AgdaSpace{}%
\AgdaSymbol{→}\AgdaSpace{}%
\AgdaDatatype{∙𝕌}\<%
\\
\>[2]\AgdaInductiveConstructor{∙𝟙/}%
\>[10]\AgdaSymbol{:}\AgdaSpace{}%
\AgdaDatatype{∙𝕌}\AgdaSpace{}%
\AgdaSymbol{→}\AgdaSpace{}%
\AgdaDatatype{∙𝕌}\<%
\\
\\[\AgdaEmptyExtraSkip]%
\>[0]\AgdaOperator{\AgdaFunction{∙⟦\AgdaUnderscore{}⟧}}\AgdaSpace{}%
\AgdaSymbol{:}\AgdaSpace{}%
\AgdaDatatype{∙𝕌}\AgdaSpace{}%
\AgdaSymbol{→}\AgdaSpace{}%
\AgdaFunction{Σ[}\AgdaSpace{}%
\AgdaBound{A}\AgdaSpace{}%
\AgdaFunction{∈}\AgdaSpace{}%
\AgdaPrimitiveType{Set}\AgdaSpace{}%
\AgdaFunction{]}\AgdaSpace{}%
\AgdaBound{A}\<%
\\
\>[0]\AgdaOperator{\AgdaFunction{∙⟦}}\AgdaSpace{}%
\AgdaBound{t}\AgdaSpace{}%
\AgdaOperator{\AgdaInductiveConstructor{\#}}\AgdaSpace{}%
\AgdaBound{v}\AgdaSpace{}%
\AgdaOperator{\AgdaFunction{⟧}}%
\>[17]\AgdaSymbol{=}\AgdaSpace{}%
\AgdaOperator{\AgdaFunction{⟦}}\AgdaSpace{}%
\AgdaBound{t}\AgdaSpace{}%
\AgdaOperator{\AgdaFunction{⟧}}\AgdaSpace{}%
\AgdaOperator{\AgdaInductiveConstructor{,}}\AgdaSpace{}%
\AgdaBound{v}\<%
\\
\>[0]\AgdaOperator{\AgdaFunction{∙⟦}}\AgdaSpace{}%
\AgdaBound{T₁}\AgdaSpace{}%
\AgdaOperator{\AgdaInductiveConstructor{∙×ᵤ}}\AgdaSpace{}%
\AgdaBound{T₂}\AgdaSpace{}%
\AgdaOperator{\AgdaFunction{⟧}}%
\>[17]\AgdaSymbol{=}%
\>[118I]\AgdaKeyword{let}%
\>[119I]\AgdaSymbol{(}\AgdaBound{t₁}\AgdaSpace{}%
\AgdaOperator{\AgdaInductiveConstructor{,}}\AgdaSpace{}%
\AgdaBound{v₁}\AgdaSymbol{)}\AgdaSpace{}%
\AgdaSymbol{=}\AgdaSpace{}%
\AgdaOperator{\AgdaFunction{∙⟦}}\AgdaSpace{}%
\AgdaBound{T₁}\AgdaSpace{}%
\AgdaOperator{\AgdaFunction{⟧}}\<%
\\
\>[.][@{}l@{}]\<[119I]%
\>[23]\AgdaSymbol{(}\AgdaBound{t₂}\AgdaSpace{}%
\AgdaOperator{\AgdaInductiveConstructor{,}}\AgdaSpace{}%
\AgdaBound{v₂}\AgdaSymbol{)}\AgdaSpace{}%
\AgdaSymbol{=}\AgdaSpace{}%
\AgdaOperator{\AgdaFunction{∙⟦}}\AgdaSpace{}%
\AgdaBound{T₂}\AgdaSpace{}%
\AgdaOperator{\AgdaFunction{⟧}}\<%
\\
\>[.][@{}l@{}]\<[118I]%
\>[19]\AgdaKeyword{in}%
\>[23]\AgdaSymbol{(}\AgdaBound{t₁}\AgdaSpace{}%
\AgdaOperator{\AgdaFunction{×}}\AgdaSpace{}%
\AgdaBound{t₂}\AgdaSymbol{)}\AgdaSpace{}%
\AgdaOperator{\AgdaInductiveConstructor{,}}\AgdaSpace{}%
\AgdaSymbol{(}\AgdaBound{v₁}\AgdaSpace{}%
\AgdaOperator{\AgdaInductiveConstructor{,}}\AgdaSpace{}%
\AgdaBound{v₂}\AgdaSymbol{)}\<%
\\
\>[0]\AgdaOperator{\AgdaFunction{∙⟦}}\AgdaSpace{}%
\AgdaOperator{\AgdaInductiveConstructor{❰}}\AgdaSpace{}%
\AgdaBound{T}\AgdaSpace{}%
\AgdaOperator{\AgdaInductiveConstructor{❱}}\AgdaSpace{}%
\AgdaOperator{\AgdaFunction{⟧}}%
\>[17]\AgdaSymbol{=}\AgdaSpace{}%
\AgdaKeyword{let}\AgdaSpace{}%
\AgdaSymbol{(}\AgdaBound{t}\AgdaSpace{}%
\AgdaOperator{\AgdaInductiveConstructor{,}}\AgdaSpace{}%
\AgdaBound{v}\AgdaSymbol{)}\AgdaSpace{}%
\AgdaSymbol{=}\AgdaSpace{}%
\AgdaOperator{\AgdaFunction{∙⟦}}\AgdaSpace{}%
\AgdaBound{T}\AgdaSpace{}%
\AgdaOperator{\AgdaFunction{⟧}}\AgdaSpace{}%
\AgdaKeyword{in}\AgdaSpace{}%
\AgdaFunction{Singleton}\AgdaSpace{}%
\AgdaBound{t}\AgdaSpace{}%
\AgdaBound{v}\AgdaSpace{}%
\AgdaOperator{\AgdaInductiveConstructor{,}}\AgdaSpace{}%
\AgdaSymbol{(}\AgdaBound{v}\AgdaSpace{}%
\AgdaOperator{\AgdaInductiveConstructor{,}}\AgdaSpace{}%
\AgdaInductiveConstructor{refl}\AgdaSymbol{)}\<%
\\
\>[0]\AgdaOperator{\AgdaFunction{∙⟦}}\AgdaSpace{}%
\AgdaInductiveConstructor{∙𝟙/}\AgdaSpace{}%
\AgdaBound{T}\AgdaSpace{}%
\AgdaOperator{\AgdaFunction{⟧}}%
\>[17]\AgdaSymbol{=}\AgdaSpace{}%
\AgdaKeyword{let}\AgdaSpace{}%
\AgdaSymbol{(}\AgdaBound{t}\AgdaSpace{}%
\AgdaOperator{\AgdaInductiveConstructor{,}}\AgdaSpace{}%
\AgdaBound{v}\AgdaSymbol{)}\AgdaSpace{}%
\AgdaSymbol{=}\AgdaSpace{}%
\AgdaOperator{\AgdaFunction{∙⟦}}\AgdaSpace{}%
\AgdaBound{T}\AgdaSpace{}%
\AgdaOperator{\AgdaFunction{⟧}}\AgdaSpace{}%
\AgdaKeyword{in}\AgdaSpace{}%
\AgdaFunction{Recip}\AgdaSpace{}%
\AgdaBound{t}\AgdaSpace{}%
\AgdaBound{v}\AgdaSpace{}%
\AgdaOperator{\AgdaInductiveConstructor{,}}\AgdaSpace{}%
\AgdaSymbol{λ}\AgdaSpace{}%
\AgdaBound{\AgdaUnderscore{}}\AgdaSpace{}%
\AgdaSymbol{→}\AgdaSpace{}%
\AgdaInductiveConstructor{tt}\<%
\end{code}}
\newcommand{\PIPFCombDef}{%
\begin{code}%
\>[0]\AgdaKeyword{data}\AgdaSpace{}%
\AgdaOperator{\AgdaDatatype{\AgdaUnderscore{}⧟\AgdaUnderscore{}}}\AgdaSpace{}%
\AgdaSymbol{:}\AgdaSpace{}%
\AgdaDatatype{∙𝕌}\AgdaSpace{}%
\AgdaSymbol{→}\AgdaSpace{}%
\AgdaDatatype{∙𝕌}\AgdaSpace{}%
\AgdaSymbol{→}\AgdaSpace{}%
\AgdaPrimitiveType{Set}\AgdaSpace{}%
\AgdaKeyword{where}\<%
\\
\>[0][@{}l@{\AgdaIndent{0}}]%
\>[2]\AgdaComment{-- lifting from plain Π}\<%
\\
\>[2]\AgdaInductiveConstructor{∙c}%
\>[10]\AgdaSymbol{:}\AgdaSpace{}%
\AgdaSymbol{\{}\AgdaBound{t₁}\AgdaSpace{}%
\AgdaBound{t₂}\AgdaSpace{}%
\AgdaSymbol{:}\AgdaSpace{}%
\AgdaDatatype{𝕌}\AgdaSymbol{\}}\AgdaSpace{}%
\AgdaSymbol{\{}\AgdaBound{v}\AgdaSpace{}%
\AgdaSymbol{:}\AgdaSpace{}%
\AgdaOperator{\AgdaFunction{⟦}}\AgdaSpace{}%
\AgdaBound{t₁}\AgdaSpace{}%
\AgdaOperator{\AgdaFunction{⟧}}\AgdaSymbol{\}}\AgdaSpace{}%
\AgdaSymbol{→}\AgdaSpace{}%
\AgdaSymbol{(}\AgdaBound{c}\AgdaSpace{}%
\AgdaSymbol{:}\AgdaSpace{}%
\AgdaBound{t₁}\AgdaSpace{}%
\AgdaOperator{\AgdaDatatype{⟷}}\AgdaSpace{}%
\AgdaBound{t₂}\AgdaSymbol{)}\AgdaSpace{}%
\AgdaSymbol{→}\AgdaSpace{}%
\AgdaBound{t₁}\AgdaSpace{}%
\AgdaOperator{\AgdaInductiveConstructor{\#}}\AgdaSpace{}%
\AgdaBound{v}\AgdaSpace{}%
\AgdaOperator{\AgdaDatatype{⧟}}\AgdaSpace{}%
\AgdaBound{t₂}\AgdaSpace{}%
\AgdaOperator{\AgdaInductiveConstructor{\#}}\AgdaSpace{}%
\AgdaSymbol{(}\AgdaFunction{eval}\AgdaSpace{}%
\AgdaBound{c}\AgdaSpace{}%
\AgdaBound{v}\AgdaSymbol{)}\<%
\\
\>[2]\AgdaInductiveConstructor{∙times\#}%
\>[217I]\AgdaSymbol{:}\AgdaSpace{}%
\AgdaSymbol{\{}\AgdaBound{t₁}\AgdaSpace{}%
\AgdaBound{t₂}\AgdaSpace{}%
\AgdaSymbol{:}\AgdaSpace{}%
\AgdaDatatype{𝕌}\AgdaSymbol{\}}\AgdaSpace{}%
\AgdaSymbol{\{}\AgdaBound{v₁}\AgdaSpace{}%
\AgdaSymbol{:}\AgdaSpace{}%
\AgdaOperator{\AgdaFunction{⟦}}\AgdaSpace{}%
\AgdaBound{t₁}\AgdaSpace{}%
\AgdaOperator{\AgdaFunction{⟧}}\AgdaSymbol{\}}\AgdaSpace{}%
\AgdaSymbol{\{}\AgdaBound{v₂}\AgdaSpace{}%
\AgdaSymbol{:}\AgdaSpace{}%
\AgdaOperator{\AgdaFunction{⟦}}\AgdaSpace{}%
\AgdaBound{t₂}\AgdaSpace{}%
\AgdaOperator{\AgdaFunction{⟧}}\AgdaSymbol{\}}\<%
\\
\>[.][@{}l@{}]\<[217I]%
\>[10]\AgdaSymbol{→}\AgdaSpace{}%
\AgdaSymbol{((}\AgdaBound{t₁}\AgdaSpace{}%
\AgdaOperator{\AgdaInductiveConstructor{×ᵤ}}\AgdaSpace{}%
\AgdaBound{t₂}\AgdaSymbol{)}\AgdaSpace{}%
\AgdaOperator{\AgdaInductiveConstructor{\#}}\AgdaSpace{}%
\AgdaSymbol{(}\AgdaBound{v₁}\AgdaSpace{}%
\AgdaOperator{\AgdaInductiveConstructor{,}}\AgdaSpace{}%
\AgdaBound{v₂}\AgdaSymbol{))}\AgdaSpace{}%
\AgdaOperator{\AgdaDatatype{⧟}}\AgdaSpace{}%
\AgdaSymbol{((}\AgdaBound{t₁}\AgdaSpace{}%
\AgdaOperator{\AgdaInductiveConstructor{\#}}\AgdaSpace{}%
\AgdaBound{v₁}\AgdaSymbol{)}\AgdaSpace{}%
\AgdaOperator{\AgdaInductiveConstructor{∙×ᵤ}}\AgdaSpace{}%
\AgdaSymbol{(}\AgdaBound{t₂}\AgdaSpace{}%
\AgdaOperator{\AgdaInductiveConstructor{\#}}\AgdaSpace{}%
\AgdaBound{v₂}\AgdaSymbol{))}\<%
\\
\>[2]\AgdaInductiveConstructor{∙\#times}%
\>[247I]\AgdaSymbol{:}\AgdaSpace{}%
\AgdaSymbol{\{}\AgdaBound{t₁}\AgdaSpace{}%
\AgdaBound{t₂}\AgdaSpace{}%
\AgdaSymbol{:}\AgdaSpace{}%
\AgdaDatatype{𝕌}\AgdaSymbol{\}}\AgdaSpace{}%
\AgdaSymbol{\{}\AgdaBound{v₁}\AgdaSpace{}%
\AgdaSymbol{:}\AgdaSpace{}%
\AgdaOperator{\AgdaFunction{⟦}}\AgdaSpace{}%
\AgdaBound{t₁}\AgdaSpace{}%
\AgdaOperator{\AgdaFunction{⟧}}\AgdaSymbol{\}}\AgdaSpace{}%
\AgdaSymbol{\{}\AgdaBound{v₂}\AgdaSpace{}%
\AgdaSymbol{:}\AgdaSpace{}%
\AgdaOperator{\AgdaFunction{⟦}}\AgdaSpace{}%
\AgdaBound{t₂}\AgdaSpace{}%
\AgdaOperator{\AgdaFunction{⟧}}\AgdaSymbol{\}}\<%
\\
\>[.][@{}l@{}]\<[247I]%
\>[10]\AgdaSymbol{→}\AgdaSpace{}%
\AgdaSymbol{((}\AgdaBound{t₁}\AgdaSpace{}%
\AgdaOperator{\AgdaInductiveConstructor{\#}}\AgdaSpace{}%
\AgdaBound{v₁}\AgdaSymbol{)}\AgdaSpace{}%
\AgdaOperator{\AgdaInductiveConstructor{∙×ᵤ}}\AgdaSpace{}%
\AgdaSymbol{(}\AgdaBound{t₂}\AgdaSpace{}%
\AgdaOperator{\AgdaInductiveConstructor{\#}}\AgdaSpace{}%
\AgdaBound{v₂}\AgdaSymbol{))}\AgdaSpace{}%
\AgdaOperator{\AgdaDatatype{⧟}}\AgdaSpace{}%
\AgdaSymbol{((}\AgdaBound{t₁}\AgdaSpace{}%
\AgdaOperator{\AgdaInductiveConstructor{×ᵤ}}\AgdaSpace{}%
\AgdaBound{t₂}\AgdaSymbol{)}\AgdaSpace{}%
\AgdaOperator{\AgdaInductiveConstructor{\#}}\AgdaSpace{}%
\AgdaSymbol{(}\AgdaBound{v₁}\AgdaSpace{}%
\AgdaOperator{\AgdaInductiveConstructor{,}}\AgdaSpace{}%
\AgdaBound{v₂}\AgdaSymbol{))}\<%
\\
\>[2]\AgdaComment{-- multiplicative structure (omitted)}\<%
\\
\>[2]\AgdaComment{-- monad / comonad}\<%
\\
\>[2]\AgdaInductiveConstructor{return}%
\>[11]\AgdaSymbol{:}\AgdaSpace{}%
\AgdaSymbol{\{}\AgdaBound{T}\AgdaSpace{}%
\AgdaSymbol{:}\AgdaSpace{}%
\AgdaDatatype{∙𝕌}\AgdaSymbol{\}}\AgdaSpace{}%
\AgdaSymbol{→}\AgdaSpace{}%
\AgdaBound{T}\AgdaSpace{}%
\AgdaOperator{\AgdaDatatype{⧟}}\AgdaSpace{}%
\AgdaOperator{\AgdaInductiveConstructor{❰}}\AgdaSpace{}%
\AgdaBound{T}\AgdaSpace{}%
\AgdaOperator{\AgdaInductiveConstructor{❱}}\<%
\\
\>[2]\AgdaInductiveConstructor{extract}%
\>[11]\AgdaSymbol{:}\AgdaSpace{}%
\AgdaSymbol{\{}\AgdaBound{T}\AgdaSpace{}%
\AgdaSymbol{:}\AgdaSpace{}%
\AgdaDatatype{∙𝕌}\AgdaSymbol{\}}\AgdaSpace{}%
\AgdaSymbol{→}\AgdaSpace{}%
\AgdaOperator{\AgdaInductiveConstructor{❰}}\AgdaSpace{}%
\AgdaBound{T}\AgdaSpace{}%
\AgdaOperator{\AgdaInductiveConstructor{❱}}\AgdaSpace{}%
\AgdaOperator{\AgdaDatatype{⧟}}\AgdaSpace{}%
\AgdaBound{T}\<%
\\
\>[2]\AgdaComment{-- eta/epsilon}\<%
\\
\>[2]\AgdaInductiveConstructor{η}%
\>[5]\AgdaSymbol{:}%
\>[8]\AgdaSymbol{(}\AgdaBound{T}\AgdaSpace{}%
\AgdaSymbol{:}\AgdaSpace{}%
\AgdaDatatype{∙𝕌}\AgdaSymbol{)}\AgdaSpace{}%
\AgdaSymbol{→}\AgdaSpace{}%
\AgdaFunction{∙𝟙}\AgdaSpace{}%
\AgdaOperator{\AgdaDatatype{⧟}}\AgdaSpace{}%
\AgdaOperator{\AgdaInductiveConstructor{❰}}\AgdaSpace{}%
\AgdaBound{T}\AgdaSpace{}%
\AgdaOperator{\AgdaInductiveConstructor{❱}}\AgdaSpace{}%
\AgdaOperator{\AgdaInductiveConstructor{∙×ᵤ}}\AgdaSpace{}%
\AgdaInductiveConstructor{∙𝟙/}\AgdaSpace{}%
\AgdaBound{T}\<%
\\
\>[2]\AgdaInductiveConstructor{ε}%
\>[5]\AgdaSymbol{:}%
\>[8]\AgdaSymbol{(}\AgdaBound{T}\AgdaSpace{}%
\AgdaSymbol{:}\AgdaSpace{}%
\AgdaDatatype{∙𝕌}\AgdaSymbol{)}\AgdaSpace{}%
\AgdaSymbol{→}\AgdaSpace{}%
\AgdaOperator{\AgdaInductiveConstructor{❰}}\AgdaSpace{}%
\AgdaBound{T}\AgdaSpace{}%
\AgdaOperator{\AgdaInductiveConstructor{❱}}\AgdaSpace{}%
\AgdaOperator{\AgdaInductiveConstructor{∙×ᵤ}}\AgdaSpace{}%
\AgdaInductiveConstructor{∙𝟙/}\AgdaSpace{}%
\AgdaBound{T}\AgdaSpace{}%
\AgdaOperator{\AgdaDatatype{⧟}}\AgdaSpace{}%
\AgdaFunction{∙𝟙}\<%
\end{code}
\begin{code}[hide]%
\>[2]\AgdaInductiveConstructor{∙id⟷}%
\>[12]\AgdaSymbol{:}%
\>[15]\AgdaSymbol{\{}\AgdaBound{T}\AgdaSpace{}%
\AgdaSymbol{:}\AgdaSpace{}%
\AgdaDatatype{∙𝕌}\AgdaSymbol{\}}\AgdaSpace{}%
\AgdaSymbol{→}\AgdaSpace{}%
\AgdaBound{T}\AgdaSpace{}%
\AgdaOperator{\AgdaDatatype{⧟}}\AgdaSpace{}%
\AgdaBound{T}\<%
\\
\>[2]\AgdaOperator{\AgdaInductiveConstructor{\AgdaUnderscore{}∙⊚\AgdaUnderscore{}}}%
\>[12]\AgdaSymbol{:}%
\>[15]\AgdaSymbol{\{}\AgdaBound{T₁}\AgdaSpace{}%
\AgdaBound{T₂}\AgdaSpace{}%
\AgdaBound{T₃}\AgdaSpace{}%
\AgdaSymbol{:}\AgdaSpace{}%
\AgdaDatatype{∙𝕌}\AgdaSymbol{\}}\AgdaSpace{}%
\AgdaSymbol{→}\AgdaSpace{}%
\AgdaSymbol{(}\AgdaBound{T₁}\AgdaSpace{}%
\AgdaOperator{\AgdaDatatype{⧟}}\AgdaSpace{}%
\AgdaBound{T₂}\AgdaSymbol{)}\AgdaSpace{}%
\AgdaSymbol{→}\AgdaSpace{}%
\AgdaSymbol{(}\AgdaBound{T₂}\AgdaSpace{}%
\AgdaOperator{\AgdaDatatype{⧟}}\AgdaSpace{}%
\AgdaBound{T₃}\AgdaSymbol{)}\AgdaSpace{}%
\AgdaSymbol{→}\AgdaSpace{}%
\AgdaSymbol{(}\AgdaBound{T₁}\AgdaSpace{}%
\AgdaOperator{\AgdaDatatype{⧟}}\AgdaSpace{}%
\AgdaBound{T₃}\AgdaSymbol{)}\<%
\\
\>[2]\AgdaComment{-- multiplicative structure}\<%
\\
\>[2]\AgdaInductiveConstructor{∙unite⋆l}%
\>[12]\AgdaSymbol{:}%
\>[15]\AgdaSymbol{\{}\AgdaBound{T}\AgdaSpace{}%
\AgdaSymbol{:}\AgdaSpace{}%
\AgdaDatatype{∙𝕌}\AgdaSymbol{\}}\AgdaSpace{}%
\AgdaSymbol{→}\AgdaSpace{}%
\AgdaFunction{∙𝟙}\AgdaSpace{}%
\AgdaOperator{\AgdaInductiveConstructor{∙×ᵤ}}\AgdaSpace{}%
\AgdaBound{T}\AgdaSpace{}%
\AgdaOperator{\AgdaDatatype{⧟}}\AgdaSpace{}%
\AgdaBound{T}\<%
\\
\>[2]\AgdaInductiveConstructor{∙uniti⋆l}%
\>[12]\AgdaSymbol{:}%
\>[15]\AgdaSymbol{\{}\AgdaBound{T}\AgdaSpace{}%
\AgdaSymbol{:}\AgdaSpace{}%
\AgdaDatatype{∙𝕌}\AgdaSymbol{\}}\AgdaSpace{}%
\AgdaSymbol{→}\AgdaSpace{}%
\AgdaBound{T}\AgdaSpace{}%
\AgdaOperator{\AgdaDatatype{⧟}}\AgdaSpace{}%
\AgdaFunction{∙𝟙}\AgdaSpace{}%
\AgdaOperator{\AgdaInductiveConstructor{∙×ᵤ}}\AgdaSpace{}%
\AgdaBound{T}\<%
\\
\>[2]\AgdaInductiveConstructor{∙unite⋆r}%
\>[12]\AgdaSymbol{:}%
\>[15]\AgdaSymbol{\{}\AgdaBound{T}\AgdaSpace{}%
\AgdaSymbol{:}\AgdaSpace{}%
\AgdaDatatype{∙𝕌}\AgdaSymbol{\}}\AgdaSpace{}%
\AgdaSymbol{→}\AgdaSpace{}%
\AgdaBound{T}\AgdaSpace{}%
\AgdaOperator{\AgdaInductiveConstructor{∙×ᵤ}}\AgdaSpace{}%
\AgdaFunction{∙𝟙}\AgdaSpace{}%
\AgdaOperator{\AgdaDatatype{⧟}}\AgdaSpace{}%
\AgdaBound{T}\<%
\\
\>[2]\AgdaInductiveConstructor{∙uniti⋆r}%
\>[12]\AgdaSymbol{:}%
\>[15]\AgdaSymbol{\{}\AgdaBound{T}\AgdaSpace{}%
\AgdaSymbol{:}\AgdaSpace{}%
\AgdaDatatype{∙𝕌}\AgdaSymbol{\}}\AgdaSpace{}%
\AgdaSymbol{→}\AgdaSpace{}%
\AgdaBound{T}\AgdaSpace{}%
\AgdaOperator{\AgdaDatatype{⧟}}\AgdaSpace{}%
\AgdaBound{T}\AgdaSpace{}%
\AgdaOperator{\AgdaInductiveConstructor{∙×ᵤ}}\AgdaSpace{}%
\AgdaFunction{∙𝟙}\<%
\\
\>[2]\AgdaInductiveConstructor{∙swap⋆}%
\>[12]\AgdaSymbol{:}%
\>[15]\AgdaSymbol{\{}\AgdaBound{T₁}\AgdaSpace{}%
\AgdaBound{T₂}\AgdaSpace{}%
\AgdaSymbol{:}\AgdaSpace{}%
\AgdaDatatype{∙𝕌}\AgdaSymbol{\}}\AgdaSpace{}%
\AgdaSymbol{→}\AgdaSpace{}%
\AgdaBound{T₁}\AgdaSpace{}%
\AgdaOperator{\AgdaInductiveConstructor{∙×ᵤ}}\AgdaSpace{}%
\AgdaBound{T₂}\AgdaSpace{}%
\AgdaOperator{\AgdaDatatype{⧟}}\AgdaSpace{}%
\AgdaBound{T₂}\AgdaSpace{}%
\AgdaOperator{\AgdaInductiveConstructor{∙×ᵤ}}\AgdaSpace{}%
\AgdaBound{T₁}\<%
\\
\>[2]\AgdaInductiveConstructor{∙assocl⋆}%
\>[12]\AgdaSymbol{:}%
\>[15]\AgdaSymbol{\{}\AgdaBound{T₁}\AgdaSpace{}%
\AgdaBound{T₂}\AgdaSpace{}%
\AgdaBound{T₃}\AgdaSpace{}%
\AgdaSymbol{:}\AgdaSpace{}%
\AgdaDatatype{∙𝕌}\AgdaSymbol{\}}\AgdaSpace{}%
\AgdaSymbol{→}\<%
\\
\>[15]\AgdaBound{T₁}\AgdaSpace{}%
\AgdaOperator{\AgdaInductiveConstructor{∙×ᵤ}}\AgdaSpace{}%
\AgdaSymbol{(}\AgdaBound{T₂}\AgdaSpace{}%
\AgdaOperator{\AgdaInductiveConstructor{∙×ᵤ}}\AgdaSpace{}%
\AgdaBound{T₃}\AgdaSymbol{)}\AgdaSpace{}%
\AgdaOperator{\AgdaDatatype{⧟}}\AgdaSpace{}%
\AgdaSymbol{(}\AgdaBound{T₁}\AgdaSpace{}%
\AgdaOperator{\AgdaInductiveConstructor{∙×ᵤ}}\AgdaSpace{}%
\AgdaBound{T₂}\AgdaSymbol{)}\AgdaSpace{}%
\AgdaOperator{\AgdaInductiveConstructor{∙×ᵤ}}\AgdaSpace{}%
\AgdaBound{T₃}\<%
\\
\>[2]\AgdaInductiveConstructor{∙assocr⋆}%
\>[12]\AgdaSymbol{:}%
\>[15]\AgdaSymbol{\{}\AgdaBound{T₁}\AgdaSpace{}%
\AgdaBound{T₂}\AgdaSpace{}%
\AgdaBound{T₃}\AgdaSpace{}%
\AgdaSymbol{:}\AgdaSpace{}%
\AgdaDatatype{∙𝕌}\AgdaSymbol{\}}\AgdaSpace{}%
\AgdaSymbol{→}\<%
\\
\>[15]\AgdaSymbol{(}\AgdaBound{T₁}\AgdaSpace{}%
\AgdaOperator{\AgdaInductiveConstructor{∙×ᵤ}}\AgdaSpace{}%
\AgdaBound{T₂}\AgdaSymbol{)}\AgdaSpace{}%
\AgdaOperator{\AgdaInductiveConstructor{∙×ᵤ}}\AgdaSpace{}%
\AgdaBound{T₃}\AgdaSpace{}%
\AgdaOperator{\AgdaDatatype{⧟}}\AgdaSpace{}%
\AgdaBound{T₁}\AgdaSpace{}%
\AgdaOperator{\AgdaInductiveConstructor{∙×ᵤ}}\AgdaSpace{}%
\AgdaSymbol{(}\AgdaBound{T₂}\AgdaSpace{}%
\AgdaOperator{\AgdaInductiveConstructor{∙×ᵤ}}\AgdaSpace{}%
\AgdaBound{T₃}\AgdaSymbol{)}\<%
\\
\>[2]\AgdaOperator{\AgdaInductiveConstructor{\AgdaUnderscore{}∙⊗\AgdaUnderscore{}}}%
\>[12]\AgdaSymbol{:}%
\>[15]\AgdaSymbol{\{}\AgdaBound{T₁}\AgdaSpace{}%
\AgdaBound{T₂}\AgdaSpace{}%
\AgdaBound{T₃}\AgdaSpace{}%
\AgdaBound{T₄}\AgdaSpace{}%
\AgdaSymbol{:}\AgdaSpace{}%
\AgdaDatatype{∙𝕌}\AgdaSymbol{\}}\AgdaSpace{}%
\AgdaSymbol{→}\AgdaSpace{}%
\AgdaSymbol{(}\AgdaBound{T₁}\AgdaSpace{}%
\AgdaOperator{\AgdaDatatype{⧟}}\AgdaSpace{}%
\AgdaBound{T₃}\AgdaSymbol{)}\AgdaSpace{}%
\AgdaSymbol{→}\AgdaSpace{}%
\AgdaSymbol{(}\AgdaBound{T₂}\AgdaSpace{}%
\AgdaOperator{\AgdaDatatype{⧟}}\AgdaSpace{}%
\AgdaBound{T₄}\AgdaSymbol{)}\AgdaSpace{}%
\AgdaSymbol{→}\<%
\\
\>[15]\AgdaSymbol{(}\AgdaBound{T₁}\AgdaSpace{}%
\AgdaOperator{\AgdaInductiveConstructor{∙×ᵤ}}\AgdaSpace{}%
\AgdaBound{T₂}\AgdaSpace{}%
\AgdaOperator{\AgdaDatatype{⧟}}\AgdaSpace{}%
\AgdaBound{T₃}\AgdaSpace{}%
\AgdaOperator{\AgdaInductiveConstructor{∙×ᵤ}}\AgdaSpace{}%
\AgdaBound{T₄}\AgdaSymbol{)}\<%
\end{code}}
\newcommand{\PIPFCombderive}{%
\begin{code}%
\>[0]\AgdaFunction{∙Singᵤ}\AgdaSpace{}%
\AgdaSymbol{:}\AgdaSpace{}%
\AgdaSymbol{\{}\AgdaBound{T₁}\AgdaSpace{}%
\AgdaBound{T₂}\AgdaSpace{}%
\AgdaSymbol{:}\AgdaSpace{}%
\AgdaDatatype{∙𝕌}\AgdaSymbol{\}}\AgdaSpace{}%
\AgdaSymbol{→}\AgdaSpace{}%
\AgdaSymbol{(}\AgdaBound{T₁}\AgdaSpace{}%
\AgdaOperator{\AgdaDatatype{⧟}}\AgdaSpace{}%
\AgdaBound{T₂}\AgdaSymbol{)}\AgdaSpace{}%
\AgdaSymbol{→}\AgdaSpace{}%
\AgdaOperator{\AgdaInductiveConstructor{❰}}\AgdaSpace{}%
\AgdaBound{T₁}\AgdaSpace{}%
\AgdaOperator{\AgdaInductiveConstructor{❱}}\AgdaSpace{}%
\AgdaOperator{\AgdaDatatype{⧟}}\AgdaSpace{}%
\AgdaOperator{\AgdaInductiveConstructor{❰}}\AgdaSpace{}%
\AgdaBound{T₂}\AgdaSpace{}%
\AgdaOperator{\AgdaInductiveConstructor{❱}}\<%
\\
\>[0]\AgdaFunction{∙Singᵤ}\AgdaSpace{}%
\AgdaSymbol{\{}\AgdaBound{T₁}\AgdaSymbol{\}}\AgdaSpace{}%
\AgdaSymbol{\{}\AgdaBound{T₂}\AgdaSymbol{\}}\AgdaSpace{}%
\AgdaBound{c}\AgdaSpace{}%
\AgdaSymbol{=}\AgdaSpace{}%
\AgdaInductiveConstructor{extract}\AgdaSpace{}%
\AgdaOperator{\AgdaInductiveConstructor{∙⊚}}\AgdaSpace{}%
\AgdaBound{c}\AgdaSpace{}%
\AgdaOperator{\AgdaInductiveConstructor{∙⊚}}\AgdaSpace{}%
\AgdaInductiveConstructor{return}\<%
\\
\\[\AgdaEmptyExtraSkip]%
\>[0]\AgdaFunction{tensor}\AgdaSpace{}%
\AgdaSymbol{:}\AgdaSpace{}%
\AgdaSymbol{\{}\AgdaBound{T₁}\AgdaSpace{}%
\AgdaBound{T₂}\AgdaSpace{}%
\AgdaSymbol{:}\AgdaSpace{}%
\AgdaDatatype{∙𝕌}\AgdaSymbol{\}}\AgdaSpace{}%
\AgdaSymbol{→}\AgdaSpace{}%
\AgdaOperator{\AgdaInductiveConstructor{❰}}\AgdaSpace{}%
\AgdaBound{T₁}\AgdaSpace{}%
\AgdaOperator{\AgdaInductiveConstructor{❱}}\AgdaSpace{}%
\AgdaOperator{\AgdaInductiveConstructor{∙×ᵤ}}\AgdaSpace{}%
\AgdaOperator{\AgdaInductiveConstructor{❰}}\AgdaSpace{}%
\AgdaBound{T₂}\AgdaSpace{}%
\AgdaOperator{\AgdaInductiveConstructor{❱}}\AgdaSpace{}%
\AgdaOperator{\AgdaDatatype{⧟}}\AgdaSpace{}%
\AgdaOperator{\AgdaInductiveConstructor{❰}}\AgdaSpace{}%
\AgdaBound{T₁}\AgdaSpace{}%
\AgdaOperator{\AgdaInductiveConstructor{∙×ᵤ}}\AgdaSpace{}%
\AgdaBound{T₂}\AgdaSpace{}%
\AgdaOperator{\AgdaInductiveConstructor{❱}}\<%
\\
\>[0]\AgdaFunction{tensor}\AgdaSpace{}%
\AgdaSymbol{\{}\AgdaBound{T₁}\AgdaSymbol{\}}\AgdaSpace{}%
\AgdaSymbol{\{}\AgdaBound{T₂}\AgdaSymbol{\}}\AgdaSpace{}%
\AgdaSymbol{=}\AgdaSpace{}%
\AgdaSymbol{(}\AgdaInductiveConstructor{extract}\AgdaSpace{}%
\AgdaOperator{\AgdaInductiveConstructor{∙⊗}}\AgdaSpace{}%
\AgdaInductiveConstructor{extract}\AgdaSymbol{)}\AgdaSpace{}%
\AgdaOperator{\AgdaInductiveConstructor{∙⊚}}\AgdaSpace{}%
\AgdaInductiveConstructor{return}\<%
\\
\\[\AgdaEmptyExtraSkip]%
\>[0]\AgdaFunction{cotensor}\AgdaSpace{}%
\AgdaSymbol{:}\AgdaSpace{}%
\AgdaSymbol{\{}\AgdaBound{T₁}\AgdaSpace{}%
\AgdaBound{T₂}\AgdaSpace{}%
\AgdaSymbol{:}\AgdaSpace{}%
\AgdaDatatype{∙𝕌}\AgdaSymbol{\}}\AgdaSpace{}%
\AgdaSymbol{→}\AgdaSpace{}%
\AgdaOperator{\AgdaInductiveConstructor{❰}}\AgdaSpace{}%
\AgdaBound{T₁}\AgdaSpace{}%
\AgdaOperator{\AgdaInductiveConstructor{∙×ᵤ}}\AgdaSpace{}%
\AgdaBound{T₂}\AgdaSpace{}%
\AgdaOperator{\AgdaInductiveConstructor{❱}}\AgdaSpace{}%
\AgdaOperator{\AgdaDatatype{⧟}}\AgdaSpace{}%
\AgdaOperator{\AgdaInductiveConstructor{❰}}\AgdaSpace{}%
\AgdaBound{T₁}\AgdaSpace{}%
\AgdaOperator{\AgdaInductiveConstructor{❱}}\AgdaSpace{}%
\AgdaOperator{\AgdaInductiveConstructor{∙×ᵤ}}\AgdaSpace{}%
\AgdaOperator{\AgdaInductiveConstructor{❰}}\AgdaSpace{}%
\AgdaBound{T₂}\AgdaSpace{}%
\AgdaOperator{\AgdaInductiveConstructor{❱}}\<%
\\
\>[0]\AgdaFunction{cotensor}\AgdaSpace{}%
\AgdaSymbol{\{}\AgdaBound{T₁}\AgdaSymbol{\}}\AgdaSpace{}%
\AgdaSymbol{\{}\AgdaBound{T₂}\AgdaSymbol{\}}\AgdaSpace{}%
\AgdaSymbol{=}\AgdaSpace{}%
\AgdaInductiveConstructor{extract}\AgdaSpace{}%
\AgdaOperator{\AgdaInductiveConstructor{∙⊚}}\AgdaSpace{}%
\AgdaSymbol{(}\AgdaInductiveConstructor{return}\AgdaSpace{}%
\AgdaOperator{\AgdaInductiveConstructor{∙⊗}}\AgdaSpace{}%
\AgdaInductiveConstructor{return}\AgdaSymbol{)}\<%
\\
\\[\AgdaEmptyExtraSkip]%
\>[0]\AgdaFunction{join}\AgdaSpace{}%
\AgdaSymbol{:}\AgdaSpace{}%
\AgdaSymbol{\{}\AgdaBound{T₁}\AgdaSpace{}%
\AgdaSymbol{:}\AgdaSpace{}%
\AgdaDatatype{∙𝕌}\AgdaSymbol{\}}\AgdaSpace{}%
\AgdaSymbol{→}%
\>[20]\AgdaOperator{\AgdaInductiveConstructor{❰}}\AgdaSpace{}%
\AgdaOperator{\AgdaInductiveConstructor{❰}}\AgdaSpace{}%
\AgdaBound{T₁}\AgdaSpace{}%
\AgdaOperator{\AgdaInductiveConstructor{❱}}\AgdaSpace{}%
\AgdaOperator{\AgdaInductiveConstructor{❱}}\AgdaSpace{}%
\AgdaOperator{\AgdaDatatype{⧟}}\AgdaSpace{}%
\AgdaOperator{\AgdaInductiveConstructor{❰}}\AgdaSpace{}%
\AgdaBound{T₁}\AgdaSpace{}%
\AgdaOperator{\AgdaInductiveConstructor{❱}}\<%
\\
\>[0]\AgdaFunction{join}\AgdaSpace{}%
\AgdaSymbol{\{}\AgdaBound{T₁}\AgdaSymbol{\}}\AgdaSpace{}%
\AgdaSymbol{=}\AgdaSpace{}%
\AgdaInductiveConstructor{extract}\<%
\\
\\[\AgdaEmptyExtraSkip]%
\>[0]\AgdaFunction{duplicate}\AgdaSpace{}%
\AgdaSymbol{:}\AgdaSpace{}%
\AgdaSymbol{\{}\AgdaBound{T₁}\AgdaSpace{}%
\AgdaSymbol{:}\AgdaSpace{}%
\AgdaDatatype{∙𝕌}\AgdaSymbol{\}}\AgdaSpace{}%
\AgdaSymbol{→}\AgdaSpace{}%
\AgdaOperator{\AgdaInductiveConstructor{❰}}\AgdaSpace{}%
\AgdaBound{T₁}\AgdaSpace{}%
\AgdaOperator{\AgdaInductiveConstructor{❱}}\AgdaSpace{}%
\AgdaOperator{\AgdaDatatype{⧟}}\AgdaSpace{}%
\AgdaOperator{\AgdaInductiveConstructor{❰}}\AgdaSpace{}%
\AgdaOperator{\AgdaInductiveConstructor{❰}}\AgdaSpace{}%
\AgdaBound{T₁}\AgdaSpace{}%
\AgdaOperator{\AgdaInductiveConstructor{❱}}\AgdaSpace{}%
\AgdaOperator{\AgdaInductiveConstructor{❱}}\<%
\\
\>[0]\AgdaFunction{duplicate}\AgdaSpace{}%
\AgdaSymbol{\{}\AgdaBound{T₁}\AgdaSymbol{\}}\AgdaSpace{}%
\AgdaSymbol{=}\AgdaSpace{}%
\AgdaInductiveConstructor{return}\<%
\end{code}}
\newcommand{\PIPFrev}{%
\begin{code}%
\>[0]\AgdaOperator{\AgdaFunction{!∙\AgdaUnderscore{}}}\AgdaSpace{}%
\AgdaSymbol{:}\AgdaSpace{}%
\AgdaSymbol{\{}\AgdaBound{A}\AgdaSpace{}%
\AgdaBound{B}\AgdaSpace{}%
\AgdaSymbol{:}\AgdaSpace{}%
\AgdaDatatype{∙𝕌}\AgdaSymbol{\}}\AgdaSpace{}%
\AgdaSymbol{→}\AgdaSpace{}%
\AgdaBound{A}\AgdaSpace{}%
\AgdaOperator{\AgdaDatatype{⧟}}\AgdaSpace{}%
\AgdaBound{B}\AgdaSpace{}%
\AgdaSymbol{→}\AgdaSpace{}%
\AgdaBound{B}\AgdaSpace{}%
\AgdaOperator{\AgdaDatatype{⧟}}\AgdaSpace{}%
\AgdaBound{A}\<%
\end{code}
\begin{code}[hide]%
\>[0]\AgdaOperator{\AgdaFunction{!∙}}\AgdaSpace{}%
\AgdaSymbol{(}\AgdaInductiveConstructor{∙c}\AgdaSpace{}%
\AgdaSymbol{\{}\AgdaBound{t₁}\AgdaSymbol{\}}\AgdaSpace{}%
\AgdaSymbol{\{}\AgdaBound{t₂}\AgdaSymbol{\}}\AgdaSpace{}%
\AgdaSymbol{\{}\AgdaBound{v}\AgdaSymbol{\}}\AgdaSpace{}%
\AgdaBound{c}\AgdaSymbol{)}\AgdaSpace{}%
\AgdaSymbol{=}\AgdaSpace{}%
\AgdaFunction{subst}%
\>[661I]\AgdaSymbol{(λ}\AgdaSpace{}%
\AgdaBound{x}\AgdaSpace{}%
\AgdaSymbol{→}\AgdaSpace{}%
\AgdaBound{t₂}\AgdaSpace{}%
\AgdaOperator{\AgdaInductiveConstructor{\#}}\AgdaSpace{}%
\AgdaFunction{eval}\AgdaSpace{}%
\AgdaBound{c}\AgdaSpace{}%
\AgdaBound{v}\AgdaSpace{}%
\AgdaOperator{\AgdaDatatype{⧟}}\AgdaSpace{}%
\AgdaBound{t₁}\AgdaSpace{}%
\AgdaOperator{\AgdaInductiveConstructor{\#}}\AgdaSpace{}%
\AgdaBound{x}\AgdaSymbol{)}\<%
\\
\>[.][@{}l@{}]\<[661I]%
\>[32]\AgdaSymbol{(}\AgdaFunction{ΠisRev}\AgdaSpace{}%
\AgdaBound{c}\AgdaSpace{}%
\AgdaBound{v}\AgdaSymbol{)}\AgdaSpace{}%
\AgdaSymbol{(}\AgdaInductiveConstructor{∙c}\AgdaSpace{}%
\AgdaSymbol{(}\AgdaOperator{\AgdaFunction{!}}\AgdaSpace{}%
\AgdaBound{c}\AgdaSymbol{))}\<%
\\
\>[0]\AgdaOperator{\AgdaFunction{!∙}}\AgdaSpace{}%
\AgdaInductiveConstructor{∙times\#}\AgdaSpace{}%
\AgdaSymbol{=}\AgdaSpace{}%
\AgdaInductiveConstructor{∙\#times}\<%
\\
\>[0]\AgdaOperator{\AgdaFunction{!∙}}\AgdaSpace{}%
\AgdaInductiveConstructor{∙\#times}\AgdaSpace{}%
\AgdaSymbol{=}\AgdaSpace{}%
\AgdaInductiveConstructor{∙times\#}\<%
\\
\>[0]\AgdaOperator{\AgdaFunction{!∙}}\AgdaSpace{}%
\AgdaInductiveConstructor{∙id⟷}\AgdaSpace{}%
\AgdaSymbol{=}\AgdaSpace{}%
\AgdaInductiveConstructor{∙id⟷}\<%
\\
\>[0]\AgdaOperator{\AgdaFunction{!∙}}\AgdaSpace{}%
\AgdaSymbol{(}\AgdaBound{c₁}\AgdaSpace{}%
\AgdaOperator{\AgdaInductiveConstructor{∙⊚}}\AgdaSpace{}%
\AgdaBound{c₂}\AgdaSymbol{)}\AgdaSpace{}%
\AgdaSymbol{=}\AgdaSpace{}%
\AgdaSymbol{(}\AgdaOperator{\AgdaFunction{!∙}}\AgdaSpace{}%
\AgdaBound{c₂}\AgdaSymbol{)}\AgdaSpace{}%
\AgdaOperator{\AgdaInductiveConstructor{∙⊚}}\AgdaSpace{}%
\AgdaSymbol{(}\AgdaOperator{\AgdaFunction{!∙}}\AgdaSpace{}%
\AgdaBound{c₁}\AgdaSymbol{)}\<%
\\
\>[0]\AgdaOperator{\AgdaFunction{!∙}}\AgdaSpace{}%
\AgdaInductiveConstructor{∙unite⋆l}\AgdaSpace{}%
\AgdaSymbol{=}\AgdaSpace{}%
\AgdaInductiveConstructor{∙uniti⋆l}\<%
\\
\>[0]\AgdaOperator{\AgdaFunction{!∙}}\AgdaSpace{}%
\AgdaInductiveConstructor{∙uniti⋆l}\AgdaSpace{}%
\AgdaSymbol{=}\AgdaSpace{}%
\AgdaInductiveConstructor{∙unite⋆l}\<%
\\
\>[0]\AgdaOperator{\AgdaFunction{!∙}}\AgdaSpace{}%
\AgdaInductiveConstructor{∙unite⋆r}\AgdaSpace{}%
\AgdaSymbol{=}\AgdaSpace{}%
\AgdaInductiveConstructor{∙uniti⋆r}\<%
\\
\>[0]\AgdaOperator{\AgdaFunction{!∙}}\AgdaSpace{}%
\AgdaInductiveConstructor{∙uniti⋆r}\AgdaSpace{}%
\AgdaSymbol{=}\AgdaSpace{}%
\AgdaInductiveConstructor{∙unite⋆r}\<%
\\
\>[0]\AgdaOperator{\AgdaFunction{!∙}}\AgdaSpace{}%
\AgdaInductiveConstructor{∙swap⋆}\AgdaSpace{}%
\AgdaSymbol{=}\AgdaSpace{}%
\AgdaInductiveConstructor{∙swap⋆}\<%
\\
\>[0]\AgdaOperator{\AgdaFunction{!∙}}\AgdaSpace{}%
\AgdaInductiveConstructor{∙assocl⋆}\AgdaSpace{}%
\AgdaSymbol{=}\AgdaSpace{}%
\AgdaInductiveConstructor{∙assocr⋆}\<%
\\
\>[0]\AgdaOperator{\AgdaFunction{!∙}}\AgdaSpace{}%
\AgdaInductiveConstructor{∙assocr⋆}\AgdaSpace{}%
\AgdaSymbol{=}\AgdaSpace{}%
\AgdaInductiveConstructor{∙assocl⋆}\<%
\\
\>[0]\AgdaOperator{\AgdaFunction{!∙}}\AgdaSpace{}%
\AgdaSymbol{(}\AgdaBound{c₁}\AgdaSpace{}%
\AgdaOperator{\AgdaInductiveConstructor{∙⊗}}\AgdaSpace{}%
\AgdaBound{c₂}\AgdaSymbol{)}\AgdaSpace{}%
\AgdaSymbol{=}\AgdaSpace{}%
\AgdaSymbol{(}\AgdaOperator{\AgdaFunction{!∙}}\AgdaSpace{}%
\AgdaBound{c₁}\AgdaSymbol{)}\AgdaSpace{}%
\AgdaOperator{\AgdaInductiveConstructor{∙⊗}}\AgdaSpace{}%
\AgdaSymbol{(}\AgdaOperator{\AgdaFunction{!∙}}\AgdaSpace{}%
\AgdaBound{c₂}\AgdaSymbol{)}\<%
\\
\>[0]\AgdaOperator{\AgdaFunction{!∙}}\AgdaSpace{}%
\AgdaInductiveConstructor{return}\AgdaSpace{}%
\AgdaSymbol{=}\AgdaSpace{}%
\AgdaInductiveConstructor{extract}\<%
\\
\>[0]\AgdaOperator{\AgdaFunction{!∙}}\AgdaSpace{}%
\AgdaInductiveConstructor{extract}\AgdaSpace{}%
\AgdaSymbol{=}\AgdaSpace{}%
\AgdaInductiveConstructor{return}\<%
\\
\>[0]\AgdaOperator{\AgdaFunction{!∙}}\AgdaSpace{}%
\AgdaInductiveConstructor{η}\AgdaSpace{}%
\AgdaBound{T}\AgdaSpace{}%
\AgdaSymbol{=}\AgdaSpace{}%
\AgdaInductiveConstructor{ε}\AgdaSpace{}%
\AgdaBound{T}\<%
\\
\>[0]\AgdaOperator{\AgdaFunction{!∙}}\AgdaSpace{}%
\AgdaInductiveConstructor{ε}\AgdaSpace{}%
\AgdaBound{T}\AgdaSpace{}%
\AgdaSymbol{=}\AgdaSpace{}%
\AgdaInductiveConstructor{η}\AgdaSpace{}%
\AgdaBound{T}\<%
\end{code}}
\newcommand{\PIPFeval}{
\begin{code}%
\>[0]\AgdaFunction{∙eval}\AgdaSpace{}%
\AgdaSymbol{:}%
\>[9]\AgdaSymbol{\{}\AgdaBound{T₁}\AgdaSpace{}%
\AgdaBound{T₂}\AgdaSpace{}%
\AgdaSymbol{:}\AgdaSpace{}%
\AgdaDatatype{∙𝕌}\AgdaSymbol{\}}\AgdaSpace{}%
\AgdaSymbol{→}\AgdaSpace{}%
\AgdaSymbol{(}\AgdaBound{C}\AgdaSpace{}%
\AgdaSymbol{:}\AgdaSpace{}%
\AgdaBound{T₁}\AgdaSpace{}%
\AgdaOperator{\AgdaDatatype{⧟}}\AgdaSpace{}%
\AgdaBound{T₂}\AgdaSymbol{)}\AgdaSpace{}%
\AgdaSymbol{→}\<%
\\
\>[9]\AgdaKeyword{let}\AgdaSpace{}%
\AgdaSymbol{(}\AgdaBound{t₁}\AgdaSpace{}%
\AgdaOperator{\AgdaInductiveConstructor{,}}\AgdaSpace{}%
\AgdaBound{v₁}\AgdaSymbol{)}\AgdaSpace{}%
\AgdaSymbol{=}\AgdaSpace{}%
\AgdaOperator{\AgdaFunction{∙⟦}}\AgdaSpace{}%
\AgdaBound{T₁}\AgdaSpace{}%
\AgdaOperator{\AgdaFunction{⟧}}\AgdaSymbol{;}\AgdaSpace{}%
\AgdaSymbol{(}\AgdaBound{t₂}\AgdaSpace{}%
\AgdaOperator{\AgdaInductiveConstructor{,}}\AgdaSpace{}%
\AgdaBound{v₂}\AgdaSymbol{)}\AgdaSpace{}%
\AgdaSymbol{=}\AgdaSpace{}%
\AgdaOperator{\AgdaFunction{∙⟦}}\AgdaSpace{}%
\AgdaBound{T₂}\AgdaSpace{}%
\AgdaOperator{\AgdaFunction{⟧}}\<%
\\
\>[9]\AgdaKeyword{in}%
\>[13]\AgdaRecord{Σ}\AgdaSpace{}%
\AgdaSymbol{(}\AgdaBound{t₁}\AgdaSpace{}%
\AgdaSymbol{→}\AgdaSpace{}%
\AgdaBound{t₂}\AgdaSymbol{)}\AgdaSpace{}%
\AgdaSymbol{(λ}\AgdaSpace{}%
\AgdaBound{f}\AgdaSpace{}%
\AgdaSymbol{→}\AgdaSpace{}%
\AgdaBound{f}\AgdaSpace{}%
\AgdaBound{v₁}\AgdaSpace{}%
\AgdaOperator{\AgdaDatatype{≡}}\AgdaSpace{}%
\AgdaBound{v₂}\AgdaSymbol{)}\<%
\end{code}
\begin{code}[hide]%
\>[0]\AgdaFunction{∙eval}\AgdaSpace{}%
\AgdaInductiveConstructor{∙id⟷}\AgdaSpace{}%
\AgdaSymbol{=}\AgdaSpace{}%
\AgdaSymbol{(λ}\AgdaSpace{}%
\AgdaBound{x}\AgdaSpace{}%
\AgdaSymbol{→}\AgdaSpace{}%
\AgdaBound{x}\AgdaSymbol{)}\AgdaSpace{}%
\AgdaOperator{\AgdaInductiveConstructor{,}}\AgdaSpace{}%
\AgdaInductiveConstructor{refl}\<%
\\
\>[0]\AgdaFunction{∙eval}\AgdaSpace{}%
\AgdaSymbol{(}\AgdaInductiveConstructor{∙c}\AgdaSpace{}%
\AgdaBound{c}\AgdaSymbol{)}\AgdaSpace{}%
\AgdaSymbol{=}\AgdaSpace{}%
\AgdaFunction{eval}\AgdaSpace{}%
\AgdaBound{c}\AgdaSpace{}%
\AgdaOperator{\AgdaInductiveConstructor{,}}\AgdaSpace{}%
\AgdaInductiveConstructor{refl}\<%
\\
\>[0]\AgdaFunction{∙eval}\AgdaSpace{}%
\AgdaSymbol{(}\AgdaBound{C₁}\AgdaSpace{}%
\AgdaOperator{\AgdaInductiveConstructor{∙⊚}}\AgdaSpace{}%
\AgdaBound{C₂}\AgdaSymbol{)}\AgdaSpace{}%
\AgdaKeyword{with}\AgdaSpace{}%
\AgdaFunction{∙eval}\AgdaSpace{}%
\AgdaBound{C₁}\AgdaSpace{}%
\AgdaSymbol{|}\AgdaSpace{}%
\AgdaFunction{∙eval}\AgdaSpace{}%
\AgdaBound{C₂}\<%
\\
\>[0]\AgdaSymbol{...}\AgdaSpace{}%
\AgdaSymbol{|}\AgdaSpace{}%
\AgdaSymbol{(}\AgdaBound{f}\AgdaSpace{}%
\AgdaOperator{\AgdaInductiveConstructor{,}}\AgdaSpace{}%
\AgdaBound{p}\AgdaSymbol{)}\AgdaSpace{}%
\AgdaSymbol{|}\AgdaSpace{}%
\AgdaSymbol{(}\AgdaBound{g}\AgdaSpace{}%
\AgdaOperator{\AgdaInductiveConstructor{,}}\AgdaSpace{}%
\AgdaBound{q}\AgdaSymbol{)}\AgdaSpace{}%
\AgdaSymbol{=}\AgdaSpace{}%
\AgdaBound{g}\AgdaSpace{}%
\AgdaOperator{\AgdaFunction{∘}}\AgdaSpace{}%
\AgdaBound{f}\AgdaSpace{}%
\AgdaOperator{\AgdaInductiveConstructor{,}}\AgdaSpace{}%
\AgdaFunction{trans}\AgdaSpace{}%
\AgdaSymbol{(}\AgdaFunction{cong}\AgdaSpace{}%
\AgdaBound{g}\AgdaSpace{}%
\AgdaBound{p}\AgdaSymbol{)}\AgdaSpace{}%
\AgdaBound{q}\<%
\\
\>[0]\AgdaFunction{∙eval}\AgdaSpace{}%
\AgdaInductiveConstructor{∙unite⋆l}\AgdaSpace{}%
\AgdaSymbol{=}\AgdaSpace{}%
\AgdaSymbol{(λ}\AgdaSpace{}%
\AgdaSymbol{\{(}\AgdaInductiveConstructor{tt}\AgdaSpace{}%
\AgdaOperator{\AgdaInductiveConstructor{,}}\AgdaSpace{}%
\AgdaBound{x}\AgdaSymbol{)}\AgdaSpace{}%
\AgdaSymbol{→}\AgdaSpace{}%
\AgdaBound{x}\AgdaSymbol{\})}\AgdaSpace{}%
\AgdaOperator{\AgdaInductiveConstructor{,}}\AgdaSpace{}%
\AgdaInductiveConstructor{refl}\<%
\\
\>[0]\AgdaFunction{∙eval}\AgdaSpace{}%
\AgdaInductiveConstructor{∙uniti⋆l}\AgdaSpace{}%
\AgdaSymbol{=}\AgdaSpace{}%
\AgdaSymbol{(λ}\AgdaSpace{}%
\AgdaSymbol{\{}\AgdaBound{x}\AgdaSpace{}%
\AgdaSymbol{→}\AgdaSpace{}%
\AgdaSymbol{(}\AgdaInductiveConstructor{tt}\AgdaSpace{}%
\AgdaOperator{\AgdaInductiveConstructor{,}}\AgdaSpace{}%
\AgdaBound{x}\AgdaSymbol{)\})}\AgdaSpace{}%
\AgdaOperator{\AgdaInductiveConstructor{,}}\AgdaSpace{}%
\AgdaInductiveConstructor{refl}\<%
\\
\>[0]\AgdaFunction{∙eval}\AgdaSpace{}%
\AgdaInductiveConstructor{∙unite⋆r}\AgdaSpace{}%
\AgdaSymbol{=}\AgdaSpace{}%
\AgdaSymbol{(λ}\AgdaSpace{}%
\AgdaSymbol{\{(}\AgdaBound{x}\AgdaSpace{}%
\AgdaOperator{\AgdaInductiveConstructor{,}}\AgdaSpace{}%
\AgdaInductiveConstructor{tt}\AgdaSymbol{)}\AgdaSpace{}%
\AgdaSymbol{→}\AgdaSpace{}%
\AgdaBound{x}\AgdaSymbol{\})}\AgdaSpace{}%
\AgdaOperator{\AgdaInductiveConstructor{,}}\AgdaSpace{}%
\AgdaInductiveConstructor{refl}\<%
\\
\>[0]\AgdaFunction{∙eval}\AgdaSpace{}%
\AgdaInductiveConstructor{∙uniti⋆r}\AgdaSpace{}%
\AgdaSymbol{=}\AgdaSpace{}%
\AgdaSymbol{(λ}\AgdaSpace{}%
\AgdaSymbol{\{}\AgdaBound{x}\AgdaSpace{}%
\AgdaSymbol{→}\AgdaSpace{}%
\AgdaSymbol{(}\AgdaBound{x}\AgdaSpace{}%
\AgdaOperator{\AgdaInductiveConstructor{,}}\AgdaSpace{}%
\AgdaInductiveConstructor{tt}\AgdaSymbol{)\})}\AgdaSpace{}%
\AgdaOperator{\AgdaInductiveConstructor{,}}\AgdaSpace{}%
\AgdaInductiveConstructor{refl}\<%
\\
\>[0]\AgdaFunction{∙eval}\AgdaSpace{}%
\AgdaInductiveConstructor{∙swap⋆}\AgdaSpace{}%
\AgdaSymbol{=}\AgdaSpace{}%
\AgdaSymbol{(λ}\AgdaSpace{}%
\AgdaSymbol{\{(}\AgdaBound{x}\AgdaSpace{}%
\AgdaOperator{\AgdaInductiveConstructor{,}}\AgdaSpace{}%
\AgdaBound{y}\AgdaSymbol{)}\AgdaSpace{}%
\AgdaSymbol{→}\AgdaSpace{}%
\AgdaBound{y}\AgdaSpace{}%
\AgdaOperator{\AgdaInductiveConstructor{,}}\AgdaSpace{}%
\AgdaBound{x}\AgdaSymbol{\})}\AgdaSpace{}%
\AgdaOperator{\AgdaInductiveConstructor{,}}\AgdaSpace{}%
\AgdaInductiveConstructor{refl}\<%
\\
\>[0]\AgdaFunction{∙eval}\AgdaSpace{}%
\AgdaInductiveConstructor{∙assocl⋆}\AgdaSpace{}%
\AgdaSymbol{=}\AgdaSpace{}%
\AgdaSymbol{(λ}\AgdaSpace{}%
\AgdaSymbol{\{(}\AgdaBound{x}\AgdaSpace{}%
\AgdaOperator{\AgdaInductiveConstructor{,}}\AgdaSpace{}%
\AgdaSymbol{(}\AgdaBound{y}\AgdaSpace{}%
\AgdaOperator{\AgdaInductiveConstructor{,}}\AgdaSpace{}%
\AgdaBound{z}\AgdaSymbol{))}\AgdaSpace{}%
\AgdaSymbol{→}\AgdaSpace{}%
\AgdaSymbol{((}\AgdaBound{x}\AgdaSpace{}%
\AgdaOperator{\AgdaInductiveConstructor{,}}\AgdaSpace{}%
\AgdaBound{y}\AgdaSymbol{)}\AgdaSpace{}%
\AgdaOperator{\AgdaInductiveConstructor{,}}\AgdaSpace{}%
\AgdaBound{z}\AgdaSymbol{)\})}\AgdaSpace{}%
\AgdaOperator{\AgdaInductiveConstructor{,}}\AgdaSpace{}%
\AgdaInductiveConstructor{refl}\<%
\\
\>[0]\AgdaFunction{∙eval}\AgdaSpace{}%
\AgdaInductiveConstructor{∙assocr⋆}\AgdaSpace{}%
\AgdaSymbol{=}\AgdaSpace{}%
\AgdaSymbol{(λ}\AgdaSpace{}%
\AgdaSymbol{\{((}\AgdaBound{x}\AgdaSpace{}%
\AgdaOperator{\AgdaInductiveConstructor{,}}\AgdaSpace{}%
\AgdaBound{y}\AgdaSymbol{)}\AgdaSpace{}%
\AgdaOperator{\AgdaInductiveConstructor{,}}\AgdaSpace{}%
\AgdaBound{z}\AgdaSymbol{)}\AgdaSpace{}%
\AgdaSymbol{→}\AgdaSpace{}%
\AgdaSymbol{(}\AgdaBound{x}\AgdaSpace{}%
\AgdaOperator{\AgdaInductiveConstructor{,}}\AgdaSpace{}%
\AgdaSymbol{(}\AgdaBound{y}\AgdaSpace{}%
\AgdaOperator{\AgdaInductiveConstructor{,}}\AgdaSpace{}%
\AgdaBound{z}\AgdaSymbol{))\})}\AgdaSpace{}%
\AgdaOperator{\AgdaInductiveConstructor{,}}\AgdaSpace{}%
\AgdaInductiveConstructor{refl}\<%
\\
\>[0]\AgdaFunction{∙eval}\AgdaSpace{}%
\AgdaSymbol{(}\AgdaBound{C₀}\AgdaSpace{}%
\AgdaOperator{\AgdaInductiveConstructor{∙⊗}}\AgdaSpace{}%
\AgdaBound{C₁}\AgdaSymbol{)}\AgdaSpace{}%
\AgdaKeyword{with}\AgdaSpace{}%
\AgdaFunction{∙eval}\AgdaSpace{}%
\AgdaBound{C₀}\AgdaSpace{}%
\AgdaSymbol{|}\AgdaSpace{}%
\AgdaFunction{∙eval}\AgdaSpace{}%
\AgdaBound{C₁}\<%
\\
\>[0]\AgdaSymbol{...}\AgdaSpace{}%
\AgdaSymbol{|}\AgdaSpace{}%
\AgdaSymbol{(}\AgdaBound{f}\AgdaSpace{}%
\AgdaOperator{\AgdaInductiveConstructor{,}}\AgdaSpace{}%
\AgdaBound{p}\AgdaSymbol{)}\AgdaSpace{}%
\AgdaSymbol{|}\AgdaSpace{}%
\AgdaSymbol{(}\AgdaBound{g}\AgdaSpace{}%
\AgdaOperator{\AgdaInductiveConstructor{,}}\AgdaSpace{}%
\AgdaBound{q}\AgdaSymbol{)}\AgdaSpace{}%
\AgdaSymbol{=}\AgdaSpace{}%
\AgdaSymbol{(λ}\AgdaSpace{}%
\AgdaSymbol{\{(}\AgdaBound{t₁}\AgdaSpace{}%
\AgdaOperator{\AgdaInductiveConstructor{,}}\AgdaSpace{}%
\AgdaBound{t₂}\AgdaSymbol{)}\AgdaSpace{}%
\AgdaSymbol{→}\AgdaSpace{}%
\AgdaBound{f}\AgdaSpace{}%
\AgdaBound{t₁}\AgdaSpace{}%
\AgdaOperator{\AgdaInductiveConstructor{,}}\AgdaSpace{}%
\AgdaBound{g}\AgdaSpace{}%
\AgdaBound{t₂}\AgdaSymbol{\})}\AgdaSpace{}%
\AgdaOperator{\AgdaInductiveConstructor{,}}\AgdaSpace{}%
\AgdaFunction{cong₂}\AgdaSpace{}%
\AgdaOperator{\AgdaInductiveConstructor{\AgdaUnderscore{},\AgdaUnderscore{}}}\AgdaSpace{}%
\AgdaBound{p}\AgdaSpace{}%
\AgdaBound{q}\<%
\\
\>[0]\AgdaFunction{∙eval}\AgdaSpace{}%
\AgdaInductiveConstructor{∙times\#}\AgdaSpace{}%
\AgdaSymbol{=}\AgdaSpace{}%
\AgdaSymbol{(λ}\AgdaSpace{}%
\AgdaBound{x}\AgdaSpace{}%
\AgdaSymbol{→}\AgdaSpace{}%
\AgdaBound{x}\AgdaSymbol{)}\AgdaSpace{}%
\AgdaOperator{\AgdaInductiveConstructor{,}}\AgdaSpace{}%
\AgdaInductiveConstructor{refl}\<%
\\
\>[0]\AgdaFunction{∙eval}\AgdaSpace{}%
\AgdaInductiveConstructor{∙\#times}\AgdaSpace{}%
\AgdaSymbol{=}\AgdaSpace{}%
\AgdaSymbol{(λ}\AgdaSpace{}%
\AgdaBound{x}\AgdaSpace{}%
\AgdaSymbol{→}\AgdaSpace{}%
\AgdaBound{x}\AgdaSymbol{)}\AgdaSpace{}%
\AgdaOperator{\AgdaInductiveConstructor{,}}\AgdaSpace{}%
\AgdaInductiveConstructor{refl}\<%
\\
\>[0]\AgdaFunction{∙eval}\AgdaSpace{}%
\AgdaSymbol{(}\AgdaInductiveConstructor{return}\AgdaSpace{}%
\AgdaSymbol{\{}\AgdaBound{T}\AgdaSymbol{\})}\AgdaSpace{}%
\AgdaSymbol{=}\AgdaSpace{}%
\AgdaSymbol{(λ}\AgdaSpace{}%
\AgdaBound{\AgdaUnderscore{}}\AgdaSpace{}%
\AgdaSymbol{→}\AgdaSpace{}%
\AgdaField{proj₂}\AgdaSpace{}%
\AgdaOperator{\AgdaFunction{∙⟦}}\AgdaSpace{}%
\AgdaBound{T}\AgdaSpace{}%
\AgdaOperator{\AgdaFunction{⟧}}\AgdaSpace{}%
\AgdaOperator{\AgdaInductiveConstructor{,}}\AgdaSpace{}%
\AgdaInductiveConstructor{refl}\AgdaSymbol{)}\AgdaSpace{}%
\AgdaOperator{\AgdaInductiveConstructor{,}}\AgdaSpace{}%
\AgdaInductiveConstructor{refl}\<%
\\
\>[0]\AgdaFunction{∙eval}\AgdaSpace{}%
\AgdaInductiveConstructor{extract}\AgdaSpace{}%
\AgdaSymbol{=}\AgdaSpace{}%
\AgdaSymbol{(λ}\AgdaSpace{}%
\AgdaSymbol{\{(}\AgdaBound{w}\AgdaSpace{}%
\AgdaOperator{\AgdaInductiveConstructor{,}}\AgdaSpace{}%
\AgdaInductiveConstructor{refl}\AgdaSymbol{)}\AgdaSpace{}%
\AgdaSymbol{→}\AgdaSpace{}%
\AgdaBound{w}\AgdaSymbol{\})}\AgdaSpace{}%
\AgdaOperator{\AgdaInductiveConstructor{,}}\AgdaSpace{}%
\AgdaInductiveConstructor{refl}\<%
\\
\>[0]\AgdaFunction{∙eval}\AgdaSpace{}%
\AgdaSymbol{(}\AgdaInductiveConstructor{η}\AgdaSpace{}%
\AgdaBound{T}\AgdaSymbol{)}\AgdaSpace{}%
\AgdaSymbol{=}\AgdaSpace{}%
\AgdaSymbol{(λ}\AgdaSpace{}%
\AgdaBound{tt}\AgdaSpace{}%
\AgdaSymbol{→}\AgdaSpace{}%
\AgdaSymbol{(}\AgdaField{proj₂}\AgdaSpace{}%
\AgdaOperator{\AgdaFunction{∙⟦}}\AgdaSpace{}%
\AgdaBound{T}\AgdaSpace{}%
\AgdaOperator{\AgdaFunction{⟧}}\AgdaSpace{}%
\AgdaOperator{\AgdaInductiveConstructor{,}}\AgdaSpace{}%
\AgdaInductiveConstructor{refl}\AgdaSymbol{)}\AgdaSpace{}%
\AgdaOperator{\AgdaInductiveConstructor{,}}\AgdaSpace{}%
\AgdaSymbol{λ}\AgdaSpace{}%
\AgdaBound{\AgdaUnderscore{}}\AgdaSpace{}%
\AgdaSymbol{→}\AgdaSpace{}%
\AgdaBound{tt}\AgdaSymbol{)}\AgdaSpace{}%
\AgdaOperator{\AgdaInductiveConstructor{,}}\AgdaSpace{}%
\AgdaInductiveConstructor{refl}\<%
\\
\>[0]\AgdaFunction{∙eval}\AgdaSpace{}%
\AgdaSymbol{(}\AgdaInductiveConstructor{ε}\AgdaSpace{}%
\AgdaBound{T}\AgdaSymbol{)}\AgdaSpace{}%
\AgdaSymbol{=}\AgdaSpace{}%
\AgdaSymbol{(λ}\AgdaSpace{}%
\AgdaSymbol{\{}\AgdaSpace{}%
\AgdaSymbol{((\AgdaUnderscore{}}\AgdaSpace{}%
\AgdaOperator{\AgdaInductiveConstructor{,}}\AgdaSpace{}%
\AgdaInductiveConstructor{refl}\AgdaSymbol{)}\AgdaSpace{}%
\AgdaOperator{\AgdaInductiveConstructor{,}}\AgdaSpace{}%
\AgdaBound{f}\AgdaSymbol{)}\AgdaSpace{}%
\AgdaSymbol{→}\AgdaSpace{}%
\AgdaBound{f}\AgdaSpace{}%
\AgdaSymbol{(}\AgdaField{proj₂}\AgdaSpace{}%
\AgdaOperator{\AgdaFunction{∙⟦}}\AgdaSpace{}%
\AgdaBound{T}\AgdaSpace{}%
\AgdaOperator{\AgdaFunction{⟧}}\AgdaSpace{}%
\AgdaOperator{\AgdaInductiveConstructor{,}}\AgdaSpace{}%
\AgdaInductiveConstructor{refl}\AgdaSymbol{)\})}\AgdaSpace{}%
\AgdaOperator{\AgdaInductiveConstructor{,}}\AgdaSpace{}%
\AgdaInductiveConstructor{refl}\<%
\end{code}}
\newcommand{\PIPFrevrev}{%
\begin{code}%
\>[0]\AgdaFunction{revrev}\AgdaSpace{}%
\AgdaSymbol{:}\AgdaSpace{}%
\AgdaSymbol{\{}\AgdaBound{A}\AgdaSpace{}%
\AgdaSymbol{:}\AgdaSpace{}%
\AgdaDatatype{∙𝕌}\AgdaSymbol{\}}\AgdaSpace{}%
\AgdaSymbol{→}\AgdaSpace{}%
\AgdaInductiveConstructor{∙𝟙/}\AgdaSpace{}%
\AgdaSymbol{(}\AgdaInductiveConstructor{∙𝟙/}\AgdaSpace{}%
\AgdaBound{A}\AgdaSymbol{)}\AgdaSpace{}%
\AgdaOperator{\AgdaDatatype{⧟}}\AgdaSpace{}%
\AgdaOperator{\AgdaInductiveConstructor{❰}}\AgdaSpace{}%
\AgdaBound{A}\AgdaSpace{}%
\AgdaOperator{\AgdaInductiveConstructor{❱}}\<%
\\
\>[0]\AgdaFunction{revrev}\AgdaSpace{}%
\AgdaSymbol{\{}\AgdaBound{A}\AgdaSymbol{\}}\AgdaSpace{}%
\AgdaSymbol{=}%
\>[1029I]\AgdaInductiveConstructor{∙uniti⋆l}\AgdaSpace{}%
\AgdaOperator{\AgdaInductiveConstructor{∙⊚}}\<%
\\
\>[.][@{}l@{}]\<[1029I]%
\>[13]\AgdaSymbol{(}\AgdaInductiveConstructor{η}\AgdaSpace{}%
\AgdaBound{A}\AgdaSpace{}%
\AgdaOperator{\AgdaInductiveConstructor{∙⊗}}\AgdaSpace{}%
\AgdaInductiveConstructor{∙id⟷}\AgdaSymbol{)}\AgdaSpace{}%
\AgdaOperator{\AgdaInductiveConstructor{∙⊚}}\<%
\\
\>[13]\AgdaSymbol{((}\AgdaInductiveConstructor{∙id⟷}\AgdaSpace{}%
\AgdaOperator{\AgdaInductiveConstructor{∙⊗}}\AgdaSpace{}%
\AgdaInductiveConstructor{return}\AgdaSymbol{)}\AgdaSpace{}%
\AgdaOperator{\AgdaInductiveConstructor{∙⊗}}\AgdaSpace{}%
\AgdaInductiveConstructor{∙id⟷}\AgdaSymbol{)}\AgdaSpace{}%
\AgdaOperator{\AgdaInductiveConstructor{∙⊚}}\<%
\\
\>[13]\AgdaInductiveConstructor{∙assocr⋆}\AgdaSpace{}%
\AgdaOperator{\AgdaInductiveConstructor{∙⊚}}\<%
\\
\>[13]\AgdaInductiveConstructor{∙id⟷}\AgdaSpace{}%
\AgdaOperator{\AgdaInductiveConstructor{∙⊗}}\AgdaSpace{}%
\AgdaInductiveConstructor{ε}\AgdaSpace{}%
\AgdaSymbol{(}\AgdaInductiveConstructor{∙𝟙/}\AgdaSpace{}%
\AgdaBound{A}\AgdaSymbol{)}\AgdaSpace{}%
\AgdaOperator{\AgdaInductiveConstructor{∙⊚}}\<%
\\
\>[13]\AgdaInductiveConstructor{∙unite⋆r}\<%
\end{code}}
\newcommand{\PIPFExample}{%
\begin{code}%
\>[0]\AgdaFunction{Ex}\AgdaSpace{}%
\AgdaSymbol{:}\AgdaSpace{}%
\AgdaRecord{Σ}\AgdaSpace{}%
\AgdaSymbol{((}\AgdaBound{x}\AgdaSpace{}%
\AgdaSymbol{:}\AgdaSpace{}%
\AgdaOperator{\AgdaFunction{⟦}}\AgdaSpace{}%
\AgdaFunction{𝔹}\AgdaSpace{}%
\AgdaOperator{\AgdaFunction{⟧}}\AgdaSymbol{)}\AgdaSpace{}%
\AgdaSymbol{→}\AgdaSpace{}%
\AgdaOperator{\AgdaFunction{⟦}}\AgdaSpace{}%
\AgdaFunction{𝔹}\AgdaSpace{}%
\AgdaOperator{\AgdaFunction{⟧}}\AgdaSymbol{)}\AgdaSpace{}%
\AgdaSymbol{(λ}\AgdaSpace{}%
\AgdaBound{f}\AgdaSpace{}%
\AgdaSymbol{→}\AgdaSpace{}%
\AgdaBound{f}\AgdaSpace{}%
\AgdaInductiveConstructor{𝔽}\AgdaSpace{}%
\AgdaOperator{\AgdaDatatype{≡}}\AgdaSpace{}%
\AgdaInductiveConstructor{𝕋}\AgdaSymbol{)}\<%
\\
\>[0]\AgdaFunction{Ex}\AgdaSpace{}%
\AgdaSymbol{=}\AgdaSpace{}%
\AgdaFunction{∙eval}\AgdaSpace{}%
\AgdaSymbol{(}\AgdaInductiveConstructor{∙c}\AgdaSpace{}%
\AgdaFunction{NOT}\AgdaSymbol{)}\<%
\end{code}}
\newcommand{\EXTsig}{%
\begin{code}%
\>[0]\AgdaFunction{Ext𝕌}%
\>[6]\AgdaSymbol{:}\AgdaSpace{}%
\AgdaDatatype{∙𝕌}\AgdaSpace{}%
\AgdaSymbol{→}\AgdaSpace{}%
\AgdaFunction{Σ[}\AgdaSpace{}%
\AgdaBound{t}\AgdaSpace{}%
\AgdaFunction{∈}\AgdaSpace{}%
\AgdaDatatype{𝕌D}\AgdaSpace{}%
\AgdaFunction{]}\AgdaSpace{}%
\AgdaOperator{\AgdaFunction{⟦}}\AgdaSpace{}%
\AgdaBound{t}\AgdaSpace{}%
\AgdaOperator{\AgdaFunction{⟧D}}\<%
\\
\\[\AgdaEmptyExtraSkip]%
\>[0]\AgdaFunction{Ext⧟}\AgdaSpace{}%
\AgdaSymbol{:}\AgdaSpace{}%
\AgdaSymbol{∀}\AgdaSpace{}%
\AgdaSymbol{\{}\AgdaBound{t₁}\AgdaSpace{}%
\AgdaBound{t₂}\AgdaSymbol{\}}\AgdaSpace{}%
\AgdaSymbol{→}\AgdaSpace{}%
\AgdaBound{t₁}\AgdaSpace{}%
\AgdaOperator{\AgdaDatatype{⧟}}\AgdaSpace{}%
\AgdaBound{t₂}\AgdaSpace{}%
\AgdaSymbol{→}%
\>[753I]\AgdaKeyword{let}%
\>[754I]\AgdaSymbol{(}\AgdaBound{s₁}\AgdaSpace{}%
\AgdaOperator{\AgdaInductiveConstructor{,}}\AgdaSpace{}%
\AgdaBound{w₁}\AgdaSymbol{)}\AgdaSpace{}%
\AgdaSymbol{=}\AgdaSpace{}%
\AgdaFunction{Ext𝕌}\AgdaSpace{}%
\AgdaBound{t₁}\<%
\\
\>[.][@{}l@{}]\<[754I]%
\>[33]\AgdaSymbol{(}\AgdaBound{s₂}\AgdaSpace{}%
\AgdaOperator{\AgdaInductiveConstructor{,}}\AgdaSpace{}%
\AgdaBound{w₂}\AgdaSymbol{)}\AgdaSpace{}%
\AgdaSymbol{=}\AgdaSpace{}%
\AgdaFunction{Ext𝕌}\AgdaSpace{}%
\AgdaBound{t₂}\<%
\\
\>[.][@{}l@{}]\<[753I]%
\>[29]\AgdaKeyword{in}\AgdaSpace{}%
\AgdaBound{s₁}\AgdaSpace{}%
\AgdaOperator{\AgdaDatatype{⟷D}}\AgdaSpace{}%
\AgdaBound{s₂}\<%
\\
\\[\AgdaEmptyExtraSkip]%
\>[0]\AgdaFunction{Ext≡}%
\>[6]\AgdaSymbol{:}\AgdaSpace{}%
\AgdaSymbol{∀}\AgdaSpace{}%
\AgdaSymbol{\{}\AgdaBound{t₁}\AgdaSpace{}%
\AgdaBound{t₂}\AgdaSymbol{\}}\AgdaSpace{}%
\AgdaSymbol{→}\AgdaSpace{}%
\AgdaSymbol{(}\AgdaBound{c}\AgdaSpace{}%
\AgdaSymbol{:}\AgdaSpace{}%
\AgdaBound{t₁}\AgdaSpace{}%
\AgdaOperator{\AgdaDatatype{⧟}}\AgdaSpace{}%
\AgdaBound{t₂}\AgdaSymbol{)}\<%
\\
\>[6]\AgdaSymbol{→}\AgdaSpace{}%
\AgdaFunction{interp}\AgdaSpace{}%
\AgdaSymbol{(}\AgdaFunction{Ext⧟}\AgdaSpace{}%
\AgdaBound{c}\AgdaSymbol{)}\AgdaSpace{}%
\AgdaSymbol{(}\AgdaField{proj₂}\AgdaSpace{}%
\AgdaSymbol{(}\AgdaFunction{Ext𝕌}\AgdaSpace{}%
\AgdaBound{t₁}\AgdaSymbol{))}\AgdaSpace{}%
\AgdaOperator{\AgdaDatatype{≡}}\AgdaSpace{}%
\AgdaInductiveConstructor{just}\AgdaSpace{}%
\AgdaSymbol{(}\AgdaField{proj₂}\AgdaSpace{}%
\AgdaSymbol{(}\AgdaFunction{Ext𝕌}\AgdaSpace{}%
\AgdaBound{t₂}\AgdaSymbol{))}\<%
\end{code}}
\newcommand{\EXTu}{%
\begin{code}%
\>[0]\AgdaFunction{Ext𝕌}\AgdaSpace{}%
\AgdaSymbol{(}\AgdaInductiveConstructor{∙𝟙/}\AgdaSpace{}%
\AgdaBound{T}\AgdaSymbol{)}\AgdaSpace{}%
\AgdaSymbol{=}%
\>[831I]\AgdaKeyword{let}\AgdaSpace{}%
\AgdaSymbol{(}\AgdaBound{t}\AgdaSpace{}%
\AgdaOperator{\AgdaInductiveConstructor{,}}\AgdaSpace{}%
\AgdaBound{v}\AgdaSymbol{)}\AgdaSpace{}%
\AgdaSymbol{=}\AgdaSpace{}%
\AgdaFunction{Ext𝕌}\AgdaSpace{}%
\AgdaBound{T}\<%
\\
\>[.][@{}l@{}]\<[831I]%
\>[15]\AgdaKeyword{in}\AgdaSpace{}%
\AgdaOperator{\AgdaInductiveConstructor{𝟙/}}\AgdaSpace{}%
\AgdaBound{v}\AgdaSpace{}%
\AgdaOperator{\AgdaInductiveConstructor{,}}\AgdaSpace{}%
\AgdaInductiveConstructor{↻}\<%
\end{code}}
\newcommand{\EXTcomb}{%
\begin{code}%
\>[0]\AgdaFunction{Ext⧟}\AgdaSpace{}%
\AgdaSymbol{(}\AgdaInductiveConstructor{η}\AgdaSpace{}%
\AgdaBound{T}\AgdaSymbol{)}\AgdaSpace{}%
\AgdaSymbol{=}\AgdaSpace{}%
\AgdaInductiveConstructor{η}\AgdaSpace{}%
\AgdaSymbol{(}\AgdaField{proj₂}\AgdaSpace{}%
\AgdaSymbol{(}\AgdaFunction{Ext𝕌}\AgdaSpace{}%
\AgdaBound{T}\AgdaSymbol{))}\<%
\\
\>[0]\AgdaFunction{Ext⧟}\AgdaSpace{}%
\AgdaSymbol{(}\AgdaInductiveConstructor{ε}\AgdaSpace{}%
\AgdaBound{T}\AgdaSymbol{)}\AgdaSpace{}%
\AgdaSymbol{=}\AgdaSpace{}%
\AgdaInductiveConstructor{ε}\AgdaSpace{}%
\AgdaSymbol{(}\AgdaField{proj₂}\AgdaSpace{}%
\AgdaSymbol{(}\AgdaFunction{Ext𝕌}\AgdaSpace{}%
\AgdaBound{T}\AgdaSymbol{))}\<%
\end{code}}
\newcommand{\EXTeq}{%
\begin{code}%
\>[0]\AgdaFunction{Ext≡}\AgdaSpace{}%
\AgdaSymbol{(}\AgdaInductiveConstructor{ε}\AgdaSpace{}%
\AgdaBound{T}\AgdaSymbol{)}\AgdaSpace{}%
\AgdaKeyword{with}\AgdaSpace{}%
\AgdaSymbol{(}\AgdaField{proj₂}\AgdaSpace{}%
\AgdaSymbol{(}\AgdaFunction{Ext𝕌}\AgdaSpace{}%
\AgdaBound{T}\AgdaSymbol{)}\AgdaSpace{}%
\AgdaOperator{\AgdaFunction{≟ᵤ}}\AgdaSpace{}%
\AgdaField{proj₂}\AgdaSpace{}%
\AgdaSymbol{(}\AgdaFunction{Ext𝕌}\AgdaSpace{}%
\AgdaBound{T}\AgdaSymbol{))}\<%
\\
\>[0]\AgdaSymbol{...}\AgdaSpace{}%
\AgdaSymbol{|}\AgdaSpace{}%
\AgdaInductiveConstructor{yes}\AgdaSpace{}%
\AgdaBound{p}\AgdaSpace{}%
\AgdaSymbol{=}\AgdaSpace{}%
\AgdaInductiveConstructor{refl}\<%
\\
\>[0]\AgdaSymbol{...}\AgdaSpace{}%
\AgdaSymbol{|}\AgdaSpace{}%
\AgdaInductiveConstructor{no}\AgdaSpace{}%
\AgdaBound{¬p}\AgdaSpace{}%
\AgdaSymbol{=}\AgdaSpace{}%
\AgdaFunction{⊥-elim}\AgdaSpace{}%
\AgdaSymbol{(}\AgdaBound{¬p}\AgdaSpace{}%
\AgdaInductiveConstructor{refl}\AgdaSymbol{)}\<%
\end{code}}
\newcommand{\PToffoli}{%
\begin{code}%
\>[0]\AgdaFunction{∙times\#³}%
\>[143I]\AgdaSymbol{:}\AgdaSpace{}%
\AgdaSymbol{∀}\AgdaSpace{}%
\AgdaSymbol{\{}\AgdaBound{t₁}\AgdaSpace{}%
\AgdaBound{t₂}\AgdaSpace{}%
\AgdaBound{t₃}\AgdaSpace{}%
\AgdaBound{v₁}\AgdaSpace{}%
\AgdaBound{v₂}\AgdaSpace{}%
\AgdaBound{v₃}\AgdaSymbol{\}}\<%
\\
\>[.][@{}l@{}]\<[143I]%
\>[9]\AgdaSymbol{→}\AgdaSpace{}%
\AgdaSymbol{((}\AgdaBound{t₁}\AgdaSpace{}%
\AgdaOperator{\AgdaInductiveConstructor{×ᵤ}}\AgdaSpace{}%
\AgdaSymbol{(}\AgdaBound{t₂}\AgdaSpace{}%
\AgdaOperator{\AgdaInductiveConstructor{×ᵤ}}\AgdaSpace{}%
\AgdaBound{t₃}\AgdaSymbol{))}\AgdaSpace{}%
\AgdaOperator{\AgdaInductiveConstructor{\#}}\AgdaSpace{}%
\AgdaSymbol{(}\AgdaBound{v₁}\AgdaSpace{}%
\AgdaOperator{\AgdaInductiveConstructor{,}}\AgdaSpace{}%
\AgdaBound{v₂}\AgdaSpace{}%
\AgdaOperator{\AgdaInductiveConstructor{,}}\AgdaSpace{}%
\AgdaBound{v₃}\AgdaSymbol{))}\<%
\\
\>[9]\AgdaOperator{\AgdaDatatype{⧟}}\AgdaSpace{}%
\AgdaSymbol{((}\AgdaBound{t₁}\AgdaSpace{}%
\AgdaOperator{\AgdaInductiveConstructor{\#}}\AgdaSpace{}%
\AgdaBound{v₁}\AgdaSymbol{)}\AgdaSpace{}%
\AgdaOperator{\AgdaInductiveConstructor{∙×ᵤ}}\AgdaSpace{}%
\AgdaSymbol{(}\AgdaBound{t₂}\AgdaSpace{}%
\AgdaOperator{\AgdaInductiveConstructor{\#}}\AgdaSpace{}%
\AgdaBound{v₂}\AgdaSymbol{)}\AgdaSpace{}%
\AgdaOperator{\AgdaInductiveConstructor{∙×ᵤ}}\AgdaSpace{}%
\AgdaSymbol{(}\AgdaBound{t₃}\AgdaSpace{}%
\AgdaOperator{\AgdaInductiveConstructor{\#}}\AgdaSpace{}%
\AgdaBound{v₃}\AgdaSymbol{))}\<%
\\
\>[0]\AgdaFunction{∙times\#³}\AgdaSpace{}%
\AgdaSymbol{=}\AgdaSpace{}%
\AgdaInductiveConstructor{∙times\#}\AgdaSpace{}%
\AgdaOperator{\AgdaInductiveConstructor{∙⊚}}\AgdaSpace{}%
\AgdaInductiveConstructor{∙id⟷}\AgdaSpace{}%
\AgdaOperator{\AgdaInductiveConstructor{∙⊗}}\AgdaSpace{}%
\AgdaInductiveConstructor{∙times\#}\<%
\\
\\[\AgdaEmptyExtraSkip]%
\>[0]\AgdaFunction{∙\#times³}%
\>[179I]\AgdaSymbol{:}\AgdaSpace{}%
\AgdaSymbol{∀}\AgdaSpace{}%
\AgdaSymbol{\{}\AgdaBound{t₁}\AgdaSpace{}%
\AgdaBound{t₂}\AgdaSpace{}%
\AgdaBound{t₃}\AgdaSpace{}%
\AgdaBound{v₁}\AgdaSpace{}%
\AgdaBound{v₂}\AgdaSpace{}%
\AgdaBound{v₃}\AgdaSymbol{\}}\<%
\\
\>[.][@{}l@{}]\<[179I]%
\>[9]\AgdaSymbol{→}\AgdaSpace{}%
\AgdaSymbol{((}\AgdaBound{t₁}\AgdaSpace{}%
\AgdaOperator{\AgdaInductiveConstructor{\#}}\AgdaSpace{}%
\AgdaBound{v₁}\AgdaSymbol{)}\AgdaSpace{}%
\AgdaOperator{\AgdaInductiveConstructor{∙×ᵤ}}\AgdaSpace{}%
\AgdaSymbol{(}\AgdaBound{t₂}\AgdaSpace{}%
\AgdaOperator{\AgdaInductiveConstructor{\#}}\AgdaSpace{}%
\AgdaBound{v₂}\AgdaSymbol{)}\AgdaSpace{}%
\AgdaOperator{\AgdaInductiveConstructor{∙×ᵤ}}\AgdaSpace{}%
\AgdaSymbol{(}\AgdaBound{t₃}\AgdaSpace{}%
\AgdaOperator{\AgdaInductiveConstructor{\#}}\AgdaSpace{}%
\AgdaBound{v₃}\AgdaSymbol{))}\<%
\\
\>[9]\AgdaOperator{\AgdaDatatype{⧟}}\AgdaSpace{}%
\AgdaSymbol{((}\AgdaBound{t₁}\AgdaSpace{}%
\AgdaOperator{\AgdaInductiveConstructor{×ᵤ}}\AgdaSpace{}%
\AgdaSymbol{(}\AgdaBound{t₂}\AgdaSpace{}%
\AgdaOperator{\AgdaInductiveConstructor{×ᵤ}}\AgdaSpace{}%
\AgdaBound{t₃}\AgdaSymbol{))}\AgdaSpace{}%
\AgdaOperator{\AgdaInductiveConstructor{\#}}\AgdaSpace{}%
\AgdaSymbol{(}\AgdaBound{v₁}\AgdaSpace{}%
\AgdaOperator{\AgdaInductiveConstructor{,}}\AgdaSpace{}%
\AgdaBound{v₂}\AgdaSpace{}%
\AgdaOperator{\AgdaInductiveConstructor{,}}\AgdaSpace{}%
\AgdaBound{v₃}\AgdaSymbol{))}\<%
\\
\>[0]\AgdaFunction{∙\#times³}\AgdaSpace{}%
\AgdaSymbol{=}\AgdaSpace{}%
\AgdaInductiveConstructor{∙id⟷}\AgdaSpace{}%
\AgdaOperator{\AgdaInductiveConstructor{∙⊗}}\AgdaSpace{}%
\AgdaInductiveConstructor{∙\#times}\AgdaSpace{}%
\AgdaOperator{\AgdaInductiveConstructor{∙⊚}}\AgdaSpace{}%
\AgdaInductiveConstructor{∙\#times}\<%
\\
\\[\AgdaEmptyExtraSkip]%
\>[0]\AgdaFunction{∙TOFFOLI}\AgdaSpace{}%
\AgdaSymbol{:}\AgdaSpace{}%
\AgdaSymbol{∀}\AgdaSpace{}%
\AgdaSymbol{\{}\AgdaBound{a}\AgdaSpace{}%
\AgdaBound{b}\AgdaSpace{}%
\AgdaBound{c}\AgdaSymbol{\}}%
\>[220I]\AgdaSymbol{→}%
\>[25]\AgdaSymbol{(}\AgdaFunction{𝔹}\AgdaSpace{}%
\AgdaOperator{\AgdaInductiveConstructor{\#}}\AgdaSpace{}%
\AgdaBound{a}\AgdaSpace{}%
\AgdaOperator{\AgdaInductiveConstructor{∙×ᵤ}}\AgdaSpace{}%
\AgdaFunction{𝔹}\AgdaSpace{}%
\AgdaOperator{\AgdaInductiveConstructor{\#}}\AgdaSpace{}%
\AgdaBound{b}\AgdaSpace{}%
\AgdaOperator{\AgdaInductiveConstructor{∙×ᵤ}}\AgdaSpace{}%
\AgdaFunction{𝔹}\AgdaSpace{}%
\AgdaOperator{\AgdaInductiveConstructor{\#}}\AgdaSpace{}%
\AgdaBound{c}\AgdaSymbol{)}\<%
\\
\>[.][@{}l@{}]\<[220I]%
\>[21]\AgdaOperator{\AgdaDatatype{⧟}}\AgdaSpace{}%
\AgdaSymbol{(}\AgdaFunction{𝔹}\AgdaSpace{}%
\AgdaOperator{\AgdaInductiveConstructor{\#}}\AgdaSpace{}%
\AgdaBound{a}\AgdaSpace{}%
\AgdaOperator{\AgdaInductiveConstructor{∙×ᵤ}}\AgdaSpace{}%
\AgdaFunction{𝔹}\AgdaSpace{}%
\AgdaOperator{\AgdaInductiveConstructor{\#}}\AgdaSpace{}%
\AgdaBound{b}\AgdaSpace{}%
\AgdaOperator{\AgdaInductiveConstructor{∙×ᵤ}}\AgdaSpace{}%
\AgdaFunction{𝔹}\AgdaSpace{}%
\AgdaOperator{\AgdaInductiveConstructor{\#}}\AgdaSpace{}%
\AgdaSymbol{((}\AgdaBound{a}\AgdaSpace{}%
\AgdaOperator{\AgdaFunction{\&}}\AgdaSpace{}%
\AgdaBound{b}\AgdaSymbol{)}\AgdaSpace{}%
\AgdaOperator{\AgdaFunction{\textasciicircum{}}}\AgdaSpace{}%
\AgdaBound{c}\AgdaSymbol{))}\<%
\\
\>[0]\AgdaFunction{∙TOFFOLI}\AgdaSpace{}%
\AgdaSymbol{=}\AgdaSpace{}%
\AgdaFunction{∙\#times³}\AgdaSpace{}%
\AgdaOperator{\AgdaInductiveConstructor{∙⊚}}\AgdaSpace{}%
\AgdaFunction{TOFFOLI'}\AgdaSpace{}%
\AgdaOperator{\AgdaInductiveConstructor{∙⊚}}\AgdaSpace{}%
\AgdaFunction{∙times\#³}\<%
\\
\>[0][@{}l@{\AgdaIndent{0}}]%
\>[2]\AgdaKeyword{where}\<%
\\
\>[2][@{}l@{\AgdaIndent{0}}]%
\>[4]\AgdaFunction{TOFFOLI'}\AgdaSpace{}%
\AgdaSymbol{:}\AgdaSpace{}%
\AgdaSymbol{∀}\AgdaSpace{}%
\AgdaSymbol{\{}\AgdaBound{a}\AgdaSpace{}%
\AgdaBound{b}\AgdaSpace{}%
\AgdaBound{c}\AgdaSymbol{\}}\AgdaSpace{}%
\AgdaSymbol{→}\AgdaSpace{}%
\AgdaSymbol{(}\AgdaFunction{𝔹³}\AgdaSpace{}%
\AgdaOperator{\AgdaInductiveConstructor{\#}}\AgdaSpace{}%
\AgdaSymbol{(}\AgdaBound{a}\AgdaSpace{}%
\AgdaOperator{\AgdaInductiveConstructor{,}}\AgdaSpace{}%
\AgdaBound{b}\AgdaSpace{}%
\AgdaOperator{\AgdaInductiveConstructor{,}}\AgdaSpace{}%
\AgdaBound{c}\AgdaSymbol{))}\AgdaSpace{}%
\AgdaOperator{\AgdaDatatype{⧟}}\AgdaSpace{}%
\AgdaSymbol{(}\AgdaFunction{𝔹³}\AgdaSpace{}%
\AgdaOperator{\AgdaInductiveConstructor{\#}}\AgdaSpace{}%
\AgdaSymbol{(}\AgdaBound{a}\AgdaSpace{}%
\AgdaOperator{\AgdaInductiveConstructor{,}}\AgdaSpace{}%
\AgdaBound{b}\AgdaSpace{}%
\AgdaOperator{\AgdaInductiveConstructor{,}}\AgdaSpace{}%
\AgdaSymbol{((}\AgdaBound{a}\AgdaSpace{}%
\AgdaOperator{\AgdaFunction{\&}}\AgdaSpace{}%
\AgdaBound{b}\AgdaSymbol{)}\AgdaSpace{}%
\AgdaOperator{\AgdaFunction{\textasciicircum{}}}\AgdaSpace{}%
\AgdaBound{c}\AgdaSymbol{)))}\<%
\\
\>[4]\AgdaFunction{TOFFOLI'}\AgdaSpace{}%
\AgdaSymbol{\{}\AgdaInductiveConstructor{𝔽}\AgdaSymbol{\}}\AgdaSpace{}%
\AgdaSymbol{\{}\AgdaInductiveConstructor{𝔽}\AgdaSymbol{\}}\AgdaSpace{}%
\AgdaSymbol{\{}\AgdaBound{c}\AgdaSymbol{\}}\AgdaSpace{}%
\AgdaSymbol{=}\AgdaSpace{}%
\AgdaInductiveConstructor{∙c}\AgdaSpace{}%
\AgdaFunction{TOFFOLI}\<%
\\
\>[4]\AgdaFunction{TOFFOLI'}\AgdaSpace{}%
\AgdaSymbol{\{}\AgdaInductiveConstructor{𝔽}\AgdaSymbol{\}}\AgdaSpace{}%
\AgdaSymbol{\{}\AgdaInductiveConstructor{𝕋}\AgdaSymbol{\}}\AgdaSpace{}%
\AgdaSymbol{\{}\AgdaBound{c}\AgdaSymbol{\}}\AgdaSpace{}%
\AgdaSymbol{=}\AgdaSpace{}%
\AgdaInductiveConstructor{∙c}\AgdaSpace{}%
\AgdaFunction{TOFFOLI}\<%
\\
\>[4]\AgdaFunction{TOFFOLI'}\AgdaSpace{}%
\AgdaSymbol{\{}\AgdaInductiveConstructor{𝕋}\AgdaSymbol{\}}\AgdaSpace{}%
\AgdaSymbol{\{}\AgdaInductiveConstructor{𝔽}\AgdaSymbol{\}}\AgdaSpace{}%
\AgdaSymbol{\{}\AgdaBound{c}\AgdaSymbol{\}}\AgdaSpace{}%
\AgdaSymbol{=}\AgdaSpace{}%
\AgdaInductiveConstructor{∙c}\AgdaSpace{}%
\AgdaFunction{TOFFOLI}\<%
\\
\>[4]\AgdaFunction{TOFFOLI'}\AgdaSpace{}%
\AgdaSymbol{\{}\AgdaInductiveConstructor{𝕋}\AgdaSymbol{\}}\AgdaSpace{}%
\AgdaSymbol{\{}\AgdaInductiveConstructor{𝕋}\AgdaSymbol{\}}\AgdaSpace{}%
\AgdaSymbol{\{}\AgdaInductiveConstructor{𝔽}\AgdaSymbol{\}}\AgdaSpace{}%
\AgdaSymbol{=}\AgdaSpace{}%
\AgdaInductiveConstructor{∙c}\AgdaSpace{}%
\AgdaFunction{TOFFOLI}\<%
\\
\>[4]\AgdaFunction{TOFFOLI'}\AgdaSpace{}%
\AgdaSymbol{\{}\AgdaInductiveConstructor{𝕋}\AgdaSymbol{\}}\AgdaSpace{}%
\AgdaSymbol{\{}\AgdaInductiveConstructor{𝕋}\AgdaSymbol{\}}\AgdaSpace{}%
\AgdaSymbol{\{}\AgdaInductiveConstructor{𝕋}\AgdaSymbol{\}}\AgdaSpace{}%
\AgdaSymbol{=}\AgdaSpace{}%
\AgdaInductiveConstructor{∙c}\AgdaSpace{}%
\AgdaFunction{TOFFOLI}\<%
\end{code}}
\newcommand{\PToffoliExtEq}{%
\begin{code}%
\>[0]\AgdaFunction{ExtEq}\AgdaSpace{}%
\AgdaSymbol{:}\AgdaSpace{}%
\AgdaFunction{Ext⧟}\AgdaSpace{}%
\AgdaSymbol{(}\AgdaFunction{∙TOFFOLI}\AgdaSpace{}%
\AgdaSymbol{\{}\AgdaInductiveConstructor{𝔽}\AgdaSymbol{\}}\AgdaSpace{}%
\AgdaSymbol{\{}\AgdaInductiveConstructor{𝔽}\AgdaSymbol{\}}\AgdaSpace{}%
\AgdaSymbol{\{}\AgdaInductiveConstructor{𝔽}\AgdaSymbol{\})}\AgdaSpace{}%
\AgdaOperator{\AgdaDatatype{≡}}\AgdaSpace{}%
\AgdaFunction{Ext⧟}\AgdaSpace{}%
\AgdaSymbol{(}\AgdaFunction{∙TOFFOLI}\AgdaSpace{}%
\AgdaSymbol{\{}\AgdaInductiveConstructor{𝕋}\AgdaSymbol{\}}\AgdaSpace{}%
\AgdaSymbol{\{}\AgdaInductiveConstructor{𝕋}\AgdaSymbol{\}}\AgdaSpace{}%
\AgdaSymbol{\{}\AgdaInductiveConstructor{𝕋}\AgdaSymbol{\})}\<%
\\
\>[0]\AgdaFunction{ExtEq}\AgdaSpace{}%
\AgdaSymbol{=}\AgdaSpace{}%
\AgdaInductiveConstructor{refl}\<%
\end{code}}
\newcommand{\PToffolifour}{%
\begin{code}%
\>[0]\AgdaFunction{∙TOFFOLI₄}\AgdaSpace{}%
\AgdaSymbol{:}%
\>[13]\AgdaSymbol{∀}\AgdaSpace{}%
\AgdaSymbol{\{}\AgdaBound{a}\AgdaSpace{}%
\AgdaBound{b}\AgdaSpace{}%
\AgdaBound{c}\AgdaSpace{}%
\AgdaBound{d}\AgdaSymbol{\}}\AgdaSpace{}%
\AgdaSymbol{→}\<%
\\
\>[13]\AgdaSymbol{(}\AgdaFunction{𝔹}\AgdaSpace{}%
\AgdaOperator{\AgdaInductiveConstructor{\#}}\AgdaSpace{}%
\AgdaBound{a}\AgdaSpace{}%
\AgdaOperator{\AgdaInductiveConstructor{∙×ᵤ}}\AgdaSpace{}%
\AgdaFunction{𝔹}\AgdaSpace{}%
\AgdaOperator{\AgdaInductiveConstructor{\#}}\AgdaSpace{}%
\AgdaBound{b}\AgdaSpace{}%
\AgdaOperator{\AgdaInductiveConstructor{∙×ᵤ}}\AgdaSpace{}%
\AgdaFunction{𝔹}\AgdaSpace{}%
\AgdaOperator{\AgdaInductiveConstructor{\#}}\AgdaSpace{}%
\AgdaBound{c}\AgdaSpace{}%
\AgdaOperator{\AgdaInductiveConstructor{∙×ᵤ}}\AgdaSpace{}%
\AgdaFunction{𝔹}\AgdaSpace{}%
\AgdaOperator{\AgdaInductiveConstructor{\#}}\AgdaSpace{}%
\AgdaBound{d}\AgdaSymbol{)}\AgdaSpace{}%
\AgdaOperator{\AgdaDatatype{⧟}}\<%
\\
\>[13]\AgdaSymbol{(}\AgdaFunction{𝔹}\AgdaSpace{}%
\AgdaOperator{\AgdaInductiveConstructor{\#}}\AgdaSpace{}%
\AgdaBound{a}\AgdaSpace{}%
\AgdaOperator{\AgdaInductiveConstructor{∙×ᵤ}}\AgdaSpace{}%
\AgdaFunction{𝔹}\AgdaSpace{}%
\AgdaOperator{\AgdaInductiveConstructor{\#}}\AgdaSpace{}%
\AgdaBound{b}\AgdaSpace{}%
\AgdaOperator{\AgdaInductiveConstructor{∙×ᵤ}}\AgdaSpace{}%
\AgdaFunction{𝔹}\AgdaSpace{}%
\AgdaOperator{\AgdaInductiveConstructor{\#}}\AgdaSpace{}%
\AgdaBound{c}\AgdaSpace{}%
\AgdaOperator{\AgdaInductiveConstructor{∙×ᵤ}}\AgdaSpace{}%
\AgdaFunction{𝔹}\AgdaSpace{}%
\AgdaOperator{\AgdaInductiveConstructor{\#}}\AgdaSpace{}%
\AgdaSymbol{(((}\AgdaBound{a}\AgdaSpace{}%
\AgdaOperator{\AgdaFunction{\&}}\AgdaSpace{}%
\AgdaBound{b}\AgdaSymbol{)}\AgdaSpace{}%
\AgdaOperator{\AgdaFunction{\&}}\AgdaSpace{}%
\AgdaBound{c}\AgdaSymbol{)}\AgdaSpace{}%
\AgdaOperator{\AgdaFunction{\textasciicircum{}}}\AgdaSpace{}%
\AgdaBound{d}\AgdaSymbol{))}\<%
\\
\>[0]\AgdaFunction{∙TOFFOLI₄}\AgdaSpace{}%
\AgdaSymbol{=}\<%
\\
\>[0][@{}l@{\AgdaIndent{0}}]%
\>[2]\AgdaInductiveConstructor{∙assocl⋆}\AgdaSpace{}%
\AgdaOperator{\AgdaInductiveConstructor{∙⊚}}\<%
\\
\>[2]\AgdaSymbol{((}\AgdaInductiveConstructor{∙uniti⋆r}\AgdaSpace{}%
\AgdaOperator{\AgdaInductiveConstructor{∙⊚}}\AgdaSpace{}%
\AgdaSymbol{(}\AgdaInductiveConstructor{∙id⟷}\AgdaSpace{}%
\AgdaOperator{\AgdaInductiveConstructor{∙⊗}}\AgdaSpace{}%
\AgdaSymbol{(}\AgdaInductiveConstructor{η}\AgdaSpace{}%
\AgdaSymbol{(}\AgdaFunction{𝔹}\AgdaSpace{}%
\AgdaOperator{\AgdaInductiveConstructor{\#}}\AgdaSpace{}%
\AgdaInductiveConstructor{𝔽}\AgdaSymbol{)}\AgdaSpace{}%
\AgdaOperator{\AgdaInductiveConstructor{∙⊚}}\AgdaSpace{}%
\AgdaSymbol{(}\AgdaInductiveConstructor{extract}\AgdaSpace{}%
\AgdaOperator{\AgdaInductiveConstructor{∙⊗}}\AgdaSpace{}%
\AgdaInductiveConstructor{∙id⟷}\AgdaSymbol{))))}\AgdaSpace{}%
\AgdaOperator{\AgdaInductiveConstructor{∙⊗}}\AgdaSpace{}%
\AgdaInductiveConstructor{∙id⟷}\AgdaSymbol{)}\<%
\\
\>[2]\AgdaOperator{\AgdaInductiveConstructor{∙⊚}}\<%
\\
\>[2]\AgdaSymbol{((}\AgdaInductiveConstructor{∙assocl⋆}\AgdaSpace{}%
\AgdaOperator{\AgdaInductiveConstructor{∙⊚}}\AgdaSpace{}%
\AgdaSymbol{((}\AgdaInductiveConstructor{∙assocr⋆}\AgdaSpace{}%
\AgdaOperator{\AgdaInductiveConstructor{∙⊚}}\AgdaSpace{}%
\AgdaFunction{∙TOFFOLI}\AgdaSymbol{)}\AgdaSpace{}%
\AgdaOperator{\AgdaInductiveConstructor{∙⊗}}\AgdaSpace{}%
\AgdaInductiveConstructor{∙id⟷}\AgdaSymbol{)}\AgdaSpace{}%
\AgdaOperator{\AgdaInductiveConstructor{∙⊚}}\AgdaSpace{}%
\AgdaFunction{shuffle}\AgdaSymbol{)}\AgdaSpace{}%
\AgdaOperator{\AgdaInductiveConstructor{∙⊗}}\AgdaSpace{}%
\AgdaInductiveConstructor{∙id⟷}\AgdaSymbol{)}\<%
\\
\>[2]\AgdaOperator{\AgdaInductiveConstructor{∙⊚}}\<%
\\
\>[2]\AgdaInductiveConstructor{∙assocr⋆}\AgdaSpace{}%
\AgdaOperator{\AgdaInductiveConstructor{∙⊚}}\AgdaSpace{}%
\AgdaSymbol{(}\AgdaInductiveConstructor{∙id⟷}\AgdaSpace{}%
\AgdaOperator{\AgdaInductiveConstructor{∙⊗}}\AgdaSpace{}%
\AgdaFunction{∙TOFFOLI}\AgdaSymbol{)}\AgdaSpace{}%
\AgdaOperator{\AgdaInductiveConstructor{∙⊚}}\AgdaSpace{}%
\AgdaInductiveConstructor{∙assocl⋆}\<%
\\
\>[2]\AgdaOperator{\AgdaInductiveConstructor{∙⊚}}\<%
\\
\>[2]\AgdaSymbol{((}\AgdaFunction{shuffle}\AgdaSpace{}%
\AgdaOperator{\AgdaInductiveConstructor{∙⊚}}\AgdaSpace{}%
\AgdaSymbol{((}\AgdaFunction{∙TOFFOLI}\AgdaSpace{}%
\AgdaOperator{\AgdaInductiveConstructor{∙⊚}}\AgdaSpace{}%
\AgdaInductiveConstructor{∙assocl⋆}\AgdaSymbol{)}\AgdaSpace{}%
\AgdaOperator{\AgdaInductiveConstructor{∙⊗}}\AgdaSpace{}%
\AgdaInductiveConstructor{∙id⟷}\AgdaSymbol{)}\AgdaSpace{}%
\AgdaOperator{\AgdaInductiveConstructor{∙⊚}}\AgdaSpace{}%
\AgdaInductiveConstructor{∙assocr⋆}\AgdaSymbol{)}\AgdaSpace{}%
\AgdaOperator{\AgdaInductiveConstructor{∙⊗}}\AgdaSpace{}%
\AgdaInductiveConstructor{∙id⟷}\AgdaSymbol{)}\<%
\\
\>[2]\AgdaOperator{\AgdaInductiveConstructor{∙⊚}}\<%
\\
\>[2]\AgdaSymbol{(((}\AgdaInductiveConstructor{∙id⟷}\AgdaSpace{}%
\AgdaOperator{\AgdaInductiveConstructor{∙⊗}}\AgdaSpace{}%
\AgdaSymbol{((}\AgdaInductiveConstructor{return}\AgdaSpace{}%
\AgdaOperator{\AgdaInductiveConstructor{∙⊗}}\AgdaSpace{}%
\AgdaInductiveConstructor{∙id⟷}\AgdaSymbol{)}\AgdaSpace{}%
\AgdaOperator{\AgdaInductiveConstructor{∙⊚}}\AgdaSpace{}%
\AgdaInductiveConstructor{ε}\AgdaSpace{}%
\AgdaSymbol{(}\AgdaFunction{𝔹}\AgdaSpace{}%
\AgdaOperator{\AgdaInductiveConstructor{\#}}\AgdaSpace{}%
\AgdaInductiveConstructor{𝔽}\AgdaSymbol{)))}\AgdaSpace{}%
\AgdaOperator{\AgdaInductiveConstructor{∙⊚}}\AgdaSpace{}%
\AgdaInductiveConstructor{∙unite⋆r}\AgdaSymbol{)}\AgdaSpace{}%
\AgdaOperator{\AgdaInductiveConstructor{∙⊗}}\AgdaSpace{}%
\AgdaInductiveConstructor{∙id⟷}\AgdaSymbol{)}\<%
\\
\>[2]\AgdaOperator{\AgdaInductiveConstructor{∙⊚}}\<%
\\
\>[2]\AgdaInductiveConstructor{∙assocr⋆}\<%
\\
\\[\AgdaEmptyExtraSkip]%
\>[2]\AgdaKeyword{where}\<%
\\
\>[2][@{}l@{\AgdaIndent{0}}]%
\>[4]\AgdaFunction{shuffle}\AgdaSpace{}%
\AgdaSymbol{:}%
\>[15]\AgdaSymbol{∀}\AgdaSpace{}%
\AgdaSymbol{\{}\AgdaBound{A}\AgdaSpace{}%
\AgdaBound{B}\AgdaSpace{}%
\AgdaBound{C}\AgdaSpace{}%
\AgdaBound{D}\AgdaSymbol{\}}\AgdaSpace{}%
\AgdaSymbol{→}\<%
\\
\>[15]\AgdaSymbol{(}\AgdaBound{A}\AgdaSpace{}%
\AgdaOperator{\AgdaInductiveConstructor{∙×ᵤ}}\AgdaSpace{}%
\AgdaBound{B}\AgdaSpace{}%
\AgdaOperator{\AgdaInductiveConstructor{∙×ᵤ}}\AgdaSpace{}%
\AgdaBound{C}\AgdaSymbol{)}\AgdaSpace{}%
\AgdaOperator{\AgdaInductiveConstructor{∙×ᵤ}}\AgdaSpace{}%
\AgdaBound{D}\AgdaSpace{}%
\AgdaOperator{\AgdaDatatype{⧟}}\AgdaSpace{}%
\AgdaSymbol{(}\AgdaBound{A}\AgdaSpace{}%
\AgdaOperator{\AgdaInductiveConstructor{∙×ᵤ}}\AgdaSpace{}%
\AgdaBound{B}\AgdaSpace{}%
\AgdaOperator{\AgdaInductiveConstructor{∙×ᵤ}}\AgdaSpace{}%
\AgdaBound{D}\AgdaSymbol{)}\AgdaSpace{}%
\AgdaOperator{\AgdaInductiveConstructor{∙×ᵤ}}\AgdaSpace{}%
\AgdaBound{C}\<%
\\
\>[4]\AgdaFunction{shuffle}\AgdaSpace{}%
\AgdaSymbol{=}%
\>[15]\AgdaInductiveConstructor{∙assocr⋆}\AgdaSpace{}%
\AgdaOperator{\AgdaInductiveConstructor{∙⊚}}\AgdaSpace{}%
\AgdaSymbol{(}\AgdaInductiveConstructor{∙id⟷}\AgdaSpace{}%
\AgdaOperator{\AgdaInductiveConstructor{∙⊗}}\AgdaSpace{}%
\AgdaSymbol{(}\AgdaInductiveConstructor{∙assocr⋆}\AgdaSpace{}%
\AgdaOperator{\AgdaInductiveConstructor{∙⊚}}\AgdaSpace{}%
\AgdaSymbol{(}\AgdaInductiveConstructor{∙id⟷}\AgdaSpace{}%
\AgdaOperator{\AgdaInductiveConstructor{∙⊗}}\AgdaSpace{}%
\AgdaInductiveConstructor{∙swap⋆}\AgdaSymbol{)))}\AgdaSpace{}%
\AgdaOperator{\AgdaInductiveConstructor{∙⊚}}\<%
\\
\>[15]\AgdaSymbol{(}\AgdaInductiveConstructor{∙id⟷}\AgdaSpace{}%
\AgdaOperator{\AgdaInductiveConstructor{∙⊗}}\AgdaSpace{}%
\AgdaInductiveConstructor{∙assocl⋆}\AgdaSymbol{)}\AgdaSpace{}%
\AgdaOperator{\AgdaInductiveConstructor{∙⊚}}\AgdaSpace{}%
\AgdaInductiveConstructor{∙assocl⋆}\<%
\end{code}}
\newcommand{\PToffoliExt}{%
\begin{code}%
\>[0]\AgdaFunction{TOFFOLI₄}\AgdaSpace{}%
\AgdaSymbol{:}\AgdaSpace{}%
\AgdaFunction{𝔹⁴}\AgdaSpace{}%
\AgdaOperator{\AgdaDatatype{⟷}}\AgdaSpace{}%
\AgdaFunction{𝔹⁴}\<%
\\
\>[0]\AgdaFunction{TOFFOLI₄}\AgdaSpace{}%
\AgdaSymbol{=}\AgdaSpace{}%
\AgdaFunction{Ext⧟}\AgdaSpace{}%
\AgdaSymbol{(}\AgdaFunction{∙TOFFOLI₄}\AgdaSpace{}%
\AgdaSymbol{\{}\AgdaInductiveConstructor{𝔽}\AgdaSymbol{\}}\AgdaSpace{}%
\AgdaSymbol{\{}\AgdaInductiveConstructor{𝔽}\AgdaSymbol{\}}\AgdaSpace{}%
\AgdaSymbol{\{}\AgdaInductiveConstructor{𝔽}\AgdaSymbol{\}}\AgdaSpace{}%
\AgdaSymbol{\{}\AgdaInductiveConstructor{𝔽}\AgdaSymbol{\})}\<%
\end{code}}
\newcommand{\PToffoliExtTest}{%
\begin{code}%
\>[0]\AgdaFunction{TOFFOLI₄Test₁}\AgdaSpace{}%
\AgdaSymbol{:}\AgdaSpace{}%
\AgdaFunction{interp}\AgdaSpace{}%
\AgdaFunction{TOFFOLI₄}\AgdaSpace{}%
\AgdaSymbol{(}\AgdaInductiveConstructor{𝔽}\AgdaSpace{}%
\AgdaOperator{\AgdaInductiveConstructor{,}}\AgdaSpace{}%
\AgdaInductiveConstructor{𝔽}\AgdaSpace{}%
\AgdaOperator{\AgdaInductiveConstructor{,}}\AgdaSpace{}%
\AgdaInductiveConstructor{𝔽}\AgdaSpace{}%
\AgdaOperator{\AgdaInductiveConstructor{,}}\AgdaSpace{}%
\AgdaInductiveConstructor{𝔽}\AgdaSymbol{)}\AgdaSpace{}%
\AgdaOperator{\AgdaDatatype{≡}}\AgdaSpace{}%
\AgdaInductiveConstructor{just}\AgdaSpace{}%
\AgdaSymbol{(}\AgdaInductiveConstructor{𝔽}\AgdaSpace{}%
\AgdaOperator{\AgdaInductiveConstructor{,}}\AgdaSpace{}%
\AgdaInductiveConstructor{𝔽}\AgdaSpace{}%
\AgdaOperator{\AgdaInductiveConstructor{,}}\AgdaSpace{}%
\AgdaInductiveConstructor{𝔽}\AgdaSpace{}%
\AgdaOperator{\AgdaInductiveConstructor{,}}\AgdaSpace{}%
\AgdaInductiveConstructor{𝔽}\AgdaSymbol{)}\<%
\\
\>[0]\AgdaFunction{TOFFOLI₄Test₁}\AgdaSpace{}%
\AgdaSymbol{=}\AgdaSpace{}%
\AgdaInductiveConstructor{refl}\<%
\\
\\[\AgdaEmptyExtraSkip]%
\>[0]\AgdaFunction{TOFFOLI₄Test₂}\AgdaSpace{}%
\AgdaSymbol{:}\AgdaSpace{}%
\AgdaFunction{interp}\AgdaSpace{}%
\AgdaFunction{TOFFOLI₄}\AgdaSpace{}%
\AgdaSymbol{(}\AgdaInductiveConstructor{𝕋}\AgdaSpace{}%
\AgdaOperator{\AgdaInductiveConstructor{,}}\AgdaSpace{}%
\AgdaInductiveConstructor{𝕋}\AgdaSpace{}%
\AgdaOperator{\AgdaInductiveConstructor{,}}\AgdaSpace{}%
\AgdaInductiveConstructor{𝕋}\AgdaSpace{}%
\AgdaOperator{\AgdaInductiveConstructor{,}}\AgdaSpace{}%
\AgdaInductiveConstructor{𝔽}\AgdaSymbol{)}\AgdaSpace{}%
\AgdaOperator{\AgdaDatatype{≡}}\AgdaSpace{}%
\AgdaInductiveConstructor{just}\AgdaSpace{}%
\AgdaSymbol{(}\AgdaInductiveConstructor{𝕋}\AgdaSpace{}%
\AgdaOperator{\AgdaInductiveConstructor{,}}\AgdaSpace{}%
\AgdaInductiveConstructor{𝕋}\AgdaSpace{}%
\AgdaOperator{\AgdaInductiveConstructor{,}}\AgdaSpace{}%
\AgdaInductiveConstructor{𝕋}\AgdaSpace{}%
\AgdaOperator{\AgdaInductiveConstructor{,}}\AgdaSpace{}%
\AgdaInductiveConstructor{𝕋}\AgdaSymbol{)}\<%
\\
\>[0]\AgdaFunction{TOFFOLI₄Test₂}\AgdaSpace{}%
\AgdaSymbol{=}\AgdaSpace{}%
\AgdaInductiveConstructor{refl}\<%
\end{code}}
\begin{document}
\title{Fractional Types}
\subtitle{Expressive and Safe Space Management for Ancilla Bits}
%
%
\author{
  Chao-Hong Chen\inst{1} \and
  Vikraman Choudhury\inst{1} \and
  Jacques Carette\inst{2} \and
  Amr Sabry\inst{1}
  }
\authorrunning{C. Chen et al.}
%
\institute{
  Indiana University, Bloomington IN, USA\\
  \email{\{chen464,vikraman,sabry\}@indiana.edu}
  \and
  McMaster University, Ontario, Canada\\
  \email{carette@mcmaster.ca}
}
\maketitle              
\begin{abstract}
In reversible computing, the management of space is
subject to two broad classes of constraints. First, as with
general-purpose computation, every allocation must be paired
with a matching de-allocation. Second, space can only be safely
de-allocated if its contents are restored to their initial value
from allocation time. Generally speaking, the state of the art
provides limited partial solutions that address the first
constraint by imposing a stack discipline and by leaving the second
constraint to programmers' assertions.

We propose a novel approach based on the idea of \emph{fractional
types}. As a simple intuitive example, allocation of a new boolean
value initialized to \AgdaInductiveConstructor{false} also creates a value
$\oneover{\AgdaInductiveConstructor{false}}$ that can be thought of as a
garbage collection (GC) process specialized to reclaim, and only reclaim,
storage containing the value $\AgdaInductiveConstructor{false}$. This GC
process is a first-class entity that can be manipulated, decomposed
into smaller processes and combined with other GC processes.

We formalize this idea in the context of a reversible language founded
on type isomorphisms, prove its fundamental correctness properties, and
illustrate its expressiveness using a wide variety of examples. The
development is backed by a fully-formalized Agda
implementation~\footnote{\url{https://github.com/DreamLinuxer/FracAncilla}}.

\keywords{Fractional types \and Ancilla Bits \and Agda.}
\end{abstract}
\section{Introduction}

We solve the ancilla problem in reversible computation
using a novel concept: \emph{fractional types}. In
the next section, we introduce the problem of ancilla management, motivate its
importance, and explain the limitations of current ``stack-based''
solutions with runtime checks.

Although the concept of fractional types could potentially be integrated with
general-purpose languages, its natural technical definition exploits
symmetries present in the categorical model of type isomorphisms.  To
that end, we first review in Sec.~\ref{sec:pi} our previous
work~\cite{rc2011,DBLP:conf/esop/CaretteS16,rc2012,James:2012:IE:2103656.2103667}
on reversible programming language built using type isomorphisms. In
Sec.~\ref{sec:dyn}, we introduce a simple version of fractional types
that allows allocation and de-allocation of ancillae in patterns
beyond the model in scope but, like existing stack-based solutions, still
requires a runtime check to verify the safety of de-allocation. In
Sec.~\ref{sec:dep} we show how to remove this runtime check, by
lifting programs to a richer type system with pointed types,
expressing the proofs of safety in that setting, and then, from the proofs,
extracting programs with guaranteed safe de-allocations and no runtime
checks.
The last section concludes with a summary of our results.






\section{Ancilla Bits: Review and a Type-Based Approach}
\label{sec:examples}

Restricting a reversible circuit to use no ancilla bits is like restricting
a Turing machine to use no memory other than the $n$ bits used to represent
the input~\cite{aaronson_et_al:LIPIcs:2017:8173}. As such a restriction
disallows countless computations for trivial reasons, reversible
models of computation have,
since their inception, included management for scratch storage in the form of ancilla
bits~\cite{Toffoli:1980} with the fundamental restriction that such bits must be returned
to their initial states before being safely reused or de-allocated.

\subsection{Review}

In programming languages, the common way to handle ancilla bits is to use a stack model in which the
lifetime of the ancilla bits coincides with their \emph{static scope}
and augment this discipline with a runtime check that ensures that the
ancilla bit has the correct value. For example:
\begin{itemize}
\item Quipper~\cite{Green:2013:QSQ:2491956.2462177} uses a scoped way to manage ancilla bits via:
\begin{verbatim}
    with_ancilla :: (Qubit -> Circ a) -> Circ a
\end{verbatim}
  The operator takes a block of gates parameterized by an ancilla value,
  allocates a new ancilla value of type \verb|Qubit| initialized to
  $\ket{0}$, and runs the given block of gates. At the end of its
  execution, the block is expected to return the ancilla value to the state
  $\ket{0}$ at which point it is de-allocated. The expectation that
  the ancilla value is in the state $\ket{0}$ is enforced via a runtime check.
\item Ricercar~\cite{10.1007/978-3-319-20860-2_13} also uses a scoped
  way to manage ancillae. The expression $\alpha x . A$ allocates an ancilla wire
  $x$ for the gate $A$ requiring that $x$ is set to $0$ after the
  evaluation of $A$,
  as the following rule of the operational semantics shows:
\[
\Rule{}
{\sigma \vdash x \to b \qquad
  \sigma[x \mapsto 0] \vdash A \to \sigma' \qquad
  \sigma' \vdash x \to 0}
{\sigma \vdash \alpha x.A \to \sigma'[x \mapsto b]}
{}
\]
where $\sigma$ is the global memory mapping each variable to its
value and $\rightarrow$ represents evaluation.
\item Janus~\cite{yokoyama:2007:rpl:1244381.1244404} is a reversible
  imperative programming language that is not based on the circuit
  model but as Rose~\cite{rose2015arrow} explains, its treatment is
  essentially similar to above:
  \begin{quote}
    All variables in original Janus are global, but in the University
    of Copenhagen interpreter you can allocate local variables with
    the \textsf{local} statement. The inverse of the \textsf{local}
    statement is the \textsf{delocal} statement, which performs
    deallocation. When inverted, the deallocation becomes the
    allocation and vice versa. In order to invert deallocation, the
    value of the variable at deallocation time must be known, so the
    syntax is \verb|delocal <variable> = <value>|. Again the onus is
    on the programmer to ensure that the equality actually holds.
\end{quote}
\end{itemize}

\subsection{A Type-Based Approach}

This scoped model is a pragmatic choice which is however
limited. To understand its limitations more vividly, consider
the following analogy: allocating an ancilla bit by creating a new wire in
the circuit is like borrowing some money from a global external
entity (the memory manager); the computation has access to a new resource
temporarily. De-allocating the ancilla bit is like returning the borrowed
money to the global entity; the computation no longer has access to
that resource. It would however be unreasonably restrictive
to insist that the person (function) borrowing the money must be
the same person (function) returning it.
Indeed, as far as reversible computation is concerned, the
only important invariant is that information is conserved, i.e., that
money is conserved. The identities of bits are not
observable as they are all interchangeable in the same way that
particular bills with different serial numbers are interchangeable in
financial transactions. Thus the only invariant is that the net flow
of money between the computation and the global entity is zero. This
observation allows us to go even further than just switching the
identities of borrowers. It is even possible for one person to borrow
\$10, and have three different persons collectively collaborate to
pay back the debt with one person paying \$5, another \$2, and a third
\$3, nor the opposite situation of gradually borrowing \$10 and returning
it all at once.

Computationally, this extra generality is not a gratuitous concern:
since scope is a \emph{static property} of programs, it does
not allow the flexibility of heap allocation in which the lifetime of
resources is dynamically determined. Furthermore, limiting ancilla
bits to static scope does not help in solving the fundamental problem of
ensuring that their value is properly restored to their initial value
before de-allocation.

We demonstrate that both problems can be solved with a typing discipline.
The main idea is simple: we
introduce a type representing ``processes specialized to
garbage-collect specific values.'' The infrastructure of reversible
computing will ensure that the information inherent in this
process will never be duplicated or erased, enforcing that proper
safe de-allocation must happen in a complete program. Furthermore, since
reversible computation focuses on conservation of \emph{information}
rather than syntactic entities, this approach will permit fascinating
mechanisms in which allocations and de-allocations can be sliced and
diced, decomposed and recomposed, run forwards and backwards, in
arbitrary ways as long as the net balance is 0.


\begin{figure*}[t]
\[\begin{array}{rrcll}
\idc :& \tau & \iso & \tau &: \idc \\
\\
\identlp :&  0 + \tau & \iso & \tau &: \identrp \\
\swapp :&  \tau_1 + \tau_2 & \iso & \tau_2 + \tau_1 &: \swapp \\
\assoclp :&  \tau_1 + (\tau_2 + \tau_3) & \iso & (\tau_1 + \tau_2) + \tau_3 &: \assocrp \\ [1.5ex]
\identlt :&  1 \times \tau & \iso & \tau &: \identrt \\
\swapt :&  \tau_1 \times \tau_2 & \iso & \tau_2 \times \tau_1 &: \swapt \\
\assoclt :&  \tau_1 \times (\tau_2 \times \tau_3) & \iso & (\tau_1 \times\tau_2) \times \tau_3 &: \assocrt \\ [1.5ex]
\distz :&~ 0 \times \tau & \iso & 0 ~ &: \factorzl \\
\dist :&~ (\tau_1 + \tau_2) \times \tau_3 & \iso & (\tau_1 \times \tau_3) + (\tau_2 \times \tau_3)~ &: \factor
\end{array}\]
{\scalebox{0.85}{%
\Rule{}
{\jdg{}{}{c_1 : \tau_1 \iso \tau_2} \quad \vdash c_2 : \tau_2 \iso \tau_3}
{\jdg{}{}{c_1 \fatsemi c_2 : \tau_1 \iso \tau_3}}
{}
\Rule{}
{\jdg{}{}{c_1 : \tau_1 \iso \tau_2} \quad \vdash c_2 : \tau_3 \iso \tau_4}
{\jdg{}{}{c_1 \oplus c_2 : \tau_1 + \tau_3 \iso \tau_2 + \tau_4}}
{}
\Rule{}
{\jdg{}{}{c_1 : \tau_1 \iso \tau_2} \quad \vdash c_2 : \tau_3 \iso \tau_4}
{\jdg{}{}{c_1 \otimes c_2 : \tau_1 \times \tau_3 \iso \tau_2 \times \tau_4}}
{}
}}




\caption{$\Pi$-terms and combinators.}
\label{pi-terms}
\end{figure*}

\section{Preliminaries: $\Pi$}
\label{sec:pi}
\label{sub:core}

The syntax of the language $\Pi$ consists of several sorts:
\[\begin{array}{lrcl}
\textit{Value types} & \tau &::=& 0 \alt 1 \alt \tau+\tau \alt \tau\times\tau \\
\textit{Values}      & v &::=& \textsf{tt} \alt \inlv{v} \alt \inrv{v} \alt (v,v) \\
\textit{Program types} &&& \tau \leftrightarrow \tau \\
\textit{Programs} & c &::=& (\textrm{See Fig.~\ref{pi-terms}})
\end{array}\]

Focusing on finite types, the building blocks of the type theory are:
the empty type ($0$), the unit type ($1$), the sum type ($+$), and the
product ($\times$) type. One may view each type $\tau$ as a collection
of physical wires that can transmit $|\tau|$ distinct values where
$|\tau|$ is a natural number that indicates the size of a type, computed
as: $| 0 | = 0$; $| 1 | = 1$;
$| \tau_1 + \tau_2 | = | \tau_1 | + | \tau_2 |$; and
$| \tau_1 \times \tau_2 | = | \tau_1 | * | \tau_2 |$.
Thus the type $\mathbb{B} = 1 + 1$ corresponds to a wire that can
transmit one of two values, i.e., bits, with the convention that
$\inlv{\textsf{tt}}$ represents \AgdaFunction{𝔽} 
and $\inrv{\textsf{tt}}$ represents
\AgdaFunction{𝕋}. 
The type $\mathbb{B} \times \mathbb{B} \times \mathbb{B}$ corresponds to a
collection of wires that can transmit three bits. From that
perspective, a type isomorphism between types $\tau_1$ and $\tau_2$ (such that
$|\tau_1|=|\tau_2|=n$) models a \emph{reversible} combinational circuit that
\emph{permutes} the $n$ different values. These type isomorphisms are
collected in Fig.~\ref{pi-terms}. It is known that these type
isomorphisms are sound and complete for all permutations on finite
types~\cite{Fiore:2004,fiore-remarks} and hence that they are
\emph{complete} for expressing combinational
circuits~\cite{fredkin1982conservative,James:2012:IE:2103656.2103667,Toffoli:1980}. Algebraically,
these types and combinators form a \emph{commutative semiring} (up to
type isomorphism). Logically they form a superstructural logic
capturing space-time tradeoffs~\cite{superstructural}. Categorically,
they form a \emph{distributive bimonoidal category}~\cite{laplaza72}.

\label{sub:toff}

Below, we show code, in our Agda formalization, that defines types corresponding to bits (booleans),
two-bits, and three-bits. We then define an operator that
builds a controlled version of a given combinator $c$. This controlled
version takes an additional ``control'' bit and only applies $c$ if
the control bit is true. The code then iterates the control operation
several times starting from boolean negation.

\medskip
\Bexamples{}



Although austere, this combinator-based language
has the advantage of being more amenable to formal analysis for at
least two reasons: (i) it is conceptually simple and small, and (ii)
it has direct and evident connections to type theory and category
theory. Indeed our solution for managing ancillae is inspired by the
construction of \emph{compact closed
  categories}~\cite{kelly1972many}. These categories extend the
monoidal
categories~\cite{benabou1963algebre,benabou1964algebre,maclane1963natural}
which are used to model many resource-aware (e.g., based on linear
types) programming
languages~\cite{benton1994mixed,krishnaswami2015integrating}
(including $\Pi$) with a new type constructor that creates duals or
inverses to existing types. This dual will be our fractional type.

\section{First-Class Garbage Collectors}
\label{sec:dyn}

The main idea is to extend the $\Pi$ terms with two combinators $\eta$
and $\epsilon$ witnessing the isomorphism $A * \oneover{A} = 1$. The
names and types of these operations are inspired by compact closed
categories which are extensions of the monoidal categories that model
$\Pi$. Intuitively, $\eta$~allows one, from ``no information,'' to
create a pair of a value of type $A$ and a value of type
$\oneover{A}$. We interpret the latter value as a GC process
specialized to collect the created value. Dually, $\epsilon$ applies
the GC process to the appropriate value annihilating
both.\footnote{Another interesting interpretation is that these
  operations correspond to creation and annihilation of entangled
  particle/antiparticle pairs in quantum
  physics~\cite{Panangaden2011}.}

To make this idea work, several technical issues need to be
dealt with. Most
notably, we must exclude the empty type from this creation and
annihilation process. Otherwise, we would be able to prove that:
\[\begin{array}{rcl@{\qquad}l}
1 &=& 0 \times 1/0 & \textrm{by~} \eta \\
&=& 0 & \textrm{by~} \absorbr
\end{array}\]
The second important issues is to ensure that the GC process is
specialized to collect a particular value. We therefore exploit ideas
from dependent type theory to treat individual values as singleton
types. More precisely, we extend
the syntax of core $\Pi$ in Sec.~\ref{sub:core} as follows:

\[\begin{array}{lrcl}
\textit{Value types} & \tau &::=& \cdots \alt \oneover{v} \\
\textit{Values}      & v &::=& \cdots \alt \gcv \\
\textit{Program types} &&& \tau \leftrightarrow \tau \\
\textit{Programs} & c &::=& \cdots \alt
   \eta_{v:t} : 1 \leftrightarrow (\tau \times \oneover{v})
   ~~\alt \epsilon_{v:t} : (\tau \times \oneover{v}) \leftrightarrow 1
\end{array}\]

\noindent
For now, the core $\Pi$ language is simply extended with a new type
$\oneover{v}$ which represents a GC process specialized to collect the
value $v$. Since all relevant information is present in the type, at
runtime, this GC process is represented using a trivial value denoted
by $\gcv$. The combinators $\eta$ and $\epsilon$ are parameterized by
the value $v$ (and its type $t$) which serves two purposes. First it
guarantees that the combinators operate on non-empty types, and second
it fixes the type of the GC process. At this point, however, although
the language guarantees that the GC process can only collect a
particular value, the type system does not track the value created by
$\eta$, nor does it predict the value that reaches
$\epsilon$. In other words, it is possible to write programs in which
$\epsilon$ expects one value but is instead applied to another value. In
this section, we will deal with such situations by including a runtime
check in the formal semantics, and show how to remove it,
via a safety proof, in the next
section.

Our Agda formalization clarifies our semantics, with the new type as:
\medskip

\PIFDUdef{}

\noindent The new combinators are defined as follows:
\medskip

\PIFDCombdef{}

\noindent  The most relevant excerpt of the formal semantics is given below:

\medskip
\PIFDinterp{}

The interpreter either returns a proper value $(\Acon{just}~\ldots)$
or throws an exception \Acon{nothing}. The
semantics of the core $\Pi$ combinators performs the appropriate
isomorphism and returns a proper value. At $\eta$, the $v$ that
parameterizes the combinator is used to create a new value $v$ and a
GC process specialized to collect it. By the time evaluation
reaches $\epsilon$, the value created by $\eta$ may have undergone
arbitrary transformations and is not guaranteed to be the value
expected by the GC process. A runtime check is performed: if the value
is the expected one, it is annihilated together with the GC process;
otherwise an exception is thrown which is demonstrated in the following
example which returns normally if given \AgdaFunction{𝔽} and otherwise
throws an exception:
\medskip
\PIFDExample{}

For future reference, we will call this language $\Pi/D$ for the
fractional extension of $\Pi$ with a dynamic check. We illustrate the
expressiveness of the language with two small examples. The Agda code
for the examples is written in a style that reveals the intermediate
steps for expository purposes.

The first circuit has one input and one output. Immediately after
receiving the input, the circuit generates an ancilla wire and its
corresponding GC process (first two steps in the Agda definition). The
original input and the ancilla wire interact using two
\AgdaFunction{CNOT} gates, after which the ancilla wire is redirected
to the output (next three steps in the Agda code). Finally the
original input is GC'ed (last two steps in the Agda code). The entire
circuit is extensionally equivalent to the identity function but it
does highlight an important functionality beyond scoped ancilla
management: the allocated ancilla bit is redirected to the output and
a completely different bit (with the proper default value) is
collected instead.

\begin{minipage}[t]{.4\columnwidth}
\vspace{45pt}%
\PIFDEtaEpsilonExampleone{}
\end{minipage}
\begin{minipage}[t]{.6\columnwidth}
  \vspace{0pt}%
  \begin{tikzpicture}[yscale=0.3,xscale=0.7,every node/.style={scale=0.6}]
	\begin{pgfonlayer}{nodelayer}
		\node [style=none,label=$\mathbb{B}$] (0) at (-7, 25) {};
		\node [style=none] (1) at (-3, 25) {};
		\node [style=none] (2) at (-1, 25) {};
		\node [style=none] (3) at (1, 25) {};
		\node [style=none,label=left:$\mathbb{B}~~$] (4) at (-5, 23) {};
		\node [style=none] (5) at (-4, 23) {};
		\node [style=none] (6) at (-5, 25) {};
		\node [style=none] (7) at (-4, 25) {};
		\node [style=none] (8) at (-3, 23) {};
		\node [style=none] (9) at (-1, 23) {};
		\node [style=none] (10) at (-6, 22) {};
		\node [style=none,label=below:$\oneover{\mathbb{B}}$] (11) at (-5, 21) {};
		\node [style=none] (13) at (0, 22) {};
		\node [style=none] (14) at (-1, 21) {};
		\node [style=control,scale=0.5] (15) at (-5, 25) {};
		\node [style=control,scale=0.5] (16) at (-4, 23) {};
		\node [style=none] (18) at (-5, 23) {$\oplus$};
		\node [style=none] (19) at (-4, 25) {$\oplus$};
	\end{pgfonlayer}
	\begin{pgfonlayer}{edgelayer}
		\draw (0.center) to (1.center);
		\draw (1.center) to (9.center);
		\draw (4.center) to (8.center);
		\draw (8.center) to (2.center);
		\draw (2.center) to (3.center);
		\draw (11.center) to (14.center);
		\draw (6.center) to (4.center);
		\draw (7.center) to (5.center);
		\draw [bend right=45, looseness=2.00] (4.center) to (11.center);
		\draw [bend left=45, looseness=1.75] (9.center) to (14.center);
	\end{pgfonlayer}
\end{tikzpicture}
\end{minipage}

The second example illustrates the manipulation of GC processes. A
process for collecting a pair of values can be decomposed into two
processes each collecting one of the values (and vice-versa):

\medskip
\PIFDrevprod{}

\section{Dependently-Typed Garbage Collectors}
\label{sec:dep}

By lifting the scoping restriction, the development in the previous
sections is already more general than the state of the art in ancilla
management.  It still shares the same limitation of needing a
runtime check to ensure ancillae values are properly restored to their
allocation
value~\cite{10.1007/978-3-319-20860-2_13,Green:2013:QSQ:2491956.2462177}.
We now address this limitation using a combination of
pointed types, singleton types, monads, and comonads.

\subsection{Lifting Evaluation to the Type System}

Before giving all the (rather involved) technical details, we
highlight the main idea using the toy language below:

\medskip
\Texample{}

The toy language has two types (natural numbers and booleans) and two
functions \AgdaInductiveConstructor{square} and
\AgdaInductiveConstructor{isZero} and their compositions. Say we
wanted to prove that \AgdaInductiveConstructor{compose isZero square}
always returns \AgdaInductiveConstructor{false} when applied to a
non-zero natural number. We can certainly do this proof in Agda (i.e.,
in the meta-language of our formalization) but we would like to do the
proof within the toy language itself. The most important reason is
that it can then be used within the language to optimize programs (or,
for the case of $\Pi/D$, to remove a runtime check).

The strategy we adopt is to create a lifted version of the toy
language with \emph{pointed types}~\cite{hottbook}, i.e., types paired with
a value of the type. In the lifted language, the evaluation
function has an interesting type: it keeps track of the result of
evaluation within the type:

\medskip
\Texamplecont{}

This allows various properties of
\AgdaInductiveConstructor{compose isZero square} to be derived within
the extended type system. For example:

\medskip
\Texampletest{}

The first two tests show that the type system can track exact concrete
values. More interestingly, \AgdaFunction{test3} shows a
property that holds for all natural numbers~$n$; its proof
uses ``symbolic'' evaluation within the type system. In
more detail, from the definition of \Afun{eval}, we see that
\Afun{eval}~\Acon{square}~(\Acon{suc}~\Avar{n}) produces
(\Acon{suc}~\Avar{n}) * (\Acon{suc}~\Avar{n}); by definition of
multiplication, this is an expression with a leading \Acon{suc} constructor
which is enough to determine that evaluating \Acon{isZero} on it
yields \Acon{false}. This form of partial evaluation is
quite expressive, and sufficient to allow to keep track of ancilla values
throughout complex programs.

\subsection{Pointed and Singleton Types: $\Pi/\bullet$}

We now use the above idea to create a version of the $\Pi$
language, which we call $\Pi/\bullet$,
in which all types are pointed, i.e.,
for each type $t$ some value $v$ of type $t$ is ``in focus''
$t$\Acon{\#}$v$. As the goal of the language is to keep track of
fractional types, it is sufficient to inherit the multiplicative
structure of $\Pi$. We also need a special kind of pointed type that
includes just one value, a singleton type. The singleton types will
allow the type system to track the flow of one particular value (the
ancilla value), which is exactly what is needed to prove the safety of
deallocation. We present the relevant definitions from our
formalization and explain each:

\medskip
\PIPFUdef{}

Given a set $A$ with an element $v$, the singleton set containing $v$ is
the subset of $A$ whose elements are equal to $v$. In Agda's type
theory, this is encoded using the \AgdaFunction{Singleton} type. For a
given type $A$, and a value $v$ of type $A$, the type
\AgdaFunction{Singleton}~$A$~$v$ is inhabited by a choice of point
$\bullet$ in $A$, along with a proof that $v$ is equal to $\bullet$. In
other words, it is possible to refer to a singleton value $v$ using
several distinct syntactic expressions that all evaluate to $v$. Put
differently, any claim that a value belongs to the singleton type must
come with a proof that this value is equal to $v$. The reciprocal type
\AgdaFunction{Recip}~$A$~$v$ consumes exactly this singleton value. The
universe of pointed types \AgdaDatatype{∙𝕌} contains plain $\Pi$ types
together with a selection of a value in focus; products of pointed
types; singleton types; and reciprocal types. Note that the actual value
in focus for reciprocals, i.e., the runtime value of a GC process, is a
function that disregards its argument returning the constant value of
the unit type. As we show, this is safe, as the type system prevents the
GC process being applied to anything but the particular singleton value
in question.

The combinators in the lifted language $\Pi/\bullet$ consist of all
the combinators in the core $\Pi$ language together with their
multiplicative structure. The types for \Acon{$\eta$} and
\Acon{$\epsilon$} are now specialized to guarantee safety of
de-allocation as follows. When applying \Acon{$\eta$} at a pointed
type, the current witness value is put in focus in a singleton type
and a GC process for that particular singleton type is created. To
apply this process using \Acon{$\epsilon$} the very same singleton
value must be the current one.

\medskip
\PIPFCombDef{}

The mediation between general pointed types and singleton types is
done via \Acon{return} and \Acon{extract}, which form a dual
monad/comonad pair, from which many structural properties can be
derived: specifically a pair of singleton types is a singleton of the
pair of underlying types, and a singleton of a singleton is the same
singleton.

\begin{proposition}
  $\llparenthesis \cdot \rrparenthesis$ is both an idempotent strong
  monad and an idempotent costrong comonad over pointed types.
\end{proposition}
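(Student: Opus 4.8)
The plan is to package the claim as four pieces of data and then verify the corresponding laws, exploiting the single structural fact that drives everything: in $\Pi/\bullet$ the unit \textsf{return} and the counit \textsf{extract} are mutually inverse isomorphisms. Concretely, the monad is $(\llparenthesis\cdot\rrparenthesis,\ \eta,\ \mu)$ with functorial action given by $\textsf{∙Sing}_u$, unit $\eta=\textsf{return}$, and multiplication $\mu=\textsf{join}=\textsf{extract}$; dually the comonad is $(\llparenthesis\cdot\rrparenthesis,\ \epsilon,\ \delta)$ with counit $\epsilon=\textsf{extract}$ and comultiplication $\delta=\textsf{duplicate}=\textsf{return}$. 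Since all combinators in $\Pi/\bullet$ are reversible, I would work in the category whose morphisms are combinators identified up to the equivalence induced by $\textsf{∙eval}$ (the underlying function together with its action on the point). The reversibility fact, the $\bullet$-analogue of \AgdaFunction{ΠisRev}, gives $\textsf{return} \mathbin{\AgdaInductiveConstructor{∙⊚}} \textsf{extract} = \AgdaInductiveConstructor{∙id⟷}$ and $\textsf{extract} \mathbin{\AgdaInductiveConstructor{∙⊚}} \textsf{return} = \AgdaInductiveConstructor{∙id⟷}$ in this quotient, because $\textsf{!∙}\,\textsf{return}=\textsf{extract}$ and vice versa. This ``zigzag'' is what makes $\llparenthesis\cdot\rrparenthesis$ essentially a focusing operation that is invertible on points.

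First I would check functoriality: unfolding $\textsf{∙Sing}_u\, f = \textsf{extract} \mathbin{\AgdaInductiveConstructor{∙⊚}} f \mathbin{\AgdaInductiveConstructor{∙⊚}} \textsf{return}$, preservation of identity and of composition both collapse to the zigzag identity. Naturality of \textsf{return} and \textsf{extract} (the squares $\textsf{return}\mathbin{\AgdaInductiveConstructor{∙⊚}}\textsf{∙Sing}_u f = f\mathbin{\AgdaInductiveConstructor{∙⊚}}\textsf{return}$ and $\textsf{∙Sing}_u f\mathbin{\AgdaInductiveConstructor{∙⊚}}\textsf{extract}=\textsf{extract}\mathbin{\AgdaInductiveConstructor{∙⊚}}f$) follow by the same substitution. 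The monad unit and associativity laws, and dually the comonad counit and coassociativity laws, then reduce entirely to the zigzag and functoriality, since here $\mu=\epsilon=\textsf{extract}$ and $\eta=\delta=\textsf{return}$: every required triangle and pentagon has, after cancellation of an inverse pair, the form $\AgdaInductiveConstructor{∙id⟷}=\AgdaInductiveConstructor{∙id⟷}$. \emph{Idempotency} is immediate and in fact shared by both halves: $\mu=\textsf{extract}$ is an isomorphism with inverse $\delta=\textsf{return}$, so the monad is idempotent and the comonad is idempotent, and the two derived facts quoted after the proposition (a product of singletons is a singleton of a product, via \textsf{tensor}/\textsf{cotensor}, and a singleton of a singleton is the same singleton, via $\eta=\delta$) are exactly the idempotency witnesses.

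For strength, I would take the right strength to be \textsf{tensorr} and the left strength \textsf{tensorl}, and dually \textsf{cotensorr}/\textsf{cotensorl} for the costructure, each again definable from \textsf{return}, \textsf{extract}, $\textsf{∙Sing}_u$ and the inherited multiplicative monoidal structure of $\Pi$ ($\AgdaInductiveConstructor{∙assocl⋆}$, $\AgdaInductiveConstructor{∙assocr⋆}$, $\AgdaInductiveConstructor{∙unite⋆l}$, $\AgdaInductiveConstructor{∙swap⋆}$). The remaining obligations are the four strength coherence diagrams — compatibility with the right unitor, with the associator, with the unit $\eta$, and with the multiplication $\mu$ — and their four duals for the costrength. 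The main obstacle is precisely these coherences: unlike the monad laws they genuinely mix the $\llparenthesis\cdot\rrparenthesis$ structure with the monoidal structure, so a naive syntactic rewrite is delicate. The clean route I would take is a soundness/coherence argument: $\textsf{∙eval}$ sends each combinator $c:T_1\mathrel{\AgdaDatatype{⧟}}T_2$ to a function on the underlying (essentially contractible, for $\llparenthesis\cdot\rrparenthesis$) sets plus a proof of its value on the point, and all the structure maps involved are isomorphisms whose underlying functions are uniquely determined; hence each coherence law reduces to an equality of underlying functions, which already holds in the distributive monoidal structure of $\Pi$ (ultimately in $\mathsf{Set}$). Idempotency helps here too: because $\llparenthesis\cdot\rrparenthesis$ is, up to the $\eta/\epsilon$ isomorphism, the identity on points, the strength is forced and its compatibility diagrams commute automatically once phrased through $\textsf{∙eval}$.
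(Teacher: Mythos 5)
Your construction is exactly the paper's: the proof given there consists of nothing more than the definitions of $\textsf{∙Sing}_u\,f = \textsf{extract}\fatsemi f\fatsemi\textsf{return}$, $\textsf{join}=\textsf{extract}$, $\textsf{duplicate}=\textsf{return}$, and \textsf{tensor}/\textsf{cotensor}, and your identification of \textsf{tensorl}/\textsf{tensorr} (resp.\ \textsf{cotensorl}/\textsf{cotensorr}) as the actual (co)strengths is, if anything, sharper than the displayed code. Since the paper verifies no laws, the substance of your proposal is the verification sketch, and that is where there is a real error.

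Your central lemma is false: \textsf{return} and \textsf{extract} are \emph{not} mutually inverse, and there is no $\Pi/\bullet$ analogue of \AgdaFunction{ΠisRev} holding pointwise on the whole carrier. By the definition of \AgdaFunction{∙eval}, the underlying function of \textsf{return} at a pointed type $t$ with point $v$ is the \emph{constant} function sending every $x$ to $(v,\textsf{refl})$; hence $\textsf{return}\fatsemi\textsf{extract}$ on $t$ is constantly $v$, not the identity, whenever $\llbracket t\rrbracket$ has more than one element. Indeed no isomorphism is possible: \AgdaFunction{Singleton}~$\llbracket\mathbb{B}\rrbracket$~$\mathbb{F}$ is contractible while $\llbracket\mathbb{B}\rrbracket$ has two elements. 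The syntactic fact $\textsf{!∙}\,\textsf{return}=\textsf{extract}$ only guarantees reversibility \emph{at the point in focus}. Your argument still reaches true conclusions because every equation you need --- functoriality, naturality of the (co)unit, the (co)monad triangles, the strength coherences, and idempotency --- relates two maps whose domain or codomain has the form $\llparenthesis T\rrparenthesis$, and \AgdaFunction{Singleton} types are contractible, so any two point-preserving maps into or out of them agree (and the inverse pair $\mu$, $\delta$ lives entirely between singleton types, where the two-sided zigzag genuinely holds). So you should replace ``cancel the inverse pair'' by ``both sides are maps into (or out of) a contractible type, hence equal''; as written, the step discharging, e.g., naturality of \textsf{return} and the unit law $\mu\circ\eta=\mathrm{id}$ appeals to an identity that fails at general pointed types and merely happens to be invoked only where a correct argument is also available.
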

\begin{proof}
  The main insight needed is to define the functor \Afun{∙Singᵤ}, the
  \Afun{tensor}/\Afun{cotensor}, and the \Afun{join}/\Afun{cojoin}
  (\Afun{duplicate}):
  \medskip
  \PIPFCombderive{}
\end{proof}

Like for the toy language, evaluation is reflected in the type
system, and in this case we have the additional property that
evaluation is reversible:

\medskip
\PIPFeval{}
\PIPFrev{}

The type of evaluation now states that given a combinator mapping
pointed type $T_1$ to pointed type $T_2$ where $T_i$ consists of an
underlying type $t_i$ and value $v_i$, evaluation succeeds if applying
the combinator to $v_1$ produces $v_2$. In other words, the result of
evaluation is completely determined by the type system:
\medskip
\PIPFExample{}

To summarize, if a combinator expects a singleton
type, then it would only typecheck in the lifted language, if
it is given the unique value it expects.
A particularly intriguing instance of that
situation is the following program:

\begin{minipage}[t]{.6\linewidth}
  \PIPFrevrev{}
\end{minipage}%
\begin{minipage}[t]{.4\linewidth}
  \vspace{0pt}%
  \begin{tikzpicture}[scale=0.6,every node/.style={scale=0.6}]
	\begin{pgfonlayer}{nodelayer}
		\node [style=none, label=1/(1/A)] (0) at (-5, 2) {};
		\node [style=none] (2) at (-4, 3.5) {};
		\node [style=none, label=$\llparenthesis A\rrparenthesis$] (3) at (-3, 4) {};
		\node [style=none, label=below:1/A] (4) at (-3, 3) {};
		\node [style=none] (5) at (-1, 4) {};
		\node [style=none] (7) at (-1, 2) {};
		\node [style=none, label=$\llparenthesis 1/A \rrparenthesis$] (8) at (-1, 3) {};
		\node [style=none] (9) at (0, 2.5) {};
		\node [style=none] (10) at (1, 4) {};
	\end{pgfonlayer}
	\begin{pgfonlayer}{edgelayer}
		\draw (0.center) to (7.center);
		\draw [bend right=75, looseness=3.00] (3.center) to (4.center);
		\draw [bend left=75, looseness=3.50] (8.center) to (7.center);
		\draw (3.center) to (10.center);
		\draw (4.center) to (8.center);
	\end{pgfonlayer}
\end{tikzpicture}
\end{minipage}%

The program takes a value of type
\Acon{∙𝟙/}~(\Acon{∙𝟙/}~\Avar{A}). This would be a GC process
specialized to collect another GC process! By collecting this process,
the corresponding singleton value is ``rematerialized.'' At runtime,
there would be no information other than the functions that ignore
their argument but the type system provides enough guarantees to
ensure that this process is well-defined and safe.

\subsection{Extraction of Safe Programs}
\label{sec:ext}
By lifting programs and their evaluation to the type level, we can
naturally leverage the typechecking process to verify properties of
interest, including the safe de-allocation of ancillae. One ``could''
just forget about $\Pi/D$ and instead use $\Pi/\bullet$ as \emph{the}
programming language for ancilla management. Indeed the dual nature of
proofs and programs is more and more exploited in languages like the
one used to formalize this paper (Agda).

However, it is also often the case than constructive proofs are
further processed to extract native efficient programs that eschew
the overhead of maintaining information needed just for proof
invariants. In our case, the question is whether we can extract from
a $\Pi/\bullet$ program, a program in $\Pi/D$ that uses a simpler type
system, a simpler runtime representation, and yet is guaranteed to be
safe and hence can run without the runtime checks associated with
de-allocation sites. In this section, we show that this indeed the
case.


We demonstrate this by constructing an extraction map from the syntax of
$\Pi/\bullet$ to $\Pi/D$. This is fully implemented in the underlying
Agda formalization, but we present the most significant
highlights. There are three important functions whose signatures are
given below:

\medskip
\EXTsig{}

\noindent The function \Afun{Ext𝕌} maps a $\Pi/\bullet$ type to a
$\Pi/D$ type and a value in the type. The function \Afun{Ext⧟} maps a
$\Pi/\bullet$ combinator to a $\Pi/D$ combinator, whose types are fixed
by \Afun{Ext𝕌}. And finally, the function \Afun{Ext≡} asserts that the
extracted code cannot throw an exception (it must return a
\Acon{just}~value).

Each of these functions has one or two enlightening cases which we
explain below. In $\Pi/D$ the fractional type expresses that it expects
a particular value but lacks any mechanisms to enforce this
requirement. Thus we have no choice when mapping a fractional type from
$\Pi/\bullet$ to $\Pi/D$ but to use the \Acon{𝟙/}~\Avar{v}~type with the
trivial value:

\medskip
\EXTu{}


When mapping $\Pi/\bullet$ combinators to $\Pi/D$ combinators, the main
interesting cases are for \Acon{η} and \Acon{ε}. In each of those, we
use the values from the pointed type as choices for the ancilla value,
and the expectation for the GC process respectively:

\medskip
\EXTcomb{}



Finally we can prove the correctness of extraction. The punchline is in
the following case:

\medskip
\EXTeq{}

\noindent Here, the singleton type in $\Pi/\bullet$ guarantees that the
runtime check cannot fail!

\subsection{Example}

This new language not only allows us to \emph{verify} circuits but
also allows us to merge verification with programming. To clarify this
idea, we show how to implement a 4-bit Toffoli gate using proper
ancilla management while at the same time proving its correctness.

We start with verification of the Toffoli gate implementation we have in Sec.~\ref{sec:pi}
in $\Pi/\bullet$ using pattern matching:
\medskip
\PToffoli{}

\noindent
Since we use the same implementation in all the cases so it does not matter which value we use to
instantiate extraction:

\medskip
\PToffoliExtEq{}

\noindent Using this as building block we can use Toffoli's construction~\cite{Toffoli:1980} to construct 4-bit
Toffoli gate using an additional ancilla bit:
\begin{center}
\begin{tikzpicture}[scale=0.6, font={\small}]
	\begin{pgfonlayer}{nodelayer}
		\node [style=none] (0) at (-28.5, 25.5) {};
		\node [style=none] (1) at (-25, 25.5) {};
		\node [style=none] (2) at (-25, 26) {};
		\node [style=none] (3) at (-25, 24) {};
		\node [style=none] (4) at (-23, 24) {};
		\node [style=none] (5) at (-23, 26) {};
		\node [style=none] (6) at (-23, 25.5) {};
		\node [style=none] (7) at (-16.5, 25.5) {};
		\node [style=none] (8) at (-16.5, 26) {};
		\node [style=none] (9) at (-14.5, 26) {};
		\node [style=none] (10) at (-14.5, 25.5) {};
		\node [style=none] (11) at (-16.5, 24) {};
		\node [style=none] (12) at (-14.5, 24) {};
		\node [style=none] (18) at (-29, 25.5) {$a$};
		\node [style=none] (19) at (-28.5, 25) {};
		\node [style=none] (20) at (-28.5, 24.5) {};
		\node [style=none] (21) at (-25, 24.5) {};
		\node [style=none] (22) at (-25, 25) {};
		\node [style=none] (23) at (-27, 24.5) {};
		\node [style=none] (24) at (-29, 25) {$b$};
		\node [style=none] (26) at (-29, 24.5) {$c$};
		\node [style=none] (27) at (-27.75, 23.5) {};
		\node [style=none] (28) at (-27, 23.5) {};
		\node [style=none] (29) at (-27.75, 23.75) {};
		\node [style=none] (30) at (-27.75, 23.25) {};
		\node [style=none] (31) at (-28.25, 23.5) {$\mathbb{F}$};
		\node [style=none] (32) at (-26, 23.5) {};
		\node [style=none] (33) at (-26, 24.5) {};
		\node [style=none] (34) at (-24, 25) {Toffoli};
		\node [style=none] (36) at (-29, 23) {$d$};
		\node [style=none] (37) at (-28.5, 23) {};
		\node [style=none] (38) at (-25, 23) {};
		\node [style=none] (39) at (-23, 25) {};
		\node [style=none] (40) at (-23, 24.5) {};
		\node [style=none] (41) at (-16.5, 25) {};
		\node [style=none] (45) at (-22.25, 24.5) {};
		\node [style=none] (46) at (-22.25, 23.5) {};
		\node [style=none] (47) at (-21.25, 23.5) {};
		\node [style=none] (48) at (-21.25, 24.5) {};
		\node [style=none] (49) at (-21, 24.75) {};
		\node [style=none] (50) at (-21, 22.75) {};
		\node [style=none] (51) at (-19, 22.75) {};
		\node [style=none] (52) at (-19, 24.5) {};
		\node [style=none] (53) at (-19, 23.5) {};
		\node [style=none] (54) at (-19, 24.75) {};
		\node [style=none] (55) at (-21, 23) {};
		\node [style=none] (56) at (-18, 23.5) {};
		\node [style=none] (57) at (-18, 22.75) {};
		\node [style=none] (58) at (-16.5, 24.5) {};
		\node [style=none] (59) at (-18, 24.5) {};
		\node [style=none] (60) at (-19, 23) {};
		\node [style=none] (61) at (-12, 23) {};
		\node [style=none] (62) at (-17.25, 24.5) {};
		\node [style=none] (63) at (-17.25, 23.5) {};
		\node [style=none] (64) at (-14, 23.5) {};
		\node [style=none] (65) at (-14, 23.75) {};
		\node [style=none] (66) at (-14, 23.25) {};
		\node [style=none] (67) at (-13.75, 23.5) {};
		\node [style=none] (68) at (-13.5, 23.5) {$\mathbb{F}$};
		\node [style=none] (69) at (-12, 25.5) {};
		\node [style=none] (70) at (-12, 25) {};
		\node [style=none] (71) at (-14.5, 25) {};
		\node [style=none] (73) at (-21, 24.5) {};
		\node [style=none] (74) at (-21, 23.5) {};
		\node [style=none] (128) at (-14.5, 24.5) {};
		\node [style=none] (129) at (-12, 24.5) {};
		\node [style=none] (130) at (-15.5, 25) {Toffoli};
		\node [style=none] (131) at (-20, 23.75) {Toffoli};
	\end{pgfonlayer}
	\begin{pgfonlayer}{edgelayer}
		\draw (0.center) to (1.center);
		\draw (2.center) to (1.center);
		\draw (1.center) to (3.center);
		\draw (3.center) to (4.center);
		\draw (4.center) to (5.center);
		\draw (5.center) to (2.center);
		\draw (6.center) to (7.center);
		\draw (8.center) to (11.center);
		\draw (8.center) to (9.center);
		\draw (9.center) to (12.center);
		\draw (11.center) to (12.center);
		\draw (19.center) to (22.center);
		\draw (20.center) to (23.center);
		\draw (29.center) to (30.center);
		\draw (27.center) to (28.center);
		\draw (23.center) to (32.center);
		\draw (28.center) to (33.center);
		\draw (33.center) to (21.center);
		\draw (37.center) to (38.center);
		\draw (39.center) to (41.center);
		\draw (40.center) to (45.center);
		\draw (32.center) to (46.center);
		\draw (45.center) to (47.center);
		\draw (46.center) to (48.center);
		\draw (49.center) to (50.center);
		\draw (38.center) to (55.center);
		\draw (50.center) to (51.center);
		\draw (49.center) to (54.center);
		\draw (54.center) to (51.center);
		\draw (52.center) to (59.center);
		\draw (53.center) to (56.center);
		\draw (60.center) to (61.center);
		\draw (59.center) to (63.center);
		\draw (56.center) to (62.center);
		\draw (62.center) to (58.center);
		\draw (65.center) to (66.center);
		\draw (63.center) to (64.center);
		\draw (10.center) to (69.center);
		\draw (71.center) to (70.center);
		\draw (48.center) to (73.center);
		\draw (47.center) to (74.center);
		\draw (128.center) to (129.center);
	\end{pgfonlayer}
\end{tikzpicture}
\end{center}

The code is written in a conventional $\Pi/D$ style except for the
pervasive lifting to pointed types:

\medskip
\PToffolifour{}

\noindent
With this construction however, we can verify that the circuit
satisfies the specification of 4-bit Toffoli gate and the ancilla bit
is correctly garbage collected without pattern matching.
And using the extraction mechanism, we obtain a
fully verified 4-bit Toffoli gate in $\Pi/D$:

\medskip
\PToffoliExt{}

\noindent Note that, as the type has shown our implementation is independent of
any input so it does not matter which value we use to instantiate the
extraction:

\medskip
\PToffoliExtTest{}

\section{Conclusion}

We have introduced, in the context of reversible languages,
the concept of fractional types as descriptions of specialized GC
processes. Although the basic idea is rather simple and intuitive, the
technical details needed to reason about individual values are
somewhat intricate. The use of fractional types, however, enables a
complete elegant type-based solution to the management of ancilla values in
reversible programming languages.

\bibliographystyle{splncs04}
\bibliography{cites}

\end{document}